\newcommand{\imply}{\Longrightarrow}
\newcommand{\rnk}{\mathrm{rank}}
\DeclareMathOperator*{\argmin}{arg\,min}
\DeclareMathOperator*{\mini}{minimize}
\newcommand{\SI}{\textsc{Soft-Impute}}
\newcommand{\NSI}{\textsc{NC-Impute}}
\newcommand{\diag}{\mathrm{diag}}
\newcommand{\B}{\boldsymbol}
\newcommand{\M}{\mathbf}
\newcommand{\vv}{\text{v}}
\newcommand{\var}{\mbox{var}}
\newtheorem{proposition}{Proposition}
\newtheorem{remark}{Remark}
\newtheorem{mydef}{Definition}
\newtheorem{lemma}{Lemma}
\begin{document}

\title{Matrix Completion with Nonconvex Regularization: Spectral Operators and Scalable Algorithms}
%\thanks{Haolei Weng is Ph.D. Student, Department of Statistics, Columbia University, New York, NY 10027 (Email: hw2375@columbia.edu). Yang %Feng is Associate Professor, Department of Statistics, Columbia University, New York, NY 10027 (Email: yang.feng@columbia.edu).}
%}
\author{Rahul Mazumder, Diego F. Saldana, Haolei Weng}

% \author{\name Rahul Mazumder  \email rahulmaz@mit.edu \\
 % \addr MIT Sloan School of Management and Operations Research Center \\ 
 % Massachusetts Institute of Technology \\
 %       Cambridge, MA  
 %               \AND
%        \name Diego F. Saldana \email diego@stat.columbia.edu \\
%        \addr Department of Statistics \\ 
%        Columbia University \\
%        New York, NY 
%              \AND
%     \name Haolei Weng \email hw2375@stat.columbia.edu\\
%        \addr Department of Statistics \\ 
%      Columbia University \\
%        New York, NY 
%}

%\editor{}

\date{}
\maketitle

\begin{abstract}
In this paper, we study the popularly dubbed matrix completion problem, where the task is to ``fill in'' the unobserved entries of a matrix from a small subset of observed entries, under the assumption that the underlying matrix is of low-rank. Our contributions herein, enhance our prior work on nuclear norm regularized problems for matrix completion~\citep{MazumderEtal2010} by incorporating a continuum of nonconvex penalty functions between the convex nuclear norm and nonconvex rank functions. Inspired by \SI~\citep{MazumderEtal2010,HastieEtal2015}, we propose \NSI~--- an EM-flavored algorithmic framework for computing a family of nonconvex penalized matrix completion problems with warm-starts. We present a systematic study of the associated spectral thresholding operators, which play an important role in the overall algorithm. We study convergence properties of the algorithm. Using structured low-rank SVD computations, we demonstrate the computational scalability of our proposal for problems up to the Netflix size (approximately, a $500,000 \times 20, 000$ matrix with $10^8$ observed entries). We demonstrate that on a wide range of synthetic and real data instances, our proposed nonconvex regularization framework leads to low-rank solutions with better predictive performance when compared to those obtained from nuclear norm problems. Implementations of algorithms proposed herein, written in the {\texttt{R}} language, are made available on {\texttt{github}}.
\end{abstract}

\section{Introduction}\label{sec1}
In several problems of contemporary interest, arising for instance, in recommender system applications, for example, the Netflix Prize competition~\citep{netflix}, observed data is in the form of a large sparse matrix, $Y_{ij}, (i,j) \in \Omega$, where $\Omega \subset \{ 1, \ldots, m\} \times \{ 1, \ldots, n\}$, with $|\Omega| \ll mn$. Popularly dubbed as the matrix completion problem~\citep{CandesRecht2009,MazumderEtal2010}, the task is to predict the unobserved entries, under the assumption that the underlying matrix is of low-rank. This leads to the natural rank regularized optimization problem:
\begin{equation}\label{rank-prob-1}
\mini_{X}  \;\;  \frac12\| \mathcal{P}_{\Omega}( X  - Y ) \|_{F}^2 + \lambda~\rnk(X),
\end{equation}
where, $\mathcal{P}_{\Omega}(X)$ denotes the projection of $X_{m \times n}$ onto the observed indices $\Omega$ and is zero otherwise; and $\| \cdot \|_F$ denotes the usual Frobenius norm of a matrix. 
 Problem~\eqref{rank-prob-1}, however, is computationally difficult due to the presence of the combinatorial rank constraint~\citep{chistov1984complexity}. A natural convexification~\citep{fazel-thes,RechtEtal2010} of  $\rnk(X)$ is $\|X\|_{*}$, the nuclear norm of $X$, which leads to the  following surrogate of Problem~\eqref{rank-prob-1}:
\begin{equation}\label{conv-prob-1}
\mini_{X}  \;\;  \frac12\| \mathcal{P}_{\Omega}( X  - Y ) \|_{F}^2 + \lambda \| X\|_*.
\end{equation}
\cite{CandesRecht2009,CandesPlan2010} show that under some assumptions on the underlying ``population'' matrix, a solution to Problem~\eqref{conv-prob-1} approximates a solution to Problem~\eqref{rank-prob-1} reasonably well.
The estimator obtained from Problem~\eqref{conv-prob-1} works quite well: the nuclear norm shrinks the singular values and simultaneously sets many of the singular values to zero, thereby encouraging low-rank solutions. It is thus not surprising that Problem~\eqref{conv-prob-1} has enjoyed a significant amount of attention in the wider statistical community over the last decade. There have been impressive advances in understanding its statistical
properties~\citep{CandesPlan2010, CandesTao2010, RechtEtal2010, Recht2011, gross2011recovering, RohdeTsybakov2011, koltchinskii2011nuclear, NegahbanWainwright2011, chen2015incoherence, lecue2018regularization, chen2019noisy}. Motivated by the work of~\cite{CandesRecht2009,CaiEtal2010}, the authors in~\cite{MazumderEtal2010} proposed \SI, an EM-flavored~\citep{DempsterEtal1977} algorithm for optimizing Problem~\eqref{conv-prob-1}. For some other computational work in developing scalable algorithms for Problem~\eqref{conv-prob-1},
see the papers~\cite{jaggi2010simple, FGM_2015_FW, HastieEtal2015}, and references therein.
Typical assumptions under which the nuclear norm works as a good proxy for the low-rank problem require the entries of the singular vectors of the ``true'' low-rank matrix to be sufficiently spread, and the missing pattern to be roughly uniform. The proportion of observed entries needs to be sufficiently larger than the number of parameters of the matrix $O \left((m+n)r\right)$, where, $r$ denotes the rank of the true underlying matrix. Some extensions under general sampling distribution has been made in \cite{klopp2014noisy, alquier2015bayesian}. \cite{NegahbanWainwright2012} proposes improvements  with a (convex) weighted nuclear norm penalty in addition to spikiness constraints for the noisy matrix completion problem.

The nuclear norm penalization framework, however, has limitations. If some conditions mentioned above fail, Problem~\eqref{conv-prob-1} may fall short of delivering reliable low-rank estimators with good prediction performance (on the missing entries). Since the nuclear norm shrinks the singular values, in order to obtain an estimator with good explanatory power, it often results in a matrix estimator with high numerical rank --- thereby leading to models that have higher rank than what might be desirable.   The limitations mentioned above, however, should not come as a surprise to an expert --- especially, if one draws a parallel connection to the 
\textsc{Lasso}~\citep{Ti96}, a popular sparsity inducing shrinkage mechanism effectively used in the context of sparse linear modeling and regression.
In the linear regression context, the \textsc{Lasso} often leads to dense models and suffers when the features are highly correlated --- the limitations of the \textsc{Lasso} are quite well known in the statistics literature, and there have been major strides in moving beyond the convex $\ell_{1}$-penalty to more aggressive forms of nonconvex penalties~\citep{FanLi2001,zouli08,MazumderEtal2011,Zhang2010,zhang2012general,loh2015regularized,bertsimas2015best,zheng2017does, feng2017sorted}. The key principle in these methods is the use of nonconvex regularizers that better approximate the $\ell_{0}$-penalty, leading to possibly nonconvex estimation problems. Thusly motivated, we study herein, the following family of nonconvex regularized estimators for the task of (noisy) matrix completion:
 \begin{equation}\label{nonconv-prob-1}
 \begin{aligned}
\mini_{X} \underbrace{ \; \frac12\| \mathcal{P}_{\Omega}( X  - Y ) \|_{F}^2 +  \sum_{i=1}^{\min \{ m, n \}} P(\sigma_{i}(X) ; \lambda,\gamma)}_{:=f(X)},
\end{aligned}
\end{equation}
where, $\sigma_{i}(X), i \geq 1$ are the singular values of $X$ and $ \sigma \mapsto P(\sigma ; \lambda,\gamma)$ is a concave penalty function on $[0, \infty)$ that takes the value 
$\infty$ whenever $\sigma<0$. We will denote an estimator obtained from Problem~\eqref{nonconv-prob-1} by $\hat{X}_{\lambda, \gamma}$.
The family of penalty functions $P(\sigma ; \lambda,\gamma)$ is indexed by the parameters $(\lambda, \gamma)$ --- these parameters together control the amount of nonconvexity and shrinkage --- see for example~\cite{MazumderEtal2011,zhang2012general} and also Section~\ref{sec2}, herein, for examples of such nonconvex families.

A caveat in considering problems of the form~\eqref{nonconv-prob-1} is that they lead to nonconvex optimization problems and thus obtaining a certifiably optimal global minimizer is generally difficult. Fairly recently, ~\cite{bertsimas2015best,mazumder2015discrete} have shown that subset selection problems in sparse linear regression can be computed using advances in mixed integer quadratic optimization. Such global optimization methods, however, do not apply to matrix variate problems involving spectral\footnote{We say that a function is a spectral function of a matrix $X$, if it depends only upon the singular values of $X$. The state of the art algorithmics in mixed integer Semidefinite optimization problems is in its nascent stage; and not even comparable to the technology for mixed integer quadratic optimization.} penalties, as in Problems~\eqref{rank-prob-1} or~\eqref{nonconv-prob-1}. The main focus in our work herein is to develop a computationally scalable algorithmic framework that allows us to obtain high quality stationary points or upper bounds\footnote{Since the problems under consideration are nonconvex, our methods are not guaranteed to reach the global minimum -- we thus refer to the solutions obtained as 
\emph{upper bounds}. In many synthetic examples, however, the solutions are indeed seen to be globally optimal. We do show rigorously, however, that these solutions are first order stationary points for the optimization problems under consideration.}
for Problem~\eqref{nonconv-prob-1} ---  we obtain a path of solutions $\hat{X}_{\lambda, \gamma}$ across a grid of values of $(\lambda, \gamma)$ for Problem~\eqref{nonconv-prob-1} by employing warm-starts, following the path-following scheme proposed in~\cite{MazumderEtal2011}.
Leveraging problem structure, modern advances in computationally scalable low-rank SVDs and appropriately advancing the tricks successfully employed in~\cite{MazumderEtal2010,HastieEtal2015},
we empirically demonstrate the computational scalability of our method for problems of the size of the Netflix dataset, a matrix of size (approx.) $480,000 \times 18,000$ with $\sim 10^8$ observed entries. Perhaps most importantly, we demonstrate empirically that the resultant estimators lead to better statistical properties (i.e., the estimators have lower rank and enjoy better prediction performance) over nuclear norm based estimates, on a variety of 
problem instances.

Some recent works~\citep{jain2010guaranteed, jain2013low, hardt2014understanding, hardt2014fast, chen2015fast, ma2017implicit, chen2019nonconvex} study the scope of alternating minimization or (projected) gradient stylized algorithmic strategies for the rank constrained optimization problem, similar to Problem~\eqref{rank-prob-1} --- see also~\cite{HastieEtal2015}
for related discussions. We should emphasize that our work herein, studies the \emph{entire} family of nonconvex spectral penalized problems of the form of 
Problem~\eqref{nonconv-prob-1}, and is hence more general than the class of estimation problems considered in those works. We establish empirically that this flexible family of nonconvex penalized estimators leads to solutions with better statistical properties than those available from particular instantiations of the penalty function --- nuclear norm regularization~\eqref{conv-prob-1} and rank regularization~\eqref{rank-prob-1}. Along the lines of the aforementioned works, there exists an active stream of research on characterizing the global optimality of local algorithms for various matrix factorization based formulations~\citep{bhojanapalli2016global, ge2016matrix, sun2016guaranteed, zheng2016convergence, ge2017no, shapiro2018matrix}. Our paper focuses on a more general family of nonconvex regularization, with admittedly less strong algorithmic guarantees. Finally, a series of iterative reweighted algorithms have been proposed and discussed \citep{MazumderEtal2010, mohan2010reweighted, fornasier2011low, mohan2012iterative, gu2017weighted}, largely motivated by the reweighting ideas from sparse recovery problems \citep{zou2006adaptive, candes2008enhancing, daubechies2010iteratively}. Different weight formulas have been suggested to improve the statistical and computational efficiency. These are, however, beyond the scope of the current paper.

\subsection{Contributions and Outline}

The main contributions of our paper can be summarized as follows:

\begin{itemize}
\item We propose a computational framework for nonconvex penalized matrix completion problems of the form~\eqref{nonconv-prob-1}.
Our algorithm: \NSI, may be thought of as a novel adaptation (with important enhancements and modifications) of the EM-stylized procedure \SI\\ \citep{MazumderEtal2010} to more general nonconvex penalized thresholding operators.

\item We present an in-depth investigation of  nonconvex spectral thresholding operators, which form the main building block of our algorithm. We also study their effective degrees of freedom (\emph{df}), which provide a simple and intuitive way to calibrate the two-dimensional grid of tuning parameters, extending the scope of the 
method proposed in nonconvex penalized (least squares) regression by~\cite{MazumderEtal2011} to spectral thresholding operators. We propose computationally efficient methods to approximate the \emph{df} using
tools from random matrix theory. 

\item We provide comprehensive computational guarantees of our algorithm --- this includes the number of iterations needed to reach a first order stationary point and the asymptotic convergence of the sequence of estimates produced by \NSI.

\item Every iteration of \NSI~requires the computation of a low-rank SVD of a structured matrix, for which we propose new methods.
% that admits a decomposition of the form of a sum of a sparse and low-rank matrix for which we use
%iterative low-rank SVD based computational procedures, developed in~\cite{HastieEtal2015}.
Using efficient warm-start tricks to speed up the low-rank computations, we demonstrate the effectiveness of our proposal to large scale instances up to the Netflix size in reasonable computation times.

\item Over a wide range of synthetic and real-data examples, we show that our proposed nonconvex penalized framework leads to high quality solutions with excellent statistical properties, which are often found to be significantly better than nuclear norm regularized solutions in terms of producing low-rank solutions with good predictive performances.

\item Implementations of our algorithms in the {\texttt{R}} programming language have been made publicly available on {\texttt{github}} at:~\url{https://github.com/diegofrasal/ncImpute}.

\end{itemize}

The remainder of the paper is organized as follows. Section~\ref{sec2} studies several properties of nonconvex spectral penalties and associated spectral thresholding operators, including their effective degrees of freedom. Section~\ref{sec3} describes our algorithmic framework~\NSI~and studies the convergence properties of the algorithm. Section~\ref{sec4} presents numerical experiments demonstrating the usefulness of nonconvex penalized estimation procedures in terms of superior statistical properties on several synthetic datasets --- we also show the usefulness of these estimators on several real data instances. Section~\ref{sec-concl} contains the conclusions and discusses several important future research directions. To improve readability, some technical materials and empirical results are relegated to Section~\ref{last:appen}.

\paragraph{Notation:}
For a matrix $A_{m \times n}$, we denote its $(i,j)$th entry by $a_{ij}$. 
$\mathcal{P}_{\Omega}(A)$ is a matrix with its $(i,j)$th entry given by $a_{ij}$ for $(i,j) \in \Omega$ and zero otherwise, with $\Omega \subset \{ 1, \ldots, m \} \times \{ 1, \ldots, n\}$.
We use the notation $\mathcal{P}_{\Omega}^\perp(A) = A - \mathcal{P}_{\Omega}(A)$ to denote the projection of $A$ onto the complement of $\Omega$.
Let $\sigma_{i}(A), i = 1, \ldots, \max \{m,n\}$ denote the singular values of $A$, with $\sigma_{i}(A) \geq \sigma_{i+1}(A)$ (for all $i$) -- we will use the notation $\B\sigma(A)$ to denote the vector of singular values. When clear from the context, we will simply write $\B\sigma$ instead of $\B\sigma(A)$. For a vector $\M{a}=(a_{1}, \ldots, a_{n})\in \mathbb{R}^{n}$, we will use the notation $\diag(\M{a})$ to denote an $n \times n$ diagonal matrix with $i$th diagonal entry being $a_{i}$.

\section{Spectral Thresholding Operators}\label{sec2}
We begin our analysis by considering the fully observed version of Problem~\eqref{nonconv-prob-1}, given by:
\begin{equation}\label{gen-thresh-1}
%S_{\lambda, \gamma}(Z) \in \argmin_{X}\underbrace{\frac 12 \| X - Z \|_{F}^2 + \sum_{i=1}^{\min \{ m, n \}} P(\sigma_{i}(X); \lambda,\gamma)}_{:=g(X)},
\min_{X}~\underbrace{\frac 12 \| X - Z \|_{F}^2 + \sum_{i=1}^{\min \{ m, n \}} P(\sigma_{i}(X); \lambda,\gamma)}_{:=g(X)}
\end{equation}
where, for a given matrix $Z$, a minimizer of the function $g(X)$, denoted by $S_{\lambda, \gamma}(Z)$, is the \textit{spectral thresholding operator} induced by the spectral penalty $\sum_{i} P(\sigma_{i}(X); \lambda,\gamma).$
Suppose $U \diag(\B\sigma)V'$ denotes the SVD of $Z$.
For the nuclear norm regularized problem with the penalty function $P(\sigma_{i}(X); \lambda,\gamma) = \lambda \sigma_{i}(X),$ the corresponding thresholding operator, denoted by $S_{\lambda, \ell_1}(Z)$ (say), is given by the familiar soft-thresholding operator~\citep{CaiEtal2010,MazumderEtal2010}:
\begin{equation}\label{soft-1}
% S_{\lambda, \ell_{1}}(Z) := U \diag (s_{\lambda, \ell_{1}}(\B\sigma)) V'\;~\text{with,}~\;s_{\lambda, \ell_{1}}(\B\sigma) := \left ( (\sigma_{1} - \lambda)_+, \ldots, (\sigma_{\max{m,n}} - \lambda)_+ \right),
\begin{aligned}
  S_{\lambda, \ell_{1}}(Z) := U \diag (s_{\lambda, \ell_{1}}(\B\sigma)) V' 
  %\\
 % s_{\lambda, \ell_{1}}(\sigma_{i}) :=  (\sigma_{i} - \lambda)_+,
\end{aligned}
 \end{equation}
where, $s_{\lambda, \ell_{1}}(\sigma_{i}) :=  (\sigma_{i} - \lambda)_+$, $(\cdot)_+ = \max \{ \cdot, 0\}$ and $s_{\lambda, \ell_{1}}(\sigma_{i})$ is the $i$th entry of  $s_{\lambda, \ell_{1}}(\B\sigma)$ (due to separability of the thresholding operator).
Here, $S_{\lambda, \ell_{1}}(Z)$ is the the soft-thresholding operator on the singular values of $Z$ and plays a crucial role in the~\SI~algorithm~\citep{MazumderEtal2010}.
For the rank regularized problem, with 
$$P(\sigma_{i}(X); \lambda,\gamma) = \lambda \mathbbm{1}(\sigma_{i}(X) > 0),$$ the thresholding operator denoted by $S_{\lambda, \ell_{0}}(Z)$ is given by the hard-thresholding operator~\citep{MazumderEtal2010}:
\begin{equation}\label{hard-1}
S_{\lambda, \ell_{0}}(Z) := U \diag (s_{\lambda, \ell_{0}}(\B\sigma)) V'
 \end{equation}
with $s_{\lambda, \ell_{0}}(\sigma_{i}) = \sigma_{i} \mathbbm{1}(\sigma_{i} > \sqrt{2\lambda}).$ A closely related thresholding operator that retains the top $r$ singular values and sets the remaining to zero formed the basis of the \textsc{Hard-Impute} algorithm in~\cite{MazumderEtal2010,troyanskaya2001missing}. The results in~\eqref{soft-1} and \eqref{hard-1} suggest a curious link --- the spectral thresholding operators (for the two specific choices of the spectral penalty functions given above) are tied to the corresponding thresholding functions that operate only on the singular values of the matrix --- in other words, the operators $S_{\lambda, \ell_{1}}(Z),S_{\lambda, \ell_{0}}(Z)$
do \emph{not} change the singular vectors of the matrix $Z$. It turns out that a similar result holds true for more general spectral penalty functions $P(\cdot; \lambda,\gamma)$ as the following proposition illustrates.

\begin{proposition}\label{prop1}
Let $Z=U\diag(\B\sigma)V'$ denote the SVD of $Z$, and $s_{\lambda, \gamma}( \B\sigma )$ denote the following thresholding operator on the singular values of $Z$:
\begin{equation}\label{gen-thresh-11}
s_{\lambda, \gamma}( \B\sigma ) \in \argmin_{\B\alpha \geq \M{0} }\underbrace{ \frac12 \| \B\alpha - \B\sigma  \|_{2}^2 + \sum_{i=1}^{\min \{ m, n \}} P(\alpha_{i}; \lambda,\gamma)}_{:= \bar{g}(\B\alpha)}.
\end{equation}
Then  $S_{\lambda, \gamma}(Z) = U \diag(s_{\lambda, \gamma}(\B\sigma)) V'.$  
%\textcolor{red}{(I think the notation is pretty confusing at this point, perhaps we should write $Z=U\diag(\B\sigma(Z))V'$, $s_{\lambda, \gamma}( \B\sigma(Z) )$ and $S_{\lambda, \gamma}(Z) = U \diag(s_{\lambda, \gamma}( \B\sigma(Z) ) ) V'$ from the beginning.)}
\end{proposition}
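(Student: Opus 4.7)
\textbf{Proof plan for Proposition~\ref{prop1}.}

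The plan is to reduce the matrix optimization in~\eqref{gen-thresh-1} to the scalar optimization in~\eqref{gen-thresh-11} via von Neumann's trace inequality. First, expand the Frobenius distance to rewrite
\[
g(X) = \tfrac12 \|X\|_F^2 - \langle X, Z \rangle + \tfrac12 \|Z\|_F^2 + \sum_{i} P(\sigma_i(X);\lambda,\gamma).
\]
Since $\|X\|_F^2 = \sum_i \sigma_i(X)^2$ and the penalty depends on $X$ only through $\B\sigma(X)$, the only term involving the singular vectors of $X$ is $-\langle X, Z\rangle$. By von Neumann's trace inequality, $\langle X, Z \rangle \leq \sum_i \sigma_i(X)\,\sigma_i(Z)$, which gives the lower bound
\[
g(X) \;\geq\; \sum_{i}\!\left[\tfrac12\bigl(\sigma_i(X) - \sigma_i(Z)\bigr)^2 + P(\sigma_i(X);\lambda,\gamma)\right] \;=\; \bar g\bigl(\B\sigma(X)\bigr).
\]
Because $\B\sigma(X)\geq \M{0}$, the right-hand side is further bounded below by $\min_{\B\alpha \geq \M 0}\bar g(\B\alpha) = \bar g(s_{\lambda,\gamma}(\B\sigma))$, yielding a uniform lower bound on $g$ across all admissible $X$.

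Next I would show this lower bound is attained by $\hat X := U \diag(s_{\lambda,\gamma}(\B\sigma)) V'$. Writing $\hat X - Z = U\,\diag\bigl(s_{\lambda,\gamma}(\B\sigma) - \B\sigma\bigr)V'$ and using unitary invariance of $\|\cdot\|_F$, one gets $\|\hat X - Z\|_F^2 = \sum_i \bigl(s_{\lambda,\gamma}(\sigma_i) - \sigma_i\bigr)^2$. Moreover, since the entries of $s_{\lambda,\gamma}(\B\sigma)$ are nonnegative (by feasibility in~\eqref{gen-thresh-11}), the singular values of $\hat X$ are precisely the entries of $s_{\lambda,\gamma}(\B\sigma)$ as a multiset, so the spectral penalty evaluates to $\sum_i P(s_{\lambda,\gamma}(\sigma_i);\lambda,\gamma)$ by separability. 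Combining these yields $g(\hat X) = \bar g(s_{\lambda,\gamma}(\B\sigma))$, matching the lower bound, so $\hat X$ is a minimizer.

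The main point requiring care is the equality case in von Neumann's inequality: one needs that the inner product $\langle \hat X, Z\rangle$ really equals $\sum_i s_{\lambda,\gamma}(\sigma_i)\,\sigma_i$. This follows by direct computation using $Z = U\diag(\B\sigma)V'$ and orthogonality of $U,V$, so no subtle alignment argument is needed. A minor subtlety is that the entries of $s_{\lambda,\gamma}(\B\sigma)$ need not be sorted in decreasing order, but this does not affect the proof because the penalty $\sum_i P(\sigma_i(\hat X);\lambda,\gamma)$ is a symmetric (separable) function of the singular-value multiset, and the argument above bounds $g(X)$ from below by the unconstrained minimum of $\bar g$ over $\B\alpha\geq\M 0$. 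Thus the representation $S_{\lambda,\gamma}(Z) = U \diag(s_{\lambda,\gamma}(\B\sigma)) V'$ is established; this representation need not itself be a ``canonical'' ordered SVD, which is the only mildly delicate aspect of the conclusion.
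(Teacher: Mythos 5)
Your proof is correct and follows essentially the same route as the paper: a reduction of $g(X)$ to $\bar g(\B\sigma(X))$ via a classical singular-value inequality, followed by verifying attainment at $U\diag(s_{\lambda,\gamma}(\B\sigma))V'$. The only cosmetic difference is that you expand the square and invoke von Neumann's trace inequality, whereas the paper cites the Wielandt--Hoffman inequality $\|X-Z\|_F^2 \geq \|\B\sigma(X)-\B\sigma(Z)\|_2^2$ directly; the two are equivalent for this purpose.
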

\begin{proof}
Note that by the Wielandt-Hoffman inequality~\citep{horn2012matrix} we have that:
$\| X - Z \|_{F}^2 \geq \| \B\sigma(X) - \B\sigma(Z) \|_{2}^2,$ where, for a vector $\M{a}$, $\|\M{a}\|_{2}$ denotes the standard Euclidean norm. Equality holds when $X$ and $Z$ share the same left and right singular vectors.
This leads to:
\begin{equation*}
\begin{aligned}
\frac12 \| X - Z \|_{F}^2 +\sum_{i=1}^{\min \{ m, n \}}P(\sigma_{i}(X) ; \lambda,\gamma)  \geq \frac12 \| \B\sigma(X) -\B\sigma(Z) \|_{2}^2 + \sum_{i=1}^{\min \{ m, n \}} P(\sigma_{i}(X) ; \lambda,\gamma).
\end{aligned}
\end{equation*}
In the above inequality, note that the left hand side is $g(X)$ (defined in~\eqref{gen-thresh-1}) and right hand side is $\bar{g}(\B\sigma(X))$ (defined in~\eqref{gen-thresh-11}). It follows that 
\begin{equation}\label{defn-ineq-1-1}
\min_{X} ~~ g(X) \geq \min_{\B\sigma(X)}~~ \bar{g}(\B\sigma(X)) = \bar{g} \left(s_{\lambda, \gamma}(\B\sigma) \right),
\end{equation}
where, we used the observation that $\B\sigma(X) \geq \M{0}$ and
 $s_{\lambda, \gamma}( \B\sigma),$ as defined in~\eqref{gen-thresh-11} minimizes $\bar{g}(\B\sigma(X))$.
In addition, this minimum is attained by the function $g(X)$, at the choice $X= U \diag(s_{\lambda, \gamma}(\B\sigma)) V'$.  This completes the proof of the proposition. 
%\textcolor{red}{(At some point it would be useful to make the assumption $m \geq n$ and get rid for good of the $\min \{ m, n \}$ notation.)}
\end{proof}

Due to the separability of the optimization Problem~\eqref{gen-thresh-11} across the coordinates, i.e., $\bar{g}(\B\alpha) = \sum_{i} \bar{g}_i(\alpha_i)$ (where, $\bar{g}_i(\cdot)$ is defined in~\eqref{univ-thresh-funtion}), it suffices to consider each of the subproblems separately. Let $s_{\lambda, \gamma}(\sigma_i)$ denote a minimizer of $\bar{g}_i(\alpha)$, i.e., 
\begin{equation}\label{univ-thresh-funtion}
s_{\lambda, \gamma}(\sigma_i) \in \argmin_{\alpha \geq 0} \; \bar{g}_{i} (\alpha) := \frac12 (\alpha - \sigma_{i})^2 + P(\alpha;\lambda,\gamma).
\end{equation}
It is easy to see that the $i$th coordinate of $s_{\lambda, \gamma}(\B\sigma)$ is given by $s_{\lambda, \gamma}(\sigma_i)$. This discussion suggests that 
our understanding of the spectral thresholding operator 
$S_{\lambda, \gamma}(Z)$ is intimately tied to the univariate thresholding operator~\eqref{univ-thresh-funtion}. 
Thusly motivated, in the following, we present a concise discussion about univariate penalty functions and the resultant thresholding operators. 
We begin with some examples of concave penalties that are popularly used in statistics in the context of sparse linear modeling.
\paragraph{Families of  Nonconvex Penalty Functions:}
Several types of nonconvex penalties are popularly used in high-dimensional regression frameworks---see for example,~\cite{Nikolova2000,LvFan2009,zhang2012general}.
For our setup, since these penalty functions operate on the singular values of a matrix,
it suffices to consider nonconvex functions that are defined only on the nonnegative real numbers. We present a few examples below:
\begin{itemize}
\item The $\ell_{\gamma}$ penalty \citep{FrankFriedman1993} given by 
\[
P(\sigma; \lambda, \gamma)=\lambda \sigma^{\gamma},
\]
where $\lambda > 0$ and $0\leq \gamma <1$.
\item The SCAD penalty \citep{FanLi2001} is defined via:
 $$P'(\sigma;\lambda,\gamma)=\lambda \mathbbm{1}(\sigma \leq \lambda) + \frac{(\gamma\lambda -\sigma)_+}{\gamma-1}\mathbbm{1}(\sigma >\lambda),$$
where $\lambda > 0, \gamma>2$, and $P'(\sigma;\lambda,\gamma)$ denotes the derivative of $\sigma \mapsto P(\sigma;\lambda,\gamma)$ on $\sigma \geq 0$ with
$P(0; \lambda, \gamma)=0$.
%\begin{eqnarray*}
%& & P(\sigma;\lambda, \gamma)=P(-\sigma; \lambda, \gamma), \quad P(0; \lambda, \gamma)=0 \\
%\vspace{0.1cm}
%& & P'(\sigma;\lambda,\gamma)=\lambda \M{I}(\sigma \leq \lambda) + \frac{(\gamma\lambda -\sigma)_+}{\gamma-1}\M{I}(\sigma >\lambda), \quad \text{for } \sigma > 0, \gamma>2.
%\end{eqnarray*}
\item The MC+ penalty \citep{Zhang2010,MazumderEtal2011} defined as
\begin{align*}
P(\sigma;\lambda,\gamma)=\lambda \left(\sigma-\frac{\sigma^2}{2\lambda \gamma}\right )\mathbbm{1}(0 \leq \sigma< \lambda \gamma) +\frac{\lambda^2 \gamma}{2}\mathbbm{1}(\sigma \geq \lambda \gamma),
\end{align*}
with $\lambda > 0, \gamma >0$.
\item The log-penalty, with 
\[
P(\sigma; \lambda,\gamma)= \lambda\log(\gamma \sigma + 1)/ \log(\gamma +1)
\] 
on $\lambda>0$ and $\gamma > 0$. 
%%\textcolor{red}{(Reference for this penalty?)}
\end{itemize}

\begin{figure*}[htb!]
\begin{center}
\includegraphics[width= \textwidth, height= 0.25\textheight]{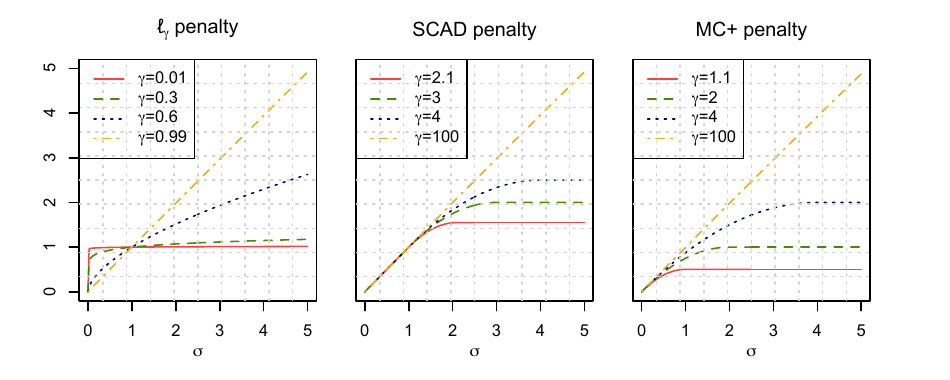}
\includegraphics[width= \textwidth, height= 0.25\textheight]{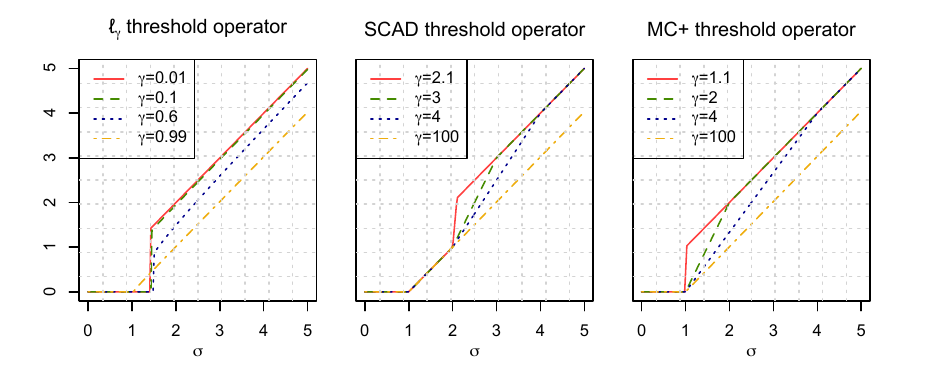}
\caption{\small {[Top panel] Examples of nonconvex penalties $\sigma \mapsto P(\sigma; \lambda, \gamma)$ with $\lambda=1$ for different values of $\gamma$. [Bottom Panel] The corresponding scalar thresholding operators: $\sigma \mapsto s_{\lambda, \gamma}(\sigma)$.
At $\sigma =1$, some of the thresholding operators corresponding to the $\ell_{\gamma}$ penalty function are discontinuous, and some of the other thresholding functions are ``close'' to being so.}} \label{fig1}
\end{center}
\end{figure*}

%Due to the separability of the thresholding operator~\eqref{gen-thresh-1} in the $\sigma_{i}$'s, in order to understand the properties of $s_{\lambda, \gamma}(\cdot)$ it suffices to consider the
%scalar thresholding operators, in each $\sigma_{i}$:

Figure~\ref{fig1} shows some members of the above nonconvex penalty families. The $\ell_{\gamma}$ penalty function is non differentiable at $\sigma = 0$, due to the unboundedness of
$P'(\sigma; \lambda,\gamma)$ as $\sigma \rightarrow 0+$.
%\textcolor{red}{(Are the remaining penalties differentiable at $\sigma=0$ under the stated conditions?)}. 
The nonzero derivative at $\sigma = 0+$ encourages sparsity. The $\ell_{\gamma}$ penalty functions show a clear transition from the
$\ell_1$ penalty  to the $\ell_{0}$ penalty  --- similarly, the resultant thresholding operators show a passage from the soft-thresholding
to the hard-thresholding operator. Let us examine the analytic form of the thresholding function induced by the MC+ penalty (for any $\gamma > 1$):
\begin{equation}\label{mcp_thresh}
s_{\lambda, \gamma}(\sigma)=
\begin{cases}
0, & \text{if } \sigma \leq \lambda \\
 \Big( \frac{\sigma -\lambda}{1-1/\gamma} \Big), & \text{if } \lambda <  \sigma  \leq \lambda \gamma \\
\sigma, & \text{if } \sigma > \lambda \gamma.
\end{cases}
\end{equation}
It is interesting to note that for the MC+ penalty, the derivatives are all bounded and the thresholding functions are continuous for all $\gamma > 1$.
As $\gamma \rightarrow \infty$, the threshold operator~\eqref{mcp_thresh} coincides with the soft-thresholding operator. 
However, as $\gamma \rightarrow 1+$ the threshold operator approaches the discontinuous hard-thresholding operator $\sigma \mathbbm{1}(\sigma \geq \lambda)$ --- this is illustrated in Figure~\ref{fig1} and can also be observed by inspecting~\eqref{mcp_thresh}.
Note that the $\ell_{1}$ penalty penalizes small and large singular values in a similar fashion, thereby incurring an increased bias in estimating the larger coefficients.
For the MC+ and SCAD penalties, we observe that they penalize the larger coefficients less severely than the $\ell_{1}$ penalty --- simultaneously, they penalize the smaller coefficients in a manner similar to that of the $\ell_{1}$ penalty. On the other hand, the $\ell_{\gamma}$ penalty (for small values of $\gamma$) imposes a more severe penalty for values of $\sigma \approx 0$, quite different from the behavior of other penalty functions. In general, for a given family of nonconvex penalties $P(\sigma;\lambda, \gamma)$, the effect of $(\lambda, \gamma)$ on the nonconvexity can be characterized through the general concavity quantity $\phi_P$ that is to be introduced in \eqref{def-conc-1}.

\subsection{Properties of Spectral Thresholding Operators}

The nonconvex penalty functions described in the previous section are concave functions on the nonnegative real line. We will now discuss measures that 
may be thought (loosely speaking) to measure the amount of concavity in the functions. For a univariate penalty function $\alpha \mapsto P(\alpha ;\lambda,\gamma)$ on $\alpha \geq 0$, assumed to be differentiable on $(0, \infty)$, we introduce the following quantity ($\phi_P$) that measures the amount of concavity (see also,~\cite{Zhang2010}) of $P(\alpha ;\lambda,\gamma)$:
\begin{equation}\label{def-conc-1}
\phi_P := \; \inf_{\alpha, \alpha' >0} \;\;  \frac{P'(\alpha ;\lambda,\gamma)  - P'(\alpha';\lambda,\gamma)}{\alpha - \alpha'},
\end{equation}
where $P'(\alpha ;\lambda,\gamma)$ denotes the derivative of $P(\alpha ;\lambda,\gamma)$ wrt $\alpha$ on $\alpha >0$. 
%%\textcolor{red}{(I think we need $0<\alpha'<\alpha$ in the definition of $\phi_P$)}

We say that the function $g(X)$ (as defined in~\eqref{gen-thresh-1})  is \emph{$\tau$-strongly convex} if the following condition holds:
\begin{equation}\label{str-conv-g}
g(X) \geq g(\widetilde{X}) + \langle \nabla g (\widetilde{X}),  X - \widetilde{X} \rangle + \frac{\tau}{2} \|  X - \widetilde{X} \|_{F}^2,
\end{equation}
for some $\tau \geq 0$ and all $X, \widetilde{X}$. In inequality~\eqref{str-conv-g}, $\nabla g (\widetilde{X})$ denotes \emph{any} subgradient (assuming it exists) of $g(X)$ at $\widetilde{X}$.
If $\tau=0$ then the function is simply convex\footnote{Note that we consider  $\tau \geq 0$  in the definition so that it includes the case of (non strong) convexity.}.
Using standard properties of spectral functions~\citep{borwein-conv,Lewis1995}, it follows that $g(X)$ is $\tau$-strongly convex iff the vector function:
\begin{equation}\label{bar-defn1}
\bar{g}(\B\alpha) =  \frac12 \| \B\alpha - \B\sigma(Z) \|_{2}^2 + \sum_{i=1}^{\min \{ m, n \}} P(\alpha_i;\lambda,\gamma)
\end{equation}
is $\tau$-strongly convex on $\{ \B\alpha: \B\alpha \geq \M{0}\}$, where $ \B\sigma(Z)$ denotes the singular values of $Z$. 
%%%\textcolor{red}{(there is some confusion with a generic $g(X)$ as in (12) and $g(X)$ from equation (4))}
Let us recall the separable decomposition of $\bar{g}(\B\alpha)  = \sum_{i} \bar{g}_{i} (\alpha_{i})$, with
$\bar{g}_{i} (\alpha)$ as defined in~\eqref{univ-thresh-funtion}.
Clearly, the function $\B\alpha \mapsto \bar{g}(\B\alpha)$ is $\tau$-strongly convex (on the nonnegative reals) iff each summand
$\bar{g}_{i} (\alpha)$ is $\tau$-strongly convex on $\alpha \geq 0$. Towards this end, notice that $\bar{g}_{i} (\alpha)$ is convex on $\alpha \geq 0$ iff
$1 +  \phi_{P}  \geq 0 $ --- in particular, $\bar{g}_{i} (\alpha)$ is $\tau$-strongly convex with parameter $\tau = 1 + \phi_{P}$, provided this number is nonnegative.
In this vein, we have the following proposition:
\begin{proposition}\label{conv-prox-map-1}
Suppose $\phi_P > -1$, then the function $X \mapsto g(X)$ is $\tau$-strongly convex  with $\tau = 1 + \phi_P$.
\end{proposition}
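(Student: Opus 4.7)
My plan is to exploit the equivalence asserted just above the proposition: strong convexity of $g$ in $X$ (w.r.t.\ the Frobenius norm) is equivalent to strong convexity of its vector-valued companion $\bar{g}$ in $\B\alpha$ on the nonnegative orthant. The underlying fact is the Lewis-type spectral lift: for an absolutely symmetric $\Phi$, the function $X \mapsto \Phi(\B\sigma(X)) - \tfrac{\tau}{2}\|X\|_F^2$ is convex iff $\Phi(\B\alpha) - \tfrac{\tau}{2}\|\B\alpha\|_2^2$ is convex, with the linear cross-term $-\langle X, Z\rangle$ being taken care of by von Neumann/Wielandt--Hoffman in the same way as in the proof of Proposition~\ref{prop1}. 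So Step~1 reduces everything to a purely scalar question.

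Next, I would exploit the separability $\bar{g}(\B\alpha) = \sum_i \bar{g}_i(\alpha_i)$ recorded in the excerpt and reduce strong convexity of $\bar{g}$ to strong convexity of each univariate summand $\bar{g}_i(\alpha) = \tfrac12(\alpha - \sigma_i(Z))^2 + P(\alpha;\lambda,\gamma)$ on $[0,\infty)$, with the \emph{same} constant. For each such summand I would verify strong convexity via the standard derivative-difference characterization: for any $\alpha > \alpha' > 0$,
\[
\frac{\bar{g}_i'(\alpha) - \bar{g}_i'(\alpha')}{\alpha - \alpha'} \;=\; 1 + \frac{P'(\alpha;\lambda,\gamma) - P'(\alpha';\lambda,\gamma)}{\alpha - \alpha'} \;\geq\; 1 + \phi_P,
\]
where the inequality is precisely the definition of $\phi_P$ in~\eqref{def-conc-1}. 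Since $\phi_P > -1$ by hypothesis, the slope is uniformly bounded below by $1+\phi_P>0$, giving $(1+\phi_P)$-strong convexity of $\bar{g}_i$ on $(0,\infty)$; continuity of $P$ at zero (and the convention $P=+\infty$ on the negative half-line) extends this to the closed domain $[0,\infty)$. Summing over $i$ yields $(1+\phi_P)$-strong convexity of $\bar{g}$ on the nonnegative orthant, and pulling back through the spectral-function equivalence of Step~1 delivers the claimed $(1+\phi_P)$-strong convexity of $g$.

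The main obstacle is conceptual rather than computational: it is the spectral-function lift in Step~1 that actually carries the scalar strong-convexity estimate back to a Frobenius-norm statement for the matrix function $g$. But that is exactly the Lewis/Borwein-type result cited in the paragraph preceding the proposition, so in the present proof it can simply be invoked. The only real substance is therefore the one-line derivative-difference computation above; separability, the boundary extension at $\alpha=0$, and assembly are routine.
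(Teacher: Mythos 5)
Your proposal is correct and follows essentially the same route as the paper: the paper likewise invokes the spectral-function equivalence of \cite{borwein-conv,Lewis1995} to reduce strong convexity of $g$ to that of $\bar{g}$ on the nonnegative orthant, uses separability to pass to the univariate summands $\bar{g}_i$, and observes that each is $(1+\phi_P)$-strongly convex directly from the definition~\eqref{def-conc-1} of $\phi_P$. Your derivative-difference computation simply makes explicit the scalar step the paper states without detail.
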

%The proof of Proposition~\ref{conv-prox-map-1} is rather straightforward and hence omitted.
For the MC+ penalty, the condition $ \tau= 1 + \phi_P > 0$ is equivalent to $\gamma > 1$. For the $\ell_{\gamma}$ penalty function, with $\gamma<1$, the parameter $\tau = -\infty$, and thus the
function $g(X)$ is not strongly convex. 

\begin{proposition}\label{prop-liphs-1}
%For any $Z$, we let $S(Z) \in \argmin_{X} g(X)$ denote a solution to Problem~\eqref{eqn-g-1}.
Suppose $1 + \phi_P > 0$, then $Z \mapsto S_{\lambda, \gamma}(Z) $ is Lipschitz continuous with constant $\frac{1}{1 + \phi_P}$, i.e, for all $Z_1, Z_2$ we have:
%$$ \| S(Z_{1}) - S(Z_{2}) \|_{F} \leq \frac{1}{\phi_P + 1 } \| \sigma(Z_{1}) - \sigma(Z_{2}) \|_{F}  \leq  \frac{1}{\phi_P + 1 } \| Z_{1}) -  Z_{2} \|_{F}$$
\begin{equation}\label{spec-map-Lipsh}
\| S_{\lambda, \gamma}(Z_1)  - S_{\lambda, \gamma}(Z_{2}) \|_{F} \leq   \frac{1}{1 + \phi_P} \| Z_{1} -  Z_{2} \|_{F}.
\end{equation}
\end{proposition}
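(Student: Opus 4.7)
The plan is to reduce the claim to the standard argument that the proximal map of a $\tau$-strongly convex objective is $1/\tau$-Lipschitz, using Proposition~\ref{conv-prox-map-1} to supply the strong convexity with $\tau = 1+\phi_P > 0$.

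First, I would set up notation. Given $Z_1,Z_2$, let $g_k(X) = \tfrac12\|X-Z_k\|_F^2 + \sum_i P(\sigma_i(X);\lambda,\gamma)$ for $k=1,2$. By Proposition~\ref{conv-prox-map-1}, each $g_k$ is $\tau$-strongly convex with $\tau = 1+\phi_P > 0$. Strong convexity with positive modulus guarantees that $g_k$ has a \emph{unique} minimizer, so $X_k := S_{\lambda,\gamma}(Z_k)$ is well-defined as a matrix (not just a set). Standard convex analysis then yields, for the minimizer, the single-point strong convexity inequality
\begin{equation*}
g_k(X) \;\geq\; g_k(X_k) + \tfrac{\tau}{2}\|X - X_k\|_F^2 \qquad \text{for all } X.
\end{equation*}

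Next, I would apply this inequality with $X=X_2$ in $g_1$ and with $X=X_1$ in $g_2$, then add the two resulting inequalities:
\begin{equation*}
g_1(X_2) + g_2(X_1) \;\geq\; g_1(X_1) + g_2(X_2) + \tau\,\|X_1-X_2\|_F^2.
\end{equation*}
The penalty contributions $\sum_i P(\sigma_i(X);\lambda,\gamma)$ are common to $g_1$ and $g_2$ and cancel out of the left-minus-right difference. A short direct computation of $g_1(X_2)-g_1(X_1)+g_2(X_1)-g_2(X_2)$ using $g_k(X) = \tfrac12\|X\|_F^2 - \langle X,Z_k\rangle + \tfrac12\|Z_k\|_F^2 + h(X)$ then leaves the clean inner-product identity
\begin{equation*}
\langle X_1 - X_2,\; Z_1 - Z_2 \rangle \;\geq\; \tau\,\|X_1-X_2\|_F^2.
\end{equation*}

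Finally, I would apply Cauchy--Schwarz to the left-hand side to deduce $\|X_1-X_2\|_F \cdot \|Z_1-Z_2\|_F \geq \tau\|X_1-X_2\|_F^2$, whence $\|S_{\lambda,\gamma}(Z_1)-S_{\lambda,\gamma}(Z_2)\|_F \leq \tfrac{1}{1+\phi_P}\|Z_1-Z_2\|_F$, which is exactly~\eqref{spec-map-Lipsh}.

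The only conceptual point to be careful about is the use of the single-point strong convexity inequality at a minimizer of a nonsmooth, nonconvex-in-appearance functional: it is valid here precisely because Proposition~\ref{conv-prox-map-1} upgrades $g_k$ to a genuinely (strongly) convex function on the matrix space under the assumption $1+\phi_P>0$. Beyond this, the argument is essentially the textbook firm-nonexpansiveness proof for proximal operators, so I do not anticipate a serious obstacle.
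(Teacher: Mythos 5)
Your proof is correct, and it takes a genuinely different route from the paper's. You invoke Proposition~\ref{conv-prox-map-1} to get $\tau$-strong convexity of $g_k$ with $\tau=1+\phi_P$, use the two-point inequality $g_k(X)\geq g_k(X_k)+\tfrac{\tau}{2}\|X-X_k\|_F^2$ at the (unique) minimizer, add the two instances, cancel the common spectral penalty to obtain the strong-monotonicity inequality $\langle X_1-X_2,\,Z_1-Z_2\rangle \geq \tau\|X_1-X_2\|_F^2$, and finish with Cauchy--Schwarz; I verified the inner-product identity and it is exactly as you state. The paper instead never uses strong convexity: it splits $g$ by adding and subtracting $\tfrac{\psi}{2}\|X\|_F^2$, absorbs the added quadratic into the penalty via $\|X\|_F^2=\sum_i\sigma_i^2(X)$ so that for $\psi+\phi_P>0$ the modified penalty $\widetilde P$ is convex, recognizes $S_{\lambda,\gamma}(Z)$ as a convex proximal map evaluated at the rescaled point $Z/(1-\psi)$, applies the standard nonexpansiveness (Lipschitz constant $1$) of convex prox maps, and then optimizes over $\psi$ to land on $\hat\psi=-\phi_P$ and the constant $1/(1+\phi_P)$. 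Your argument is more elementary and self-contained (it needs only Proposition~\ref{conv-prox-map-1}, the single-point strong-convexity inequality, and Cauchy--Schwarz, rather than citing prox-map contraction as a black box), while the paper's $\psi$-splitting makes transparent how the constant arises as the best rescaling that keeps the problem convex and reuses a standard fact without any monotonicity computation. One small point worth making explicit in your write-up: the single-point inequality at the minimizer does not require exhibiting a subgradient, since $\tau$-strong convexity plus optimality of $X_k$ yields it directly by a convexity-along-segments argument (and existence of the minimizer follows from coercivity, uniqueness from $\tau>0$), so the "conceptual point" you flag is indeed harmless.
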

\begin{proof}
%%Note that, a simple consequence of the Hoffman-Wielandt inequality gives us: $\| Z_{1}) -  Z_{2} \|_{F}  \geq  \| \sigma(Z_{1}) - \sigma(Z_{2}) \|_{F}$.
We rewrite $g(X)$ as:
\begin{align}\label{g-x-line-1}
g(X) = \left \{ \frac12 \| X - Z \|_{F}^2 - \frac{\psi}{2} \| X\|_{F}^2 \right \} + \left \{ \sum_{i=1}^{\min \{ m, n \}} P(\sigma_i(X);\lambda,\gamma) + \frac{\psi}{2} \| X\|_{F}^2 \right \}.
\end{align}
We have that $\| X\|_{F}^2  = \sum_{i=1}^{\min \{ m, n \}} \sigma^2_{i}(X)$. Using the shorthand 
notation $\widetilde{P}(\sigma_i(X)) = P(\sigma_i(X);\lambda,\gamma) + \frac{\psi}{2} \sigma^2_i(X)$, and rearranging the terms in~\eqref{g-x-line-1}, 
 it follows that $S_{\lambda, \gamma}(Z)$, a minimizer of $g(X)$, is given by:
\begin{equation}\label{prox-map-tau-1}
\begin{aligned}
S_{\lambda, \gamma}(Z) \in \argmin_{X}\bigg\{ \frac{1- \psi}{2} \| X - \frac{1}{1 - \psi} Z \|_{F}^2  + \sum_{i=1}^{\min \{ m, n \}} \widetilde{P}(\sigma_i(X)) \bigg \}.
\end{aligned}
\end{equation}
If $\psi + \phi_P > 0$, the function $\sigma_{i} \mapsto \widetilde{P}(\sigma_i)$ is convex for every $i$. If $1 - \psi > 0$, then the first term appearing in the objective function in~\eqref{prox-map-tau-1} is convex. Thus, assuming $\psi + \phi_P > 0, 1 - \psi > 0$ both summands in the above objective function are convex. In particular, the optimization problem~\eqref{prox-map-tau-1} is convex and $Z \mapsto S_{\lambda, \gamma}(Z)$ can be viewed as a 
convex proximal map~\citep{rock-conv-96}. Using standard contraction properties of proximal maps, we have that:
\begin{align*}
 \|S_{\lambda, \gamma}(Z_1) - S_{\lambda, \gamma}(Z_2) \|_{F} &\leq  \left \| \frac{Z_1}{1 - \psi} - \frac{Z_2}{1 - \psi} \right\|_{F} \leq \frac{1}{1 - \psi} \| Z_{1} - Z_{2} \|_{F}.
\end{align*}
Since the above holds true for any $\psi$ as chosen above, optimizing over the value of $\psi$ such that Problem~\eqref{prox-map-tau-1} remains convex gives us $\hat{\psi} = -\phi_P$, i.e., $1/(1 - \hat{\psi}) = 1/(1+ \phi_P)$, thereby leading to~\eqref{spec-map-Lipsh}.
\end{proof}

%hastie09:_elemen_statis_learn_II,hastie2015statistical,

\subsection{Effective Degrees of Freedom for Spectral Thresholding Operators}\label{sec-edf-spec-ops}
In this section, to better understand the statistical properties of spectral thresholding operators,
we  study their degrees of freedom.
The effective degrees of freedom or \emph{df} is a popularly used statistical notion that measures the amount of ``fitting'' performed by an estimator~\citep{LARS,hastie09:_elemen_statis_learn_II,stein1981estimation}. In the case of classical linear regression, for example, \emph{df} is simply given by  the number of features used in the linear model. This notion applies more generally to additive fitting procedures. 
Following~\cite{LARS,stein1981estimation}, let us consider an additive model of the form:
\begin{equation}\label{iid-add-1}
Z_{ij} = \mu_{ij} + \varepsilon_{ij} ~~~\text{with}~~~ \varepsilon_{ij} \stackrel{\text{iid}}{\sim} N(0, \vv^2),
\end{equation}
for $i = 1, \ldots, m, j = 1, \ldots, n.$ The \emph{df} of $\hat{\mu}:=\hat{\mu}(Z)$, for the fully observed model above, i.e.,~\eqref{iid-add-1} is given by:
$$ \emph{df}(\hat{\mu}) = \sum_{ij} \text{Cov}(\hat{\mu}_{ij}, Z_{ij})/\vv^2,$$
where $\mu_{ij}$ denotes the $(i,j)$th entry of the matrix $\mu$.
For the particular case of a spectral thresholding operator we have $\hat{\mu} = S_{\lambda,\gamma}(Z).$
When $ Z \mapsto \hat{\mu}(Z)$ satisfies a weak differentiability condition, the \emph{df} may be computed via a divergence formula~\citep{stein1981estimation,LARS}:
\begin{equation}\label{eq-df-div1}
df (\hat{\mu}) = {\mathbb E} \left( \left(\nabla \cdot \hat{\mu}(Z)\right) \cdot (Z) \right),
\end{equation}
where $  \left(\nabla \cdot \hat{\mu}\right) \cdot (Z)   = \sum_{ij} \partial \hat{\mu}(Z_{ij})/\partial Z_{ij}.$
For the spectral thresholding operator $S_{\lambda, \gamma}(\cdot)$, expression~\eqref{eq-df-div1} holds if the map $Z \mapsto S_{\lambda, \gamma}(Z)$ is Lipschitz and hence weakly differentiable --- see for example,~\cite{candes2013unbiased}. In the light of Proposition~\ref{prop-liphs-1}, the map
$Z \mapsto S_{\lambda, \gamma}(Z)$ is Lipschitz when $\phi_P + 1 > 0$. Under the model~\eqref{iid-add-1}, the singular values of $Z$ will have a multiplicity of one with probability one. 
We assume that the univariate thresholding operators 
are differentiable, i.e., $s'_{\lambda,\gamma}(\cdot)$ exists. With these assumptions in place, the divergence formula for $S_{\lambda, \gamma}(Z)$ can be obtained following~\cite{candes2013unbiased, mazumder2019computing}, as presented in the following proposition. 
\begin{proposition}\label{propfour}
Assume that $1+\phi_P>0$ and the model~\eqref{iid-add-1} is in place. Then the degrees of freedom of the estimator $S_{\lambda, \gamma}(Z)$ is given by:
\begin{align}\label{df_explicit}
df (S_{\lambda, \gamma}(Z))=&{\mathbb E}\sum_i \Big (s'_{\lambda,\gamma}(\sigma_i)+|m-n|\frac{s_{\lambda,\gamma}(\sigma_i)}{\sigma_i} \Big)+ 2{\mathbb E}\sum_{i \neq j}\frac{\sigma_is_{\lambda,\gamma}(\sigma_i)}{\sigma^2_i-\sigma^2_j},
\end{align}
where the $\sigma_i$'s are the singular values of $Z$.
\end{proposition}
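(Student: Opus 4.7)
The plan is to invoke Stein's divergence identity~\eqref{eq-df-div1} and then compute the divergence of the map $Z \mapsto S_{\lambda,\gamma}(Z)$ explicitly using perturbation theory for the SVD. Weak differentiability, which is required for~\eqref{eq-df-div1} to apply, is supplied by Proposition~\ref{prop-liphs-1}: the hypothesis $1+\phi_P>0$ makes $S_{\lambda,\gamma}$ globally Lipschitz in $Z$, and Rademacher's theorem then yields almost-everywhere differentiability. Under the Gaussian model~\eqref{iid-add-1}, $Z$ almost surely has distinct, strictly positive singular values, so the SVD is smooth (up to the usual sign conventions for the columns of $U$ and $V$) in a neighborhood of $Z$ and an ordinary Jacobian exists at every point of a full-measure set.

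First I would write $S_{\lambda,\gamma}(Z) = \sum_{i=1}^{\min(m,n)} s_{\lambda,\gamma}(\sigma_i)\, u_i v_i'$ and differentiate this expression at a generic $Z$ along a direction $H$, using the classical first-order formulas $d\sigma_i = u_i'(dZ) v_i$, together with the standard perturbation expansions for $du_i$ and $dv_i$ in terms of $\{u_j,v_j\}$ and the singular-value gaps $\sigma_i^2-\sigma_j^2$. Specializing to $H = e_p e_q'$ (the matrix with a single $1$ in position $(p,q)$) and reading off the $(p,q)$ entry of the resulting differential produces $\partial S_{\lambda,\gamma}(Z)_{pq}/\partial Z_{pq}$; summing over $p,q$ then assembles the divergence.

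Next I would group the resulting expression into three pieces matching the three summands in~\eqref{df_explicit}. The piece from varying $\sigma_i$ with the singular vectors held fixed collects to $\sum_i s'_{\lambda,\gamma}(\sigma_i)$. The pieces from the rotations $du_i,dv_i$ among paired singular triples produce the off-diagonal terms $2\sum_{i\neq j}\sigma_i s_{\lambda,\gamma}(\sigma_i)/(\sigma_i^2-\sigma_j^2)$, where the factor of $2$ arises because $du_i$ and $dv_i$ contribute symmetric pieces that combine through $u_iv_i'$. Finally, when $m\neq n$, the full (non-economy) SVD involves $|m-n|$ additional orthonormal directions in the larger of $\mathbb{R}^m, \mathbb{R}^n$ lying outside the span of the paired singular vectors; their couplings with each paired direction contribute $|m-n|\sum_i s_{\lambda,\gamma}(\sigma_i)/\sigma_i$. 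Taking expectation under~\eqref{iid-add-1} yields~\eqref{df_explicit}.

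The main obstacle I expect is the bookkeeping for the non-square case. The square case $m=n$ is the setting of~\cite{candes2013unbiased} for soft-thresholding; the only new input in that case is to replace $s_{\lambda,\gamma}(\sigma)=(\sigma-\lambda)_+$ by a generic $s_{\lambda,\gamma}$ whose derivative exists by assumption, which is essentially mechanical. For $m\neq n$, however, care is needed either to embed $Z$ into a square matrix and exploit invariance, or to expand using the full SVD with the extra orthonormal directions written out explicitly, and then verify that these extra directions aggregate cleanly into the factor $|m-n|\,s_{\lambda,\gamma}(\sigma_i)/\sigma_i$ for each $i$. A secondary, minor point is that $s_{\lambda,\gamma}$ may fail to be differentiable at isolated values of its argument (for instance at $\sigma=\lambda$ for the MC+ family); since such points have Lebesgue measure zero and occur with probability zero under~\eqref{iid-add-1}, they do not affect the almost-everywhere differentiability needed by~\eqref{eq-df-div1}.
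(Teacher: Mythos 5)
Your plan is correct and follows essentially the same route as the paper, which establishes weak differentiability via the Lipschitz property of Proposition~\ref{prop-liphs-1} (valid since $1+\phi_P>0$) and then obtains the divergence formula for general rectangular matrices following the SVD-perturbation computation of \cite{candes2013unbiased}; your sketch simply spells out that computation (including the $|m-n|$ bookkeeping and the measure-zero issues) rather than citing it. No gaps beyond the routine details you already flag.
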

%%\textcolor{red}{(I feel we should define explicitly, rigorously the \textit{noisy matrix completion problem}. There might be a ``disconnection" between the generic equation (17) and its relationship to spectral thresholding operators.)}
We note that the above expression is true for any value of $1+\phi_P>0$. For the MC+ penalty function, expression~\eqref{df_explicit} holds for $\gamma > 1$. As soon as $\gamma \leq 1$, the above method of deriving \emph{df} does not apply due to the discontinuity in the map $ Z \mapsto S_{\lambda, \gamma}(Z)$. Values of $\gamma$ close to one (but larger), however, give an expression for the \emph{df} near the hard-thresholding spectral operator, which corresponds to $\gamma = 1$.
%%In fact, as $\gamma \rightarrow 1+$, using a standard Dominated Convergence theory argument it follows that

%%\paragraph{Effective degrees of freedom of spectral thresholding operators}

\begin{figure}[htb!]
\begin{center}
\includegraphics[height= 0.25\textheight, trim =0cm .5cm 1cm 1cm, clip = true]{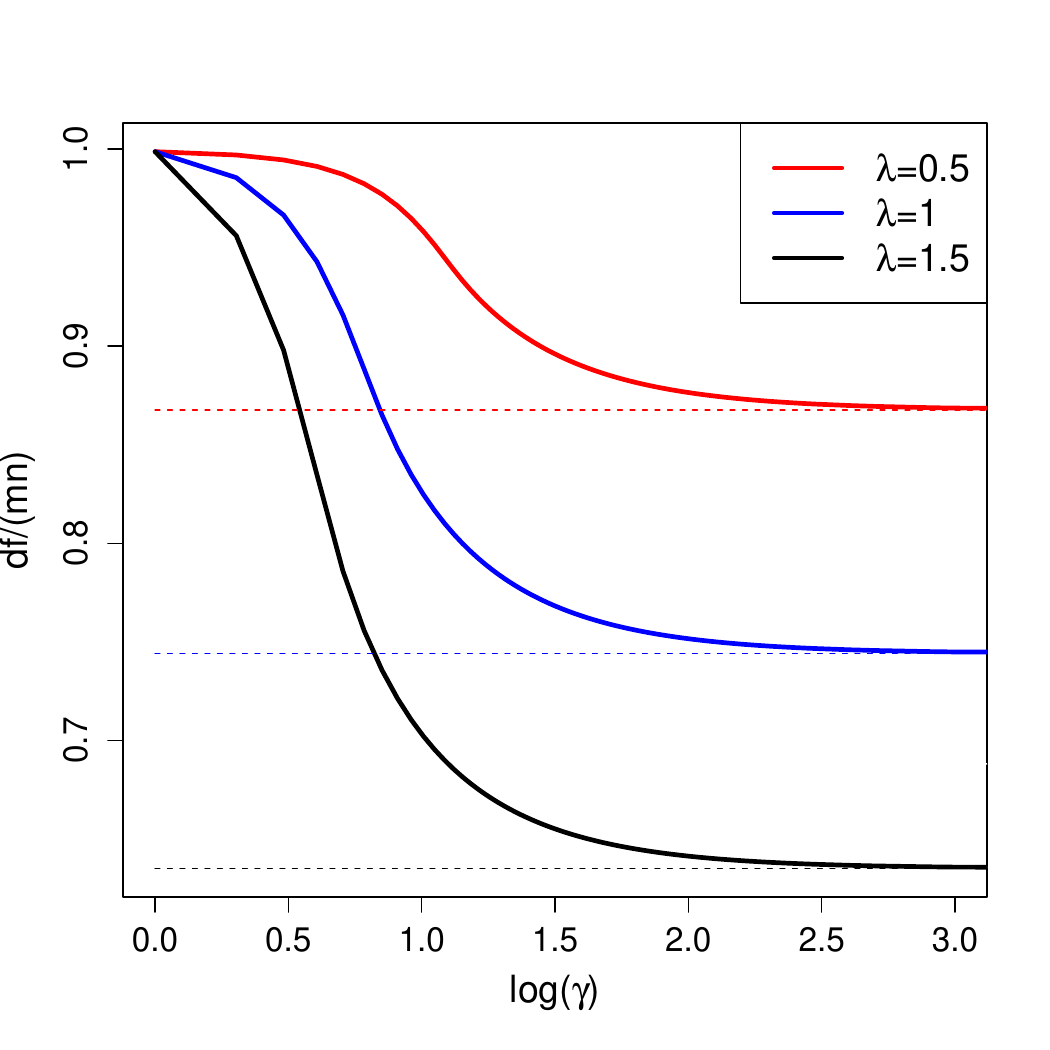}
\caption{\small{Figure showing the \emph{df} for the MC+ thresholding operator for a matrix with $m=n=10$, $\mu = 0$ and $\vv=1$. The \emph{df} profile as a function of $\gamma$ (in the log scale) is shown for three values of $\lambda$. The dashed lines correspond to the \emph{df} of the spectral soft-thresholding operator, corresponding to $\gamma=\infty$. We propose calibrating the $(\lambda, \gamma)$ grid to a $(\widetilde{\lambda},\widetilde{\gamma})$ grid such that the \emph{df} corresponding to every value of $\widetilde{\gamma}$ matches the \emph{df} of the soft-thresholding operator --- as shown in Figure~\ref{fig-df-2}.}}
\label{fig-df-1}
\end{center}
\end{figure}

To understand the behavior of the \emph{df} as a function of $(\lambda, \gamma)$, let us consider the null model with $\mu = 0$ and the MC+ penalty function. In this case, for a fixed $\lambda$ (see Figure~\ref{fig-df-1} with a fixed $\lambda>0$), 
the \emph{df} is seen to increase with smaller $\gamma$ values: the soft-thresholding function shrinks the large coefficients and sets all coefficients smaller than $\lambda$ to be zero; the more aggressive (closer to the hard thresholding operator) shrinkage operators ($s_{\lambda, \gamma}(\sigma)$) shrink less for larger values of $\sigma$ and set all coefficients smaller than $\lambda$ to zero. Thus, intuitively, the more aggressive thresholding operators should have larger \emph{df} since they do more ``fitting'' --- this is indeed observed  in Figure~\ref{fig-df-1}. \cite{MazumderEtal2011} studied the \emph{df} of the univariate thresholding operators in the linear regression problem, and observed a similar pattern in the behavior of the \emph{df} across $(\lambda, \gamma)$ values. For the linear regression problem, \cite{MazumderEtal2011} argued that it is desirable to choose a parametrization for $(\lambda, \gamma)$ such that for a fixed $\lambda$, as one moves across $\gamma$, the \emph{df} should be the same. We follow the same strategy for the spectral regularization problem considered herein --- we reparametrize a two-dimensional grid of $(\lambda, \gamma)$ values to a two-dimensional grid of $(\widetilde{\lambda},\widetilde{\gamma})$ values, such that the \emph{df} remain calibrated in the sense described above --- this is illustrated in Figure~\ref{fig-df-1} (see the horizontal dashed lines 
corresponding to the constant \emph{df} values, after calibration). Figure~\ref{fig-df-2} shows the lattice of $(\widetilde{\lambda},\widetilde{\gamma})$ after calibration. The values of $(\tilde{\lambda},\tilde{\gamma})$ on each curve induce the same \emph{df}. As $\tilde{\gamma}$ moves down (the penalty becomes more ``nonconvex"), the corresponding $\tilde{\lambda}$ (the shrinkage) has to increase to maintain the same \emph{df}.

\begin{figure}[htb!]
\begin{center}
\includegraphics[height= 0.31\textheight, trim = 0cm .5cm .1cm .1cm, clip = true]{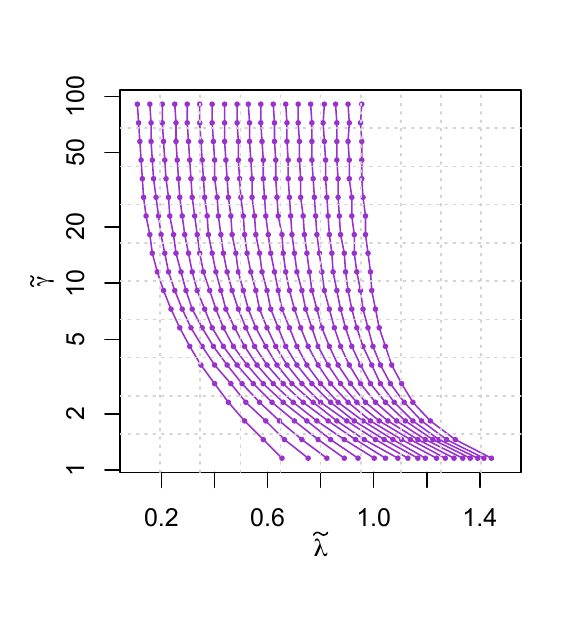}
\vspace{-0.4cm}
\caption{\small{Figure showing the calibrated $(\widetilde{\lambda}, \widetilde{\gamma})$ lattice --- for every fixed value of $\widetilde{\lambda}$, the \emph{df} of the MC+ spectral threshold operators are the same across different $\widetilde{\gamma}$ values. The \emph{df} computations have been performed on a null model using Proposition~\ref{propfive}.
%Figure showing the calibrated curves for $\alpha=0.8$ in a way such that all the pairs $(\gamma, \zeta)$ on a curve correspond to the same asymptotic df, calculated from \eqref{df_asymp}.
}}
\label{fig-df-2}
\end{center}
\end{figure}

The study of \emph{df} presented herein provides a simple and intuitive explanation about the roles played by the parameters $(\lambda, \gamma)$
 for the fully observed problem. The notion of calibration provides a new parametrization of the family of penalties. 
 From a computational viewpoint, since the general algorithmic framework presented in this paper (see Section~\ref{sec3}) computes a regularization surface using warm-starts across adjacent $(\lambda, \gamma)$ values on a two-dimensional grid; it is desirable for the adjacent values to be close --- the \emph{df} calibration also ensures this in a simple and intuitive manner.

%The \emph{df} also provides interesting statistical insight about the behavior of the spectral thresholding operators.
%We will follow a similar strategy here. Note however, that the spectral thresholding operators are more esoteric geometric objects than vector thresholding,
%and computing the \emph{df} of  such operators $S_{\lambda, \gamma}(Z)$ are more challenging.

\paragraph{Computation of \emph{df}:} The \emph{df} estimate as implied by Proposition~\ref{propfour} depends only upon singular values (and not the singular vectors) of a matrix and can hence be computed with cost $O(\min \{m , n \}^2)$. The expectation can be approximated via Monte-Carlo simulation --- these computations are easy to parallelize and can be done offline. Since we compute the \emph{df} for the null model, for larger values of $m,n$ we recommend using the Marchenko-Pastur law for
iid Gaussian matrix ensembles to approximate the \emph{df} expression~\eqref{df_explicit}. We illustrate the method using the MC+ penalty for $\gamma >1$. Towards this end, let us define a function on $ \beta \geq 0$
%\textcolor{red}{(in the following definition, it is better to replace the $t$ by $\zeta$, since I have used variants of $t$ in the proof of proposition 5)}:
\begin{equation*}
g_{\zeta, \gamma}(\beta)=
\begin{cases}
0, & \text{if } \sqrt{\beta} \leq \zeta \\
\frac{\gamma}{\gamma-1} (1- \frac{\zeta}{\sqrt{\beta}}), & \text{if } \zeta < \sqrt{\beta} \leq \zeta\gamma \\
1, & \text{if } \sqrt{\beta} > \zeta\gamma \,.
\end{cases}
\end{equation*}

%The following proposition provides a method to approximate the \emph{df} of the spectral threshold operator for the null model, using the Marchenko-Pastur 
%distribution--- see Lemma~\ref{lemma3}, Section~\ref{app} and also~\citep{BaiSilverstein2010}.
For the following proposition, we will assume (for simplicity) that $m \geq n$.
\begin{proposition} \label{propfive}
Let $m, n \rightarrow \infty$ with $\frac{n}{m} \rightarrow \alpha \in (0, 1]$, then under the model $Z_{ij} \overset{\text{iid}}{\sim}N(0,1)$, we have
\begin{align*}\label{df_asymp}
\lim_{m,n\rightarrow \infty}\;\; \frac{df( S_{\lambda, \gamma}(Z))}{mn} = 
\begin{cases}
0 & \text{if }   \frac{\lambda}{\sqrt{m}} \rightarrow \infty \\
(1-\alpha) \mathbb{E} g_{\zeta, \gamma}(T_1) + \alpha \mathbb{E} \left( \frac{T_1g_{\zeta, \gamma}(T_1)-T_2g_{\zeta, \gamma}(T_2)}{T_1-T_2} \right) & \text{if } \frac{\lambda}{\sqrt{m}} \rightarrow \zeta \\
1 & \text{if } \frac{\lambda}{\sqrt{m}}\rightarrow 0 
\end{cases}
\end{align*}
where $S_{\lambda, \gamma}(Z)$ is the thresholding operator corresponding to the MC+ penalty with $\lambda \geq0, \gamma>1$ and
the expectation is taken with respect to $T_1$ and $T_2$ independently generated from the Marchenko-Pastur distribution (see Lemma~\ref{lemma3}, Section~\ref{app}).
%%%\textcolor{red}{(in the above formula, it is better to replace the $t$ by $\zeta$, since I have used variants of $t$ in the proof of proposition 5)(should we have $\lambda=\lambda_n$ to follow the notation in the proof?)}
\begin{proof}
For a proof, see Section~\ref{proof-prop-five}.
\end{proof}
\end{proposition}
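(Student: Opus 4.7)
The plan is to start from the explicit $df$ formula in Proposition~\ref{propfour}, specialize it to the MC+ penalty, symmetrize the off-diagonal double sum, and then pass to the limit using the Marchenko--Pastur (MP) law for the rescaled squared singular values $T_i := \sigma_i^2/m$.

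First, set $\tilde\lambda := \lambda/\sqrt m$. A direct inspection of~\eqref{mcp_thresh} gives $s_{\lambda,\gamma}(\sigma)/\sigma = g_{\tilde\lambda,\gamma}(\sigma^2/m)$, while $s'_{\lambda,\gamma}\in\{0,\gamma/(\gamma-1),1\}$ is uniformly bounded; hence $\sum_i s'_{\lambda,\gamma}(\sigma_i)=O(n)$ contributes $o(1)$ to $df/(mn)$. Pairing $(i,j)\leftrightarrow(j,i)$ collapses the off-diagonal double sum in~\eqref{df_explicit} to
\begin{equation*}
2\sum_{i\neq j}\frac{\sigma_i s_{\lambda,\gamma}(\sigma_i)}{\sigma_i^2-\sigma_j^2}
=\sum_{i\neq j}\frac{T_ig_{\tilde\lambda,\gamma}(T_i)-T_jg_{\tilde\lambda,\gamma}(T_j)}{T_i-T_j}
=:\sum_{i\neq j}H_{\tilde\lambda}(T_i,T_j),
\end{equation*}
where $H_{\tilde\lambda}$ is the divided difference of $t\mapsto tg_{\tilde\lambda,\gamma}(t)$. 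For $\gamma>1$ this map is Lipschitz with a constant depending only on $\gamma$ (not on $\tilde\lambda$), so $H_{\tilde\lambda}$ is uniformly bounded and continuous off the diagonal. Using $|m-n|=m-n$, formula~\eqref{df_explicit} reduces, up to $o(1)$, to
\begin{equation*}
\frac{df(S_{\lambda,\gamma}(Z))}{mn} = \frac{m-n}{m}\cdot\frac{1}{n}\,\mathbb E\sum_{i=1}^n g_{\tilde\lambda,\gamma}(T_i) + \frac{n(n-1)}{mn}\cdot\frac{1}{n(n-1)}\,\mathbb E\sum_{i\neq j}H_{\tilde\lambda}(T_i,T_j).
\end{equation*}

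Next, by Lemma~\ref{lemma3} the empirical distribution of $T_1,\dots,T_n$ converges weakly (a.s.) to the MP law with parameter $\alpha$, supported on $[(1-\sqrt\alpha)^2,(1+\sqrt\alpha)^2]$. In the intermediate regime $\tilde\lambda\to\zeta\in(0,\infty)$, $g_{\tilde\lambda,\gamma}\to g_{\zeta,\gamma}$ and $H_{\tilde\lambda}\to H_\zeta$ uniformly on compact sets off the diagonal; together with their uniform boundedness, dominated convergence yields $\frac{1}{n}\sum_i g_{\tilde\lambda,\gamma}(T_i)\to\mathbb E\,g_{\zeta,\gamma}(T_1)$ and $\frac{1}{n(n-1)}\sum_{i\neq j}H_{\tilde\lambda}(T_i,T_j)\to\mathbb E\,H_\zeta(T_1,T_2)$, with $T_1,T_2$ iid MP. Since $(m-n)/m\to 1-\alpha$ and $n(n-1)/(mn)\to\alpha$, this produces the middle case of~\eqref{df_asymp}. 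For $\tilde\lambda\to\infty$, $g_{\tilde\lambda,\gamma}(T)\to 0$ pointwise, so $|g|\leq 1$ and dominated convergence give both terms $\to 0$; for $\tilde\lambda\to 0$, $g_{\tilde\lambda,\gamma}(T)\to 1$ pointwise, hence $H_{\tilde\lambda}\to 1$ pointwise as well, yielding the total $(1-\alpha)+\alpha=1$.

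The main technical obstacle is upgrading weak convergence of the empirical spectral measure to convergence in expectation of the $U$-statistic--type double sum $\frac{1}{n^2}\sum_{i\neq j}H_{\tilde\lambda}(T_i,T_j)$. This requires bounded-Lipschitz regularity of $H_\zeta$ on a neighbourhood of the MP bulk, where the hypothesis $\gamma>1$ enters crucially (ensuring that the underlying thresholding map is continuous and that its divided difference is a bounded continuous kernel), together with standard edge-eigenvalue tail bounds for Wishart matrices to neutralize the vanishing probability that some $T_i$ escapes a fixed neighbourhood of the bulk. Both ingredients are classical; combined with the well-known concentration of linear spectral statistics of Wishart matrices, they lift the pointwise convergence to the required expectation-level convergence and finish the proof.
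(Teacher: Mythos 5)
Your proposal is correct and follows essentially the same route as the paper: rewrite the formula of Proposition~\ref{propfour} in terms of the empirical spectral measure of $Z'Z/m$, use the identity $s_{\lambda,\gamma}(\sigma)/\sigma=g_{\lambda/\sqrt m,\gamma}(\sigma^2/m)$, symmetrize the off-diagonal sum into the bounded divided-difference kernel, and pass to the limit via the Marchenko--Pastur law in each of the three regimes. The only deviation is your closing paragraph: the edge-eigenvalue tail bounds and concentration of linear spectral statistics you invoke are not needed, since the kernel is globally bounded (by roughly $\gamma/(\gamma-1)$, as $t\mapsto tg_{\zeta,\gamma}(t)$ is Lipschitz on all of $[0,\infty)$), so almost-sure weak convergence of the empirical spectral distribution together with the continuous mapping theorem (the kernel's discontinuity set is null under the product MP law) and the dominated convergence theorem over the Gaussian randomness already yield the expectation-level limit --- which is exactly how the paper concludes.
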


Note that the variance $\vv^2$ in model~\eqref{iid-add-1} can always be assumed to be one (by adjusting the value of the tuning parameter accordingly\footnote{This follows from the simple observation that $s_{a\lambda, \gamma}(ax)=a s_{\lambda, \gamma}(x)$ and $s'_{a\lambda, \gamma}(ax)=s'_{\lambda, \gamma}(x)$.}).

%As we mentioned in Section \ref{sec21}, the parameter $\lambda$ controls the extent to which we fit the data. Under the asymptotic setting of Proposition \ref{propfive}, we characterize the precise contribution of $\lambda$ through \eqref{df_asymp}. Large $\lambda_n=\Omega(\sqrt{m})$ leads to small $df = o(mn)$, and small $\lambda_n=o(\sqrt{m})$ gives large $df=\Theta(mn)$. This is also true in the regime $\lambda_n=t\sqrt{m}$. Figure \ref{fig3} clearly shows the monotone decreasing dependency of \emph{df} on $t$. Moreover, for fixed $t$, increasing $\gamma$ reduces \emph{df}$(\hat{M}_{\lambda,\gamma})$. Since $\gamma$ controls the nonconvexity of the family of penalties, this implies that a large amount of nonconvexity results in a large \emph{df}$(\hat{M}_{\lambda,\gamma})$.
%

%\begin{figure*}[t]
%\begin{center}
%\includegraphics[width=15.6cm, height=8cm]{dfplot.pdf}
%\caption{\small Degrees of freedom of $\hat{M}_{\lambda,\gamma}$ as a function of $t$, across different $\gamma$ and $\alpha$.} \label{fig3}
%\end{center}
%\end{figure*}

\section{The \NSI~Algorithm} \label{sec3}

In this section, we present algorithm \NSI. The algorithm is inspired by an EM-stylized procedure, similar to \SI~\citep{MazumderEtal2010}, but has important 
innovations, as we will discuss shortly. It is helpful to recall that, for 
observed data: $\mathcal{P}_{\Omega}(Y)$, the algorithm \SI~relies on the following update sequence
%%\textcolor{red}{(we use $Y$ in the introduction to denote observations, use $Z$ in section 2 and equation (17), and use $Y$ again in this section, maybe we should keep it consistent?)}:
\begin{equation}\label{si-update-2}
X_{k+1} = S_{\lambda, \ell_{1}} \left (  \mathcal{P}_{\Omega}(Y) + \mathcal{P}_{\Omega}^\perp(X_{k})\right),
\end{equation}
which can be interpreted as computing the nuclear norm regularized spectral thresholding operator for the following ``fully observed'' problem:
\begin{equation*}
\begin{aligned}
X_{k+1} \in \argmin_{X}  \left \{ \frac12 \left\| X -  \left (  \mathcal{P}_{\Omega}(Y) + \mathcal{P}_{\Omega}^\perp(X_{k})\right) \right\|_{F}^2 + \lambda \| X \|_{*}\right \},
\end{aligned}
\end{equation*}
where, the missing entries are filled in by the current estimate, i.e., $\mathcal{P}_{\Omega}^\perp(X_{k})$. We refer the reader to~\cite{MazumderEtal2010} for a detailed study of the algorithm. 
\cite{MazumderEtal2010} suggest, in passing, the notion of extending \SI~to more general thresholding operators; however, such generalizations were not pursued by the authors. In this paper, we present a thorough investigation about nonconvex generalized thresholding operators --- we study their convergence properties, scalability aspects and demonstrate their superior statistical performance across a wide range of numerical experiments. 

Update~\eqref{si-update-2} suggests a natural generalization to more general nonconvex penalty functions, by simply replacing the  spectral soft thresholding operator $S_{\lambda, \ell_{1}}(\cdot)$ with more general spectral operators~$S_{\lambda, \gamma}(\cdot)$:
\begin{equation}\label{nsi-update-2}
X_{k+1}  =  S_{\lambda, \gamma} \left (  \mathcal{P}_{\Omega}(Y) + \mathcal{P}_{\Omega}^\perp(X_{k})\right).
\end{equation}
While the above update rule works quite well in our numerical experiments, it enjoys limited computational guarantees, as suggested by our convergence analysis in Section~\ref{sec:conv-ana}. We thus propose and study a seemingly minor generalization of the rule~\eqref{nsi-update-2} --- this modified rule enjoys superior finite time convergence rates to a first order stationary point. We develop our algorithmic framework below. 

Let us define the following function:
\begin{align}
%%\begin{aligned}
F_{\ell}(X; X_{k}):= \frac12 \left\|  \mathcal{P}_{\Omega}(X - Y) \right\|_F^2+ \frac12\| \mathcal{P}_{\Omega}^\perp ( X - X_{k}) \|_{F}^2 + \frac{\ell}{2} \| X - X_{k} \|_{F}^2 + \sum_{i=1}^{\min \{ m, n \}} P(\sigma_i(X);\lambda,\gamma),\label{def-xkp1-1}
%%%\end{aligned}
\end{align}
for $\ell  \geq 0$. Note that $F_{\ell}(X; X_{k})$ majorizes the objective function $f(X)$ defined in \eqref{nonconv-prob-1}, i.e., $ F_{\ell}(X; X_{k}) \geq f(X)$ for any $X$ and $X_{k}$, with equality holding at $X=X_{k}$. In an attempt to obtain a minimum of Problem~\eqref{nonconv-prob-1}, we propose to iteratively minimize $F_{\ell}(X; X_{k})$, an upper bound to $f(X)$, to obtain $X_{k+1}$ --- more formally, this leads to the following update sequence:
\begin{equation}\label{def-xk+1-ell}
X_{k+1} \in \argmin_{X} \; F_{\ell}(X; X_{k}).
\end{equation}
Note that $X_{k+1}$ is easy to compute; by some rearrangement of~\eqref{def-xkp1-1} we see:
\begin{equation}\label{def-Xk+1-S}
\begin{aligned}
X_{k+1} \in \argmin_{X}\;  \underbrace{\frac{\ell+1}{2} \| X - \widetilde{X}_{k} \|_{F}^2 + \sum_{i=1}^{\min \{ m, n \}}
P(\sigma_{i}(X); \lambda,\gamma)}_{:=S^{\ell}_{\lambda, \gamma} \left( \widetilde{X}_{k} \right)},
\end{aligned}
\end{equation}
where $\widetilde{X}_{k} = \left( \mathcal{P}_{\Omega}(Y) + \mathcal{P}_{\Omega}^\perp(X_{k}) + \ell X_{k}\right)/(\ell+1)$. Note that~\eqref{def-Xk+1-S} is a minor modification of~\eqref{nsi-update-2} --- in particular, if $\ell = 0$, then these two update rules coincide.

The sequence $X_{k}$ defined via~\eqref{def-Xk+1-S} has desirable convergence properties, as we discuss in Section~\ref{sec:conv-ana}. In particular, as $k \rightarrow \infty$, the sequence reaches (in a sense that will be made more precise later) a first order stationary point for Problem~\eqref{nonconv-prob-1}. We also provide a finite time 
convergence analysis of the update sequence~\eqref{def-Xk+1-S}.

We intend to compute an entire regularization surface of solutions to Problem~\eqref{nonconv-prob-1} over a two-dimensional grid of $(\lambda, \gamma)$-values, using warm-starts. We take the MC+ family of functions as a running example, with $(\lambda, \gamma) \in \{ \lambda_1 > \lambda_2 > \dots > \lambda_N\} \times \{\infty := \gamma_{1} > \gamma_{2} > \ldots > \gamma_{M} \}$. At the beginning, we compute a path of solutions for the nuclear norm penalized problem, i.e., Problem~\eqref{nonconv-prob-1} with $\gamma= \infty$ on a grid of $\lambda$ values. For a fixed value of $\lambda$, we compute solutions to Problem~\eqref{nonconv-prob-1} for smaller values of $\gamma$, gradually moving away from the convex problems. In this continuation scheme, we found the following strategies useful: 
\begin{itemize}
\item For every value of $(\lambda_{i}, \gamma_{j})$, we apply two copies of the iterative scheme~\eqref{def-xk+1-ell} initialized with solutions obtained from
its two neighboring points $(\lambda_{i-1},\gamma_j)$ and  $(\lambda_i,\gamma_{j-1})$. From these two candidates, we select the one that 
leads to a smaller value of the objective function $f(\cdot)$ at $(\lambda_{i}, \gamma_{j})$.
%\textcolor{red}{(and use it as initialized value/warm up for current computation...?)(which objective function, $f$ or $F_{\ell}$?)}. 
\item Instead of using a two-dimensional rectangular lattice, one can also use the recalibrated lattice, suggested in Section~\ref{sec-edf-spec-ops}, as the two-dimensional grid of tuning parameters. 
\end{itemize}
The algorithm outlined above, called \NSI~is summarized as Algorithm~\ref{aone}.

\begin{algorithm}[htb!]
\caption{\textsc{NC-Impute}} \label{aone}
%\begin{algorithmic}[1]
\begin{itemize}
\item[1.] Input: A search grid $\,\lambda_1 > \lambda_2 > \dots > \lambda_N; \; +\infty:= \gamma_1 > \gamma_2 > \dots > \gamma_M$. Tolerance $\varepsilon$.
\item[2.] Compute solutions $\hat{X}_{\lambda_i,\gamma_1}$ for $i=1, \ldots, N$, for the nuclear norm regularized problem.
%\vspace{0.2cm}
\item[3.] For every $(\gamma, \lambda) \in \{ \gamma_{2}, \ldots, \gamma_{M}\} \times \{ \lambda_{1}, \ldots, \lambda_N\} $:
%\begin{itemize}
%\item[] For every $\lambda \in \{ \lambda_{1}, \ldots, \lambda_N\}$, compute till convergence:
\begin{itemize}
\item[(a)] Initialize $$X^{\text{old}}=\arg\min\limits_{X}  \left \{ f(X), X \in \left \{ \hat{X}_{\lambda_{i-1},\gamma_j}, \; \hat{X}_{\lambda_i,\gamma_{j-1}} \right\}  \right \}.$$
\item[(b)] Repeat until convergence, i.e., $\|X^{\text{new}}-X^{\text{old}}\|^2_F < \varepsilon  \|X^{\text{old}}\|^2_F$:
\begin{itemize}
\item[(i)] Compute $X^{\text{new}} \in \argmin\limits_{X} \; F_{\ell}(X; X^{\text{old}}).$ 
%\textcolor{red}{(should we have both $f$ and $F_{\ell}$ in the definition of \textsc{NC-Impute}?)}
%$X^{\text{new}} \leftarrow \mathscr{S}_{\lambda_i,\gamma_j}(P_{\Omega}(Y)+P_{\Omega^{\perp}}(X^{\text{old}}))$.
\item[(ii)]Assign $X^{\text{old}} \leftarrow X^{\text{new}}$.
\end{itemize}
\item[(c)] Assign $\hat{X}_{\lambda_i,\gamma_j} \leftarrow X^{\text{new}}$.
\end{itemize}
%\end{itemize}
\item[4.] Output: $\hat{X}_{\lambda_i,\gamma_j}$ for $i =1, \ldots,N$, $j = 1, \ldots, M$.
\end{itemize}
%\end{algorithmic}
\end{algorithm}

We now present an elementary convergence analysis of the update sequence~\eqref{def-Xk+1-S}. Since the problems under investigation herein are nonconvex, our analysis requires new ideas and techniques beyond those used in~\cite{MazumderEtal2010} for the convex nuclear norm regularized problem.

% he update becomes a nonconvex thresholding analogue of \SI:
%$X_{k+1} = S \left(  P_{\Omega}(Y) + P_{\Omega}^\perp(X_{k}) \right).$
%We include a general $\ell>0$, because it facilitates deriving computational guarantees for our algorithm.

\subsection{Convergence Analysis}\label{sec:conv-ana}

By the definition of $X_{k+1}$ we have that:
\begin{equation*}\label{ineq-ell-1}
F_{\ell}(X_{k+1}; X_{k}) = \min_{X} \; F_{\ell}(X; X_{k}) \leq  F_{\ell}(X_{k}; X_{k}) = f(X_{k}).
\end{equation*}
Let us define the quantities:
\begin{equation*}\label{mu-ell-defn}
\nu(\ell):=1 + \phi_P + \ell \;\;\; \text{ and } \;\;\; \nu^{\dagger}(\ell) := \max \left \{ \nu(\ell), 0  \right\},
\end{equation*}
where, if $\nu(\ell) \geq 0$, then the function $ X \mapsto F_{\ell}(X; X_{k})$ is $\nu(\ell)$-strongly convex. In particular, from~\eqref{def-xk+1-ell}, it follows that $\nabla F_{\ell}(X_{k+1}; X_{k})$, a subgradient of the map $ X \mapsto F_{\ell}(X; X_{k})$ (evaluated at $X_{k+1}$) equals zero. We thus have:
\begin{equation}\label{str-conv-ineq-0}
\begin{aligned}
F_{\ell}(X_{k}; X_{k}) - F_{\ell}(X_{k+1}; X_{k})  \geq \frac{\nu(\ell)}{2} \|  X_{k+1} - X_{k} \|_{F}^2.
\end{aligned}
\end{equation}
Now note that, by the definition of $X_{k+1}$, we always have: $F_{\ell}(X_{k}; X_{k}) \geq F_{\ell}(X_{k+1}; X_{k}),$ which combined with~\eqref{str-conv-ineq-0} 
leads to (replacing $\nu(\ell)$ by $\nu^\dagger(\ell)$):
\begin{equation}\label{str-conv-ineq-1}
\begin{aligned}
F_{\ell}(X_{k}; X_{k}) - F_{\ell}(X_{k+1}; X_{k})  \geq \frac{\nu^{\dagger}(\ell)}{2} \|  X_{k+1} - X_{k} \|_{F}^2.
\end{aligned}
\end{equation}
In addition, we have:
\begin{align}
F_{\ell}(X_{k+1}; X_{k}) =& \frac12 \left\|  \mathcal{P}_{\Omega}(X_{k+1} - Y) \right\|_F^2+ \sum_{i=1}^{\min \{ m, n \}} P(\sigma_i(X_{k+1});\lambda,\gamma) \nonumber \\
& +    \frac12\| \mathcal{P}_{\Omega}^\perp ( X_{k+1} - X_{k}) \|_{F}^2 +  \frac{\ell}{2} \| X_{k+1} - X_{k} \|_{F}^2 \nonumber \\
= & f(X_{k+1}) + \frac12\| \mathcal{P}_{\Omega}^\perp ( X_{k+1} - X_{k}) \|_{F}^2 +  \frac{\ell}{2} \| X_{k+1} - X_{k} \|_{F}^2. \label{line-bb-1}
\end{align}
Combining~\eqref{str-conv-ineq-1} and~\eqref{line-bb-1}, and observing that $F_{\ell}(X_{k};X_{k}) = f(X_k)$, we have:
\begin{align}
\label{suff-decrease-1}
f(X_{k}) - f(X_{k+1}) \geq&  \frac{\nu^{\dagger}(\ell)}{2} \|  X_{k+1} - X_{k} \|_{F}^2 +  \frac{\ell}{2} \| X_{k+1} - X_{k} \|_{F}^2 + \frac12\| \mathcal{P}_{\Omega}^\perp ( X_{k+1} - X_{k}) \|_{F}^2 \nonumber \\
= & \underbrace{\frac{\nu^{\dagger}(\ell) + \ell}{2} \|  X_{k+1} - X_{k} \|_{F}^2 + \frac12\| \mathcal{P}_{\Omega}^\perp ( X_{k+1} - X_{k}) \|_{F}^2}_{:= \Delta_{\ell} (X_{k}; X_{k+1})}.
\end{align}
Since $\Delta_{\ell} (X_{k}; X_{k+1}) \geq 0$, the above inequality immediately implies that $ f(X_{k}) \geq f(X_{k+1})$ for all $k$; and the
improvement in objective values is at least as large as the quantity $\Delta_{\ell} (X_{k}; X_{k+1})$. 
The term $\Delta_{\ell} (X_{k}; X_{k+1})$ is a measure of progress of the algorithm, as formalized by the following proposition.
\begin{proposition}\label{prop-fixed-pt1}
{\bf {(a)}}: Let $\nu^{\dagger}(\ell) + \ell >0$ and for any $X_{a}$, let us consider the update $X_{a+1} \in \argmin_{X} F_{\ell}(X; X_{a})$. Then the following are equivalent: 
%\textcolor{red}{(Fix equation out of margin)}
\begin{itemize}
\item[(i)] $f(X_{a+1}) = f(X_{a})$
\item[(ii)]  $\Delta_{\ell} (X_{a}; X_{a+1})=0$
\item[(iii)] $X_{a}$ is a fixed point, i.e., $X_{a+1} = X_{a}$.
\end{itemize}

{\bf {(b)}}: If $\nu^{\dagger}(\ell), \ell=0$ and $\Delta_{\ell} (X_{a}; X_{a+1})=0$ then
%%\textcolor{red}{(part(b) is not sufficient and necessary statement? we should add extra sufficient condition $\Delta_{\ell} (X_{a}; X_{a+1})=0$?)} then 
$X_{a+1}$ is a fixed point.
\end{proposition}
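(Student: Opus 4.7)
The plan is to leverage the sufficient decrease inequality~\eqref{suff-decrease-1}, together with the explicit form of the update in~\eqref{def-Xk+1-S}, which shows that $X_{a+1}$ depends on $X_a$ only through $\widetilde{X}_{a} = (P_{\Omega}(Y) + P_{\Omega}^\perp(X_{a}) + \ell X_{a})/(\ell+1)$. Part (a) will follow by chasing a cyclic chain of implications, while part (b) requires a separate argument since strong convexity of the majorant is lost.

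For part (a), I would show (i) $\Rightarrow$ (ii) $\Rightarrow$ (iii) $\Rightarrow$ (i). For (i) $\Rightarrow$ (ii), assuming $f(X_{a+1}) = f(X_{a})$, the descent bound~\eqref{suff-decrease-1} gives $\Delta_{\ell}(X_{a}; X_{a+1}) \leq 0$; combined with the manifest nonnegativity of $\Delta_{\ell}$ (both its summands are squared norms with nonnegative coefficients once one takes $\nu^{\dagger}(\ell) \geq 0$), this forces $\Delta_{\ell}(X_{a}; X_{a+1}) = 0$. For (ii) $\Rightarrow$ (iii), the assumption $\nu^{\dagger}(\ell) + \ell > 0$ means the coefficient of $\|X_{a+1} - X_{a}\|_{F}^{2}$ in $\Delta_{\ell}$ is strictly positive, and setting $\Delta_{\ell}$ to zero immediately yields $X_{a+1} = X_{a}$. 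The implication (iii) $\Rightarrow$ (i) is immediate.

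For part (b), the case $\ell = 0$ and $\nu^{\dagger}(\ell) = 0$, the equation $\Delta_{0}(X_{a}; X_{a+1}) = 0$ only supplies $P_{\Omega}^{\perp}(X_{a+1} - X_{a}) = 0$, so that $P_{\Omega}^{\perp}(X_{a+1}) = P_{\Omega}^{\perp}(X_{a})$; this does not directly yield $X_{a+1} = X_{a}$. However, because the update only sees $X_{a}$ through $\widetilde{X}_{a} = P_{\Omega}(Y) + P_{\Omega}^{\perp}(X_{a})$, we get $\widetilde{X}_{a+1} = \widetilde{X}_{a}$, hence $\argmin_{X} F_{0}(X; X_{a+1}) = \argmin_{X} F_{0}(X; X_{a})$, and since $X_{a+1}$ lies in the latter set it also lies in the former, i.e., $X_{a+1}$ is itself a fixed point of the update map. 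The main subtlety, and the only obstacle I anticipate, is precisely this loss of uniqueness in part (b): without strong convexity the minimizer of $F_{0}(X; X_{a})$ need not be unique, so ``fixed point'' must be read as $X_{a+1}$ being an admissible output of the update started at $X_{a+1}$, rather than as any sequential uniqueness statement.
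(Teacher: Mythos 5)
Your proof is correct and follows essentially the same route as the paper: part (a) via the chain (i) $\Rightarrow$ (ii) $\Rightarrow$ (iii) $\Rightarrow$ (i) driven by the descent inequality \eqref{suff-decrease-1} and the strict positivity of $\nu^{\dagger}(\ell)+\ell$, and part (b) by observing that with $\ell=0$ the update depends on $X_{a}$ only through $P_{\Omega}^{\perp}(X_{a})$, so $\Delta_{0}(X_{a};X_{a+1})=0$ forces the update started at $X_{a+1}$ to reproduce $X_{a+1}$. Your explicit caveat that, absent strong convexity, ``fixed point'' in (b) must be read as $X_{a+1}\in\argmin_{X}F_{0}(X;X_{a+1})$ (the minimizer need not be unique) is just a more careful phrasing of the argument the paper states with the single-valued notation $S_{\lambda,\gamma}(\cdot)$.
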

\begin{proof}
Proof of Part (a):\\
We will show that (i) $\implies$ (ii) $\implies$ (iii) $\implies$ (i); by analyzing~\eqref{suff-decrease-1}. If $f(X_{a+1}) = f(X_{a})$ then $\Delta_{\ell} (X_{a}; X_{a+1})=0$. Since $\nu^{\dagger}(\ell)+\ell>0$, we have that  $X_{a+1} = X_{a}$, which trivially implies (i).

Proof of Part (b):\\
If $\nu^{\dagger}(\ell) + \ell = 0,$ Part (a) needs to be slightly modified. Note that $\Delta_{\ell} (X_{a}; X_{a+1})=0$ iff $\mathcal{P}_{\Omega}^\perp(X_{a+1}) = \mathcal{P}_{\Omega}^\perp(X_{a})$. Since $\ell = 0$, we 
have that $X_{a+2} = S_{\lambda, \gamma} \left(  \mathcal{P}_{\Omega}(Y) + \mathcal{P}_{\Omega}^\perp(X_{a+1}) \right)$. The condition $\mathcal{P}_{\Omega}^\perp(X_{a+1}) = \mathcal{P}_{\Omega}^\perp(X_{a}),$ implies 
that $$S_{\lambda, \gamma} (  \mathcal{P}_{\Omega}(Y) + \mathcal{P}_{\Omega}^\perp(X_{a+1})) = S_{\lambda, \gamma} \left( \mathcal{P}_{\Omega}(Y) + \mathcal{P}_{\Omega}^\perp(X_{a}) \right),$$ where the term on the right 
equals $X_{a+1}$. Thus, $X_{a+1}= X_{a+2} = \cdots$, i.e., $X_{a+1}$ is a fixed point. 
%%\textcolor{red}{($S$ should be $S_{\lambda,\gamma}$?)}
\end{proof}

%[{\bf Need some words about the meaning of $\Delta$}]

Since the $f(X_{k})$'s form a decreasing sequence which is bounded from below, they converge to $\hat{f}$, say --- this implies that $\Delta_{\ell} (X_{k}; X_{k+1}) \rightarrow 0$ as $k \rightarrow \infty$. Let us now consider two cases, depending upon the value of $\nu^{\dagger}(\ell) + \ell$. If $\nu^{\dagger}(\ell) + \ell > 0$, then we have $X_{k+1} - X_{k} \rightarrow 0$ as $k \rightarrow \infty$. On the other hand, if the quantities $\nu^{\dagger}(\ell)=0,\ell=0$, the conclusion needs to be modified: $\Delta_{\ell} (X_{k}; X_{k+1}) \rightarrow 0$ implies that $\mathcal{P}_{\Omega}^\perp (X_{k+1} - X_{k}) \rightarrow 0$ as $k \rightarrow \infty$.

Motivated by  the above discussion, we make the following definition of a first order stationary point for Problem~\eqref{nonconv-prob-1}.
\begin{mydef}\label{def-fo-pt1}
$X_{a}$ is said to be a first order stationary point for Problem~\eqref{nonconv-prob-1} if $\Delta_{\ell}(X_{a}; X_{a+1})=0$. $X_{a}$ is said to be an $\epsilon$-accurate first order stationary point for Problem~\eqref{nonconv-prob-1} if $\Delta_{\ell}(X_{a}; X_{a+1}) \leq \epsilon$.
%% \textcolor{red}{(is the $\epsilon$-accurate definition needed or used elsewhere?)}
\end{mydef}

\begin{proposition}\label{conv-rate-1}
The sequence $f(X_k)$ is decreasing and suppose it converges to $\hat{f}$. Then the rate of convergence of $X_k$ to this first order stationary point is given by:
\begin{equation}\label{rate-of-conv1}
\min\limits_{1\leq k \leq {\mathcal K}} \Delta_{\ell} (X_{k}; X_{k+1}) \leq \frac{1}{\mathcal K} \left(f(X_{1}) - \hat{f}\right).
\end{equation}
\end{proposition}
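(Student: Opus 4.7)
The plan is to telescope the per-iteration descent bound obtained in inequality~\eqref{suff-decrease-1} and then apply a standard averaging argument. The key observation has already been established before the proposition statement: for every $k$, the iterates satisfy $f(X_k) - f(X_{k+1}) \geq \Delta_\ell(X_k; X_{k+1})$, where $\Delta_\ell(X_k; X_{k+1}) \geq 0$. This immediately gives monotone decrease of $f(X_k)$, and since $f$ is bounded below (by zero, say), the sequence converges to some $\hat{f}$.

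First, I would sum the descent inequality $f(X_k) - f(X_{k+1}) \geq \Delta_\ell(X_k; X_{k+1})$ over $k=1, 2, \ldots, \mathcal{K}$. The right hand side gives $\sum_{k=1}^{\mathcal{K}} \Delta_\ell(X_k; X_{k+1})$, while the left hand side telescopes to $f(X_1) - f(X_{\mathcal{K}+1})$. Using the fact that $f(X_{\mathcal{K}+1}) \geq \hat{f}$ (since $f(X_k)$ is decreasing and converges to $\hat{f}$, every term in the sequence is at least $\hat{f}$), we obtain
\begin{equation}
\sum_{k=1}^{\mathcal{K}} \Delta_\ell(X_k; X_{k+1}) \;\leq\; f(X_1) - f(X_{\mathcal{K}+1}) \;\leq\; f(X_1) - \hat{f}.
\end{equation}

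The final step is the standard minimum-versus-average bound: since the sum of $\mathcal{K}$ nonnegative terms is at most $f(X_1) - \hat{f}$, the smallest of those terms is at most $\frac{1}{\mathcal{K}}(f(X_1) - \hat{f})$, which is exactly the claim in~\eqref{rate-of-conv1}. There is no serious obstacle here; the content of the proposition is really just the sufficient-decrease inequality~\eqref{suff-decrease-1} already derived for the majorization scheme, combined with a telescoping argument. The only item worth being explicit about is that the descent inequality holds for \emph{every} value of $\ell \geq 0$ (and for both cases $\nu^\dagger(\ell) + \ell > 0$ and $\nu^\dagger(\ell) = \ell = 0$ analyzed in Proposition~\ref{prop-fixed-pt1}), so the rate in~\eqref{rate-of-conv1} is valid across the full parameter regime in which \NSI~is run.
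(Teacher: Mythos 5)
Your proof is correct and follows essentially the same route as the paper's: both sum the sufficient-decrease inequality~\eqref{suff-decrease-1} over $k=1,\ldots,\mathcal{K}$, telescope the left side, bound $f(X_{\mathcal{K}+1})$ below by $\hat{f}$ using monotone convergence of $f(X_k)$, and conclude with the minimum-versus-average bound $\mathcal{K}\min_k \Delta_\ell(X_k;X_{k+1}) \leq \sum_k \Delta_\ell(X_k;X_{k+1})$. No gaps; nothing further needed.
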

\begin{proof}
The arguments presented preceding Proposition~\ref{conv-rate-1} establish that the sequence $f(X_k)$ is decreasing and converges to $\hat{f}$, say. Consider~\eqref{suff-decrease-1} for any $1 \leq k \leq {\mathcal K}$. We have that $ \Delta_{\ell}(X_{k}; X_{k+1}) \leq f(X_{k}) - f(X_{k+1})$ --- summing this inequality for $k=1, \ldots, {\mathcal K}$ we obtain:
\begin{align*}
{\mathcal K}\min\limits_{1\leq k \leq {\mathcal K}} \Delta_{\ell} (X_{k}; X_{k+1}) \leq \sum_{1\leq k\leq \mathcal K} \Delta_{\ell} (X_{k}; X_{k+1})  \leq f(X_1) - f(X_{\mathcal K + 1}) \leq f(X_{1}) - \hat{f},
\end{align*}
where in the last inequality we used the simple fact that $f(X_{k}) \downarrow \hat{f}$. Gathering the left and right parts of the above chain of inequalities leads to~\eqref{rate-of-conv1}.
\end{proof}

Proposition~\ref{conv-rate-1} shows that the sequence $X_{k}$ reaches an $\epsilon$-accurate first order stationary point within 
$K_{\epsilon} =  (f(X_{1}) - \hat{f})/\epsilon$ many iterations. The number of iterations $K_{\epsilon}$, depends upon how close the initial estimate 
$f(X_{1})$ is to the eventual solution $\hat{f}$. Since \NSI~employs warm-starts, the constant appearing in the rhs of~\eqref{rate-of-conv1} suggests that the number of iterations required to a reach 
an approximate first order stationary point is quite low --- this is indeed observed in our experiments, and this feature of using warm-starts makes our algorithm particularly attractive
from a practical viewpoint. 

%\subsubsection{Asymptotic Convergence Analysis}\label{asymp-conv-ana-1}

%\paragraph{Rank Stabilization.}
\subsubsection{Rank Stabilization}
\label{rankstable:sec}
Let us consider the thresholding function $S^{\ell}_{\lambda, \gamma}(\widetilde{X}_{k})$ defined in~\eqref{def-Xk+1-S}, which expresses $X_{k+1}$ as a function of $X_{k}$.
%It is easy to see that this thresholding operator is a spectral operator, i.e., it makes changes to the spectrum of $\widetilde{X}_{k}$ leaving the left/right singular vectors unaltered. If
Using the development in Section~\ref{sec2}, it is easy to see that the spectral operator $S^{\ell}_{\lambda, \gamma}(\widetilde{X}_{k})$ is closely tied to the
following vector thresholding operator~\eqref{tilde-xk-spe-1}, acting on the singular values of $\widetilde{X}_{k}$. Formally, for a given nonnegative vector $\widetilde{\M{x}}$, if we denote:
\begin{equation}\label{tilde-xk-spe-1}
\begin{aligned}
s^{\ell}_{\lambda, \gamma}(\widetilde{\M{x}}) \in& \argmin_{\B\alpha \M{\geq} \M{0} } \bigg \{  \frac{\ell+1}{2} \| \B\alpha - \widetilde{\M{x}} \|_{2}^2 +\sum_{i=1}^{\min \{ m, n \}} P(\alpha_{i}; \lambda,\gamma) \bigg\},
\end{aligned}
\end{equation}
then $$S^{\ell}_{\lambda, \gamma}(\widetilde{X}) = \widetilde{U} \diag(s^{\ell}_{\lambda, \gamma}(\widetilde{\M{x}}) )\widetilde{V}',$$ where $\widetilde{X}= \widetilde{U} \diag(\widetilde{\M{x}}) \widetilde{V}'$ is the SVD of $\widetilde{X}$. Thus, properties of the thresholding function $S^{\ell}_{\lambda, \gamma}(\widetilde{X})$ are closely related to those of the vector thresholding operator $s^{\ell}_{\lambda, \gamma}(\widetilde{\M{x}})$. Due to the separability of the vector thresholding operator $s^{\ell}_{\lambda, \gamma}(\widetilde{\M{x}})$, across each coordinate of $\widetilde{\M{x}}$, we denote by $s^{\ell}_{\lambda,\gamma}(\widetilde{x}_{i})$, the $i$th coordinate of $s^{\ell}_{\lambda, \gamma}(\widetilde{\M{x}})$.

We now investigate what happens to the rank of the sequence $X_k$ as defined via~\eqref{def-xk+1-ell}.
In particular, does this rank converge? We show that the rank stabilizes after finitely many iterations, under an additional assumption --- namely 
the spectral thresholding operator is discontinuous --- see Figure~\ref{fig1} for examples of discontinuous thresholding functions.

\begin{proposition}\label{prop-rank-stab-1}
Consider the update sequence $$X_{k+1} = S^{\ell}_{\lambda, \gamma}(\widetilde{X}_{k})$$ as defined in~\eqref{def-Xk+1-S}; and let $\nu^{\dagger}(\ell) + \ell >0$.
Suppose that there is a $\lambda_{S} >0$ such that, for any scalar $\widetilde{x}\geq0$, the following holds:
$s^{\ell}_{\lambda, \gamma}(\widetilde{x}) \neq 0 \imply |s^{\ell}_{\lambda, \gamma}(\widetilde{x})| > \lambda_{S}$ --- 
%\sigma \textcolor{red}{(s^{\ell}_{\lambda, \gamma}(\widetilde{x})\mbox{~the function value itself? or }\widetilde{x} \mbox{ to be consistent?})} > \lambda_{S}$,
i.e., the scalar thresholding operator $\widetilde{x} \mapsto s^{\ell}_{\lambda, \gamma}(\widetilde{x})$ is discontinuous. Then there exists an integer ${\mathcal K}^*$ such that for all $k \geq {\mathcal K}^*$, we have $\rnk(X_{k}) = r$, i.e., the rank stabilizes after finitely many iterations.
\end{proposition}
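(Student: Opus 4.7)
The plan is to combine the convergence of the iterates in Frobenius norm (which follows from the analysis leading to Proposition~\ref{conv-rate-1}) with the fact that, by the discontinuity hypothesis, every non--zero singular value of $X_{k+1}$ is bounded below by a uniform positive constant $\lambda_{S}$. Thus if the singular values of successive iterates are forced to be close, non--zero ones cannot suddenly appear or disappear.

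More concretely, I would proceed as follows. First, from inequality~\eqref{suff-decrease-1} applied to every $k$, the telescoping sum gives
\begin{equation*}
\sum_{k \geq 1} \Delta_{\ell}(X_{k}; X_{k+1}) \;\leq\; f(X_{1}) - \hat{f} \;<\; \infty,
\end{equation*}
so $\Delta_{\ell}(X_{k}; X_{k+1}) \to 0$. Since $\nu^{\dagger}(\ell) + \ell > 0$, the definition of $\Delta_{\ell}$ in~\eqref{suff-decrease-1} implies $\|X_{k+1} - X_{k}\|_{F} \to 0$. Applying the Wielandt--Hoffman inequality (as in the proof of Proposition~\ref{prop1}) coordinate--wise to the ordered singular value vectors, I get
\begin{equation*}
|\sigma_{i}(X_{k+1}) - \sigma_{i}(X_{k})| \;\longrightarrow\; 0 \quad \text{as } k \to \infty,
\end{equation*}
for every $i \in \{1,\ldots,\min\{m,n\}\}$.

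Next I invoke the discontinuity hypothesis. Because $X_{k+1} = S^{\ell}_{\lambda,\gamma}(\widetilde{X}_{k})$, the singular values of $X_{k+1}$ are exactly $s^{\ell}_{\lambda,\gamma}(\widetilde{\sigma}_{i})$ applied to the singular values of $\widetilde{X}_{k}$, so each $\sigma_{i}(X_{k+1})$ is either $0$ or strictly greater than $\lambda_{S}$. Pick ${\mathcal K}^{*}$ large enough that $|\sigma_{i}(X_{k+1}) - \sigma_{i}(X_{k})| < \lambda_{S}/2$ for every $i$ and all $k \geq {\mathcal K}^{*}$. Then a simple case analysis gives the claim: if $\sigma_{i}(X_{k}) = 0$, then $\sigma_{i}(X_{k+1})$ is either $0$ or exceeds $\lambda_{S}$, but the Frobenius bound rules out the latter; if $\sigma_{i}(X_{k}) > 0$, then $\sigma_{i}(X_{k}) > \lambda_{S}$ (again by the discontinuity property applied one step back), forcing $\sigma_{i}(X_{k+1}) > \lambda_{S}/2 > 0$. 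Consequently the support pattern $\{i : \sigma_{i}(X_{k}) > 0\}$ is invariant for $k \geq {\mathcal K}^{*}$, so $\rnk(X_{k})$ equals a constant $r$ for all such $k$.

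The only subtlety is the case analysis in the last step: one must be careful that the argument ``$\sigma_{i}(X_{k}) > 0 \Rightarrow \sigma_{i}(X_{k}) > \lambda_{S}$'' uses the fact that $X_{k}$ (for $k \geq 1$) is itself the output of the spectral operator $S^{\ell}_{\lambda,\gamma}$ applied to the previous iterate, so the discontinuity lower bound $\lambda_{S}$ applies to its non--zero singular values as well. This is automatic as long as $k \geq 2$, so taking ${\mathcal K}^{*} \geq 2$ and combining the two cases closes the proof. I do not foresee a genuine obstacle; the key conceptual ingredient is the uniform gap $\lambda_{S}$ produced by the discontinuous thresholding, which converts the asymptotic Frobenius convergence into eventual support stabilization.
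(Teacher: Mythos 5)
Your proof is correct and uses essentially the same ingredients as the paper's: the sufficient-decrease inequality~\eqref{suff-decrease-1} combined with the Wielandt--Hoffman inequality to force $\|\B\sigma_{k+1}-\B\sigma_{k}\|_2 \to 0$, and the uniform gap $\lambda_S$ on nonzero thresholded singular values to conclude the support cannot change eventually. The only difference is presentational --- you argue directly (once differences drop below $\lambda_S/2$ the support is frozen, handling both appearance and disappearance of singular values) while the paper argues by contradiction --- and your case analysis is, if anything, slightly more careful.
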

\begin{proof}
Using~\eqref{suff-decrease-1} it follows that
\begin{align*}
f(X_{k}) - f(X_{k+1}) &\geq \frac{\nu^{\dagger}(\ell) + \ell}{2} \| X_{k+1} - X_{k}\|_{F}^2 \geq \frac{\nu^{\dagger}(\ell) + \ell}{2} \| \B\sigma_{k+1} - \B\sigma_{k} \|_{2}^2,
\end{align*}
where the last inequality follows from Wielandt-Hoffman inequality~\citep{horn2012matrix} and $\B\sigma_{k}:= \B\sigma(X_{k})$ denotes the vector of singular values of $X_{k}$. Let $\mathbbm{1}(\B\sigma)$ be an indicator vector with $i$th coordinate being equal to $1(\sigma_{i}\neq 0)$. We will prove the result of rank stabilization via the method of contradiction. Suppose the rank does \emph{not} stabilize, then $\mathbbm{1}(\B\sigma_{k+1}) \neq \mathbbm{1}(\B\sigma_{k})$ for infinitely many $k$ values. Thus there are infinitely many $k'$ values such that:
$$ \| \B\sigma_{k'+1} - \B\sigma_{k'}\|_{2}^2  \geq \sigma^2_{k'+1,i}\,,$$
where $i$ is taken such that $\sigma_{k'+1,i} \neq 0$ but $ \sigma_{k',i}=0$.
% \textcolor{red}{(the notation $\sigma_{k'+1,i}$ needs to be clarified? also there exists one nonzero coordinate, but it is not necessarily from $k'+1$?)}. 
Note that by the property of the thresholding function $s^{\ell}_{\lambda, \gamma}(\cdot)$ we have that $s^{\ell}_{\lambda, \gamma}(\widetilde{x}) \neq 0 \imply 
|s^{\ell}_{\lambda, \gamma}(\widetilde{x})| > \lambda_{S}$. This implies that $\| \B\sigma_{k'+1} - \B\sigma_{k'}\|_{2}^2 \geq \lambda^2_{S}$ for infinitely many $k'$ values, which is a contradiction to the convergence: $f(X_{k+1}) - f(X_{k}) \rightarrow 0$. Thus the support of $\B\sigma(X_{k})$ converges, and necessarily after finitely many iterations --- leading to the existence of an iteration number ${\mathcal K}^*$, after which the rank of $X_{k}$ remains fixed. This completes the proof of the proposition.
\end{proof}

\begin{remark}
If $\ell =0$, the discontinuity of the thresholding operator $s_{\lambda, \gamma}(\cdot)$ (as demanded by Proposition~\ref{prop-rank-stab-1})
 occurs for the MC+ penalty function as soon as $\gamma \leq 1$. For a general $\ell>0$, discontinuity in $s^\ell_{\lambda, \gamma}(\cdot)$ occurs as soon as 
 $\gamma \leq \frac{1}{\ell+1}$. 
 %Note that the condition $\ell >0$ implies that $\nu^{\dagger}(\ell) + \ell >0$.
%$\mu(\ell) + \ell =1- \frac{1}{\gamma} + \ell > 0,$ i.e., $ \gamma > 1/(1 + \ell)$.
\end{remark}

%%\paragraph{Boundedness and Subsequence Convergence}
%%\subsubsection{Asymptotic Convergence Analysis}\label{asymp-conv-ana-1}]]\

\subsubsection{Subspace Stabilization}\label{sec:subsp-stab-1}

%\begin{mydef}\label{def-subsp-dist-1}
%Let $S_{1} \in \Re^{m \times \ell}$ and $S_{2} \in \Re^{m \times \ell}$ be two orthonormal matrices. 
%A measure of 
%distance between  the subspaces spanned by $S_{1}$ and $S_{2}$ is given by: 
%$\text{dist}(S_{1} , S_{2}) = \|  S_{1}S_{1}' - S_{2}S'_{2} \|_{2} = \| (S_{1}^\perp)' S_{2} \|_{2},$ 
%where,  $\| \cdot \|_{2}$ denotes the spectral norm of a matrix and $[S_{1}, S_{1}^\perp]$ forms an orthonormal basis for $\Re^{m}$.
%\end{mydef}
We study herein, the properties of the left and right singular subspaces associated with the sequence $X_{k}$. The \emph{stabilization} of subspaces has
important implications in the main bottleneck of the \NSI~algorithm, i.e., the SVD computations --- we discuss this in further detail in Section~\ref{sec:compute-th-op1}.
The study of singular subspace stabilization requires subtle analysis based on matrix perturbation theory~\citep{stewart1990matrix}, since the left (and right) singular subspace, corresponding to the top $r$ singular values of a matrix is not a continuous function of the matrix argument.

Towards this end, we first recall a standard notion of distance between two subspaces (with same dimension) in terms of canonical angles.
\begin{mydef}\label{def-subsp-dist-2}
Let $S_{1} \in \mathbb{R}^{m \times \ell}$ and $S_{2} \in \mathbb{R}^{m \times \ell}$ be two orthonormal matrices and let us define $S_{1}^\perp$ such that
$[S_{1}, S_{1}^\perp]$ forms an orthonormal basis for $\mathbb{R}^{m}$.
The canonical angles between these two subspaces denoted by the vector $\Theta(S_1,S_2)$ are defined as:
$$\Theta(S_1,S_2) : = \sin^{-1} \left(\sigma_1(\texttt{X}),\ldots,\sigma_\ell(\texttt{X})\right),$$
where,  $\sigma_{i}(\texttt{X}), i \leq \ell$ are the singular values of the matrix  $\texttt{X}:=(S_{1}^\perp)'S_{2}$.
\end{mydef}

We now present a result regarding perturbation of singular subspaces, taken from~\cite{stewart1990matrix}. Before stating the proposition, we introduce some notation.
Let $U_{1} \in \mathbb{R}^{m \times r_{1}}$ ($V_{1}\in \mathbb{R}^{n \times r_{1}}$) denote a matrix of the $r_{1}$ left singular vectors (respectively, right) of a matrix $A$ --- with $\Sigma_{1}$ being a diagonal matrix of the corresponding 
top $r_{1}$ singular values. Similarly, we use the notation $\widetilde{U}_{1}, \widetilde{V}_{1},\widetilde{\Sigma}_{1}$ to denote the triplet of left and right singular vectors and singular values (corresponding to the top $r_1$ singular values) for a matrix $\widetilde{A}$.
We use the following matrices
\begin{equation*}\label{Defn-R1}
R= A\widetilde{V}_{1}  - \widetilde{U}_{1}\widetilde{\Sigma}_{1} , ~~~~~~~ Q = A'\widetilde{U}_{1} - \widetilde{V}_{1}\widetilde{\Sigma}_{1},
\end{equation*}
to measure a notion of proximity between $A$ and $\widetilde{A}$. The distance between the left (and also right) singular subspaces (corresponding to the top $r_1$ singular values) of 
$A$ and $\widetilde{A}$ may be measured by the following quantity: 
\begin{align}\label{ineq-rho-r}
\rho_{r_1} (A, \widetilde{A}) :=  \max \left \{ \left\|\sin \left (\Theta(U_{1}, \widetilde{U}_{1} )\right) \right\|_2 , \left\|\sin \left(\Theta(V_{1}, \widetilde{V}_{1} ) \right) \right\|_2  \right \}, 
\end{align}
where, the notation $\| A \|_{2}$  denotes the spectral norm of $A$.
With the above notations in place, we present the following proposition~\citep{stewart1990matrix} regarding the perturbation of singular subspaces of matrices.
\begin{proposition}\label{mat-pert-svd-2}
Suppose there exists $\alpha, \delta >0$ such that 
$$\min (\widetilde{\Sigma}_{1}) \geq \alpha + \delta , \;\; \text{and} \; \; \max \left(\Sigma_{2} \right) \leq \alpha,$$
where, $\Sigma_{2}$ is a diagonal matrix with the remaining singular values of $A$.
Then,
\begin{equation*}\label{ineq-rho-r1}
\rho_{r_1} (A, \widetilde{A}) \leq  \max \left \{  \| R \|_2 , \| Q\|_2   \right \}/\delta.
\end{equation*}
\end{proposition}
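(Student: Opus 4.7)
My plan is to translate the desired bound into a coupled Sylvester-type system obtained by projecting the residual identities in~\eqref{Defn-R1} onto the appropriate orthogonal complements, and then extract the inequality directly from the two scalar separation conditions. This avoids invoking the full Davis--Kahan / Hermitian-dilation machinery and stays within the matrix-perturbation setting of Stewart's book that the proposition is quoted from.

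First, I complete $U_1$ and $V_1$ to orthogonal bases: pick orthonormal $U_2 \in \mathbb{R}^{m \times (m-r_1)}$ and $V_2 \in \mathbb{R}^{n \times (n-r_1)}$ so that $[U_1,\,U_2]$ and $[V_1,\,V_2]$ are orthogonal. Setting $X := U_2'\widetilde{U}_1$ and $Y := V_2'\widetilde{V}_1$, the canonical-angle formula in Definition~\ref{def-subsp-dist-2} yields $\|\sin\Theta(U_1,\widetilde{U}_1)\|_2 = \|X\|_2$ and $\|\sin\Theta(V_1,\widetilde{V}_1)\|_2 = \|Y\|_2$. Hence it suffices to bound $M := \max\{\|X\|_2,\|Y\|_2\}$ by $\max\{\|R\|_2,\|Q\|_2\}/\delta$.

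Second, I use the block form of the SVD of $A$ with respect to these bases. Since $U_1,V_1$ span the leading $r_1$ singular subspaces, a standard orthogonality argument gives $U_2'AV_1 = 0$ and $U_1'AV_2 = 0$, so $U_2'A = (U_2'AV_2)V_2'$; the restriction $U_2'AV_2$ has spectral norm equal to $\max(\Sigma_2) \leq \alpha$, and I will denote it $\Sigma_2$ by a mild abuse of notation. Multiplying the definitions in~\eqref{Defn-R1} on the left by $U_2'$ and $V_2'$ respectively and simplifying, I expect to obtain
\begin{equation*}
X\widetilde{\Sigma}_1 - \Sigma_2 Y = -U_2' R, \qquad Y\widetilde{\Sigma}_1 - \Sigma_2' X = -V_2' Q.
\end{equation*}
These coupled Sylvester-type identities are the algebraic heart of the argument: the matrix $\widetilde{A}$ has been eliminated and only the residuals $R$ and $Q$ appear on the right-hand side.

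Finally, I pass to spectral norms. Since $\widetilde{\Sigma}_1$ is square with $\sigma_{\min}(\widetilde{\Sigma}_1) \geq \alpha+\delta > 0$, the elementary bound $\|X\widetilde{\Sigma}_1\|_2 \geq \sigma_{\min}(\widetilde{\Sigma}_1)\,\|X\|_2$ combined with the first identity gives $(\alpha+\delta)\|X\|_2 \leq \alpha\|Y\|_2 + \|R\|_2$, and symmetrically $(\alpha+\delta)\|Y\|_2 \leq \alpha\|X\|_2 + \|Q\|_2$. Assume without loss of generality $M = \|X\|_2$; substituting $\|Y\|_2 \leq M$ into the first inequality collapses the two $\alpha M$ terms and leaves $\delta M \leq \|R\|_2 \leq \max\{\|R\|_2,\|Q\|_2\}$. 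The case $M = \|Y\|_2$ is identical with $\|Q\|_2$ in place of $\|R\|_2$, which together establish~\eqref{ineq-rho-r1}. The main subtlety I anticipate is bookkeeping the shape of $\Sigma_2$ when $m \neq n$: in the full SVD the relevant block is rectangular and the two identities involve $\Sigma_2$ and its transpose, but the spectral norm is preserved, so both are controlled by $\alpha$ and the argument above goes through unchanged.
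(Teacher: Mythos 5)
Your proposal is correct: the identities $X\widetilde{\Sigma}_1 - \Sigma_2 Y = -U_2'R$ and $Y\widetilde{\Sigma}_1 - \Sigma_2' X = -V_2'Q$ follow exactly as you claim from $U_2'A V_1 = 0$, $U_1'A V_2 = 0$, and the norm manipulation (using $\|X\widetilde{\Sigma}_1\|_2 \geq \sigma_{\min}(\widetilde{\Sigma}_1)\|X\|_2$ and the WLOG reduction to the larger of $\|X\|_2,\|Y\|_2$) cleanly yields $\delta\,\rho_{r_1}(A,\widetilde{A}) \leq \max\{\|R\|_2,\|Q\|_2\}$. The paper itself offers no proof of this proposition --- it is quoted directly from \cite{stewart1990matrix} --- and your argument is essentially the standard proof of Wedin's $\sin\Theta$ theorem given in that reference, so there is nothing to compare beyond noting that you have reconstructed the omitted argument, including the correct handling of the rectangular block $\Sigma_2$ when $m \neq n$.
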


The above proposition informs us about the proximity of the left (and also right) singular subspaces across successive iterates $X_k$, as presented in the following proposition:
\begin{proposition}\label{prop-subsp-stab1}
Suppose $\nu^\dagger(\ell) + \ell >0$ and let 
$$\delta_{k,p} =   \sigma_{p+1}(X_{k}) -  \sigma_{p}(X_{k+1}),$$  for $1\leq p \leq \min \{ m, n \}$.
If $\liminf_{k \rightarrow \infty} \delta_{k,p}>0$ then 
$\rho_{p}(X_{k}, X_{k+1}) \rightarrow 0$ as $k \rightarrow \infty$. 
\end{proposition}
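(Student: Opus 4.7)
The plan is to combine two ingredients already developed in the paper: the sufficient-decrease inequality~\eqref{suff-decrease-1}, which under $\nu^{\dagger}(\ell)+\ell>0$ forces successive iterates to become close in Frobenius norm, and the singular-subspace perturbation bound Proposition~\ref{mat-pert-svd-2}, which translates closeness of two matrices into closeness of their top-$p$ singular subspaces, provided a spectral gap is maintained.

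First, I would observe that from~\eqref{suff-decrease-1} and the fact that $f(X_{k})$ is bounded below and monotonically decreasing (so it converges, call the limit $\hat{f}$), we have
\[
\|X_{k+1}-X_{k}\|_{F}^{2} \;\leq\; \frac{2}{\nu^{\dagger}(\ell)+\ell}\bigl(f(X_{k})-f(X_{k+1})\bigr) \;\longrightarrow\; 0.
\]
Thus $\|X_{k+1}-X_k\|_{F}\to 0$, and a fortiori $\|X_{k+1}-X_k\|_{2}\to 0$.

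Next, I would apply Proposition~\ref{mat-pert-svd-2} with the identifications $A=X_{k}$, $\widetilde{A}=X_{k+1}$ and $r_{1}=p$. Let $\widetilde{U}_{1}\in\mathbb{R}^{m\times p}$, $\widetilde{V}_{1}\in\mathbb{R}^{n\times p}$ and $\widetilde{\Sigma}_{1}=\mathrm{diag}(\sigma_{1}(X_{k+1}),\dots,\sigma_{p}(X_{k+1}))$ be the top-$p$ SVD factors of $X_{k+1}$. The quantities in~\eqref{Defn-R1} then simplify because $X_{k+1}\widetilde{V}_{1}=\widetilde{U}_{1}\widetilde{\Sigma}_{1}$ and $X_{k+1}'\widetilde{U}_{1}=\widetilde{V}_{1}\widetilde{\Sigma}_{1}$, giving
\[
R \;=\; (X_{k}-X_{k+1})\widetilde{V}_{1}, \qquad Q \;=\; (X_{k}-X_{k+1})'\widetilde{U}_{1}.
\]
Since $\widetilde{U}_{1},\widetilde{V}_{1}$ have orthonormal columns, both $\|R\|_{2}$ and $\|Q\|_{2}$ are bounded above by $\|X_{k}-X_{k+1}\|_{2}\leq\|X_{k}-X_{k+1}\|_{F}$, which tends to zero by the previous step.

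Finally, I would verify the gap hypothesis of Proposition~\ref{mat-pert-svd-2}. With the identifications above, $\min(\widetilde{\Sigma}_{1})=\sigma_{p}(X_{k+1})$ while the relevant $\max(\Sigma_{2})=\sigma_{p+1}(X_{k})$, so the required gap is precisely $\delta_{k,p}$ (interpreted as the separation between $\sigma_{p}(X_{k+1})$ and $\sigma_{p+1}(X_{k})$). The hypothesis $\liminf_{k\to\infty}\delta_{k,p}>0$ guarantees the existence of some $\delta_{0}>0$ such that $\delta_{k,p}\geq \delta_{0}/2$ for all sufficiently large $k$. Plugging into~\eqref{ineq-rho-r1} yields, for all such $k$,
\[
\rho_{p}(X_{k},X_{k+1}) \;\leq\; \frac{\max\{\|R\|_{2},\|Q\|_{2}\}}{\delta_{0}/2} \;\leq\; \frac{2}{\delta_{0}}\,\|X_{k+1}-X_{k}\|_{F} \;\longrightarrow\; 0,
\]
which is the desired conclusion. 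The main obstacle is the bookkeeping at this last step: the subspace perturbation bound is valid only on an eventual tail of indices, because Proposition~\ref{mat-pert-svd-2} requires a strictly positive gap, and one has to use $\liminf$ (rather than a uniform lower bound) carefully to extract a workable constant. Once the tail argument is in place, the rest is algebraic manipulation of the identities for $R$ and $Q$.
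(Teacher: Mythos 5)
Your proposal is correct and follows essentially the same route as the paper's own proof: use the sufficient-decrease inequality (with $\nu^{\dagger}(\ell)+\ell>0$) to get $\|X_{k+1}-X_k\|_F\to 0$, write $R=(X_k-X_{k+1})\widetilde{V}_1$ and $Q=(X_k-X_{k+1})'\widetilde{U}_1$ via the SVD identities, bound their spectral norms by $\|X_k-X_{k+1}\|_2$, and invoke Proposition~\ref{mat-pert-svd-2} with $A=X_k$, $\widetilde{A}=X_{k+1}$, $r_1=p$. You are also more explicit than the paper in checking the gap hypothesis, and your reading of $\delta_{k,p}$ as the separation between $\sigma_p(X_{k+1})$ and $\sigma_{p+1}(X_k)$ is the correct interpretation of the (apparently transposed) definition in the statement.
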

\begin{proof}
The proof is presented in Section~\ref{proof-prop-subsp-stab1}.
\end{proof}
\begin{remark}
Let the assumptions of Proposition~\ref{prop-rank-stab-1} be in place
%If $\nu^\dagger(\ell) + \ell >0$ then we have 
-- this implies that there exists an integer ${\mathcal K}^*$ such that 
$$\rnk(X_k)=r, \quad \mbox{~~for all~~} k \geq {\mathcal K}^*. $$
Hence, in particular, there is a
separation between  $\sigma_{r}(X_k)$ and $\sigma_{r+1}(X_{k+1})$ for all $k$ sufficiently large. 
%In addition, since $\nu^\dagger(\ell) + \ell >0$, we have that
%$X_{k+1} - X_{k} \rightarrow 0$ (see the discussion following Proposition~\ref{prop-fixed-pt1}). 
%Using the property that the singular values of a matrix is continuous in the matrix argument~\citep{stewart1990matrix}, it follows that 
% $\sigma_{r}(X_k)$ and $\sigma_{r+1}(X_{k+1})$ remain separated, for all $k$ sufficiently large.  
This implies that $\rho_{r}(X_{k}, X_{k+1}) \rightarrow 0$ as $k \rightarrow \infty$, i.e., in words: the distance between the left (and right) singular subspaces corresponding to the 
top $r$ singular values of $X_{k}$ and $X_{k+1}$ converges to zero, as $k \rightarrow \infty$.
%Let the assumptions of Proposition~\ref{prop-subsp-stab1} be in place. Then, for all $k > {\mathcal K}^*$ and $p \leq r$ we have by Proposition~\ref{prop-rank-stab-1} that 
%$\delta_{k,p} >0$.
\end{remark}

%\paragraph{Asymptotic Convergence.}
\subsubsection{Asymptotic Convergence.}
We now investigate the asymptotic convergence properties of the sequence $X_{k}, k \geq 1.$ Proposition~\ref{prop-rank-stab-1} shows that under suitable assumptions, 
the sequence $\rnk(X_{k}), k \geq 1$ converges. The existence of a limit point of $X_{k}$ is guaranteed if the singular values of 
$\B\sigma(X_k)$ remain bounded. It is not immediately clear whether the sequence $\B\sigma(X_k)$ will remain bounded since 
 several spectral penalty functions (like the MC+ penalty) are bounded\footnote{Due to the boundedness of the penalty function, the boundedness of the objective function does not necessarily imply that the sequence $\B\sigma(X_k)$ will remain bounded.}. 
 We address herein, the existence of a limit point of the sequence $\B\sigma(X_k)$, and hence the sequence $X_k$.

For the following proposition, we will assume that the concave penalty function $\sigma \mapsto P(\sigma; \lambda, \gamma)$ on $\sigma \geq 0$ is differentiable and the gradient is bounded.
\begin{proposition}\label{prop-asympt-conv}
Let $U_{k} \diag(\B\sigma_{k}) V_{k}'$ denote the rank-reduced SVD of $X_{k}$. Let $\bar{U}_{ m \times r}, \bar{V}_{m \times r}$ denote a limit point of the sequence $\{U_{k}, V_{k} \}, k \geq 1$, such that $(U_{n_k}, V_{n_k}) \rightarrow (\bar{U}, \bar{V})$ along a subsequence $n_{k} \rightarrow \infty$.
Let $\bar{u}_{i}$ denote the $i$th column of $\bar{U}$ (and similarly for $\bar{v}_{i}, \bar{V}$)  and let us denote $\bar{\Theta} = [ \text{vec}(\mathcal{P}_{\Omega}(\bar{u}_{1}\bar{v}_{1}')), \ldots,   \text{vec}(\mathcal{P}_{\Omega}(\bar{u}_{r}\bar{v}_{r}'))]$. We have the following:
\begin{itemize}
\item[(a)] If $\rnk(\bar{\Theta})=r$, then the sequence $X_{n_k}$ has a limit point which is a first order stationary point.
\item[(b)] If $\lambda_{\min}(\bar{\Theta}' \bar{\Theta}) + \phi_P > 0$, then the sequence $X_{n_k}$ converges to a first order stationary point: $\bar{X}=\bar{U} \diag(\bar{\B\sigma})\bar{V}'$, where $\B\sigma_{n_k} \rightarrow \bar{\B\sigma}$.
\end{itemize}
\begin{proof}
See Section~\ref{pf-asympt-conv}
\end{proof}
\end{proposition}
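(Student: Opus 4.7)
The plan is to convert the monotonic decrease of $f(X_k)$ into boundedness of the singular value sequence $\B\sigma_{n_k}$, then exploit continuity of the \NSI~update map to identify any accumulation point of $X_{n_k}$ as a fixed point---hence a first-order stationary point via Proposition~\ref{prop-fixed-pt1} and Definition~\ref{def-fo-pt1}. Since $f(X_k) \downarrow \hat f$, the residual $\|P_\Omega(X_k - Y)\|_F$ is uniformly bounded, so $P_\Omega(X_{n_k})$ is bounded in Frobenius norm. Writing $P_\Omega(X_{n_k}) = \Theta_{n_k} \B\sigma_{n_k}$, where $\Theta_{n_k}$ has $i$th column $\text{vec}(P_\Omega(u_{n_k,i} v_{n_k,i}'))$, we have $\Theta_{n_k} \to \bar\Theta$ by hypothesis. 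Under (a), $\bar\Theta$ has full column rank $r$, so $\Theta_{n_k}'\Theta_{n_k}$ is invertible for $k$ large, and the least-squares formula $\B\sigma_{n_k} = (\Theta_{n_k}'\Theta_{n_k})^{-1}\Theta_{n_k}'\text{vec}(P_\Omega(X_{n_k}))$ gives a uniform bound on $\B\sigma_{n_k}$. Extracting a further subsequence (still indexed by $n_k$), $\B\sigma_{n_k} \to \bar{\B\sigma} \geq \M{0}$, so $X_{n_k} \to \bar X := \bar U \diag(\bar{\B\sigma}) \bar V'$.

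To complete (a), I would pass to the limit in the iteration itself. Operating in the regime $\nu^\dagger(\ell) + \ell > 0$ (implicit in the proposition), \eqref{suff-decrease-1} together with Proposition~\ref{conv-rate-1} gives $\Delta_\ell(X_k, X_{k+1}) \to 0$ and thus $\|X_{k+1}-X_k\|_F \to 0$, so $X_{n_k+1} \to \bar X$. An adaptation of Proposition~\ref{prop-liphs-1} from $g$ to $F_\ell(\cdot; X)$ shows the update $X \mapsto \argmin_{X'} F_\ell(X'; X)$ is Lipschitz continuous in $X$, hence $\argmin_{X'} F_\ell(X'; X_{n_k}) \to \argmin_{X'} F_\ell(X'; \bar X)$. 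Combining, $\bar X = \argmin_{X'} F_\ell(X'; \bar X)$, i.e., $\bar X$ is a fixed point of the \NSI~iteration; Proposition~\ref{prop-fixed-pt1}(a) then yields $\Delta_\ell(\bar X, \bar X) = 0$, establishing first-order stationarity per Definition~\ref{def-fo-pt1}.

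For part (b), I would promote this ``limit point'' statement to convergence of the entire subsequence $\B\sigma_{n_k}$ via a strong-convexity argument on the restricted scalar problem. Define $h(\B\sigma) := f(\bar U \diag(\B\sigma) \bar V')$ on $\{\B\sigma \geq \M{0}\}$; its quadratic part has Hessian $\bar\Theta'\bar\Theta$, while the penalty term contributes per-coordinate concavity bounded below by $\phi_P$ by~\eqref{def-conc-1}. Under $\lambda_{\min}(\bar\Theta'\bar\Theta) + \phi_P > 0$, $h$ is strongly convex on the cone and therefore possesses a unique stationary point. Every accumulation point $\bar{\B\sigma}'$ of $\{\B\sigma_{n_k}\}$ yields, by the argument of (a), a first-order stationary matrix $\bar U \diag(\bar{\B\sigma}') \bar V'$ for $f$; translating matrix-level stationarity into $\B\sigma$-level stationarity using the spectral subdifferential calculus of~\cite{Lewis1995} forces $\bar{\B\sigma}' = \bar{\B\sigma}$, and hence $\B\sigma_{n_k} \to \bar{\B\sigma}$ along the whole subsequence. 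The principal obstacle will be precisely this translation: at possibly zero or coalescing singular values of $\bar X$, the subdifferential of $X \mapsto \sum_i P(\sigma_i(X); \lambda, \gamma)$ is not the naive singular-value lift of $\partial P$, and establishing outer semicontinuity of the spectral subgradient map (needed to legitimize the limit passage) requires careful use of Lewis' calculus together with the boundedness hypothesis on $P'(\cdot; \lambda, \gamma)$.
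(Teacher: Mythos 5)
Your route to stationarity is genuinely different from the paper's, and it has a gap: you identify $\bar X$ as a fixed point by passing to the limit in $X_{n_k+1}=\argmin_{X'}F_\ell(X';X_{n_k})$, which requires (i) $X_{n_k+1}\to\bar X$ and (ii) continuity of the map $X\mapsto\argmin_{X'}F_\ell(X';X)$. Neither is available under the proposition's hypotheses. The adaptation of Proposition~\ref{prop-liphs-1} makes the update map single-valued and Lipschitz only when $1+\phi_P+\ell>0$, whereas the condition you call ``implicit,'' $\nu^{\dagger}(\ell)+\ell>0$, is strictly weaker: for $\ell>0$ and the MC+ penalty with $\gamma\le 1/(\ell+1)$ one has $\nu^{\dagger}(\ell)+\ell=\ell>0$ while $s^{\ell}_{\lambda,\gamma}(\cdot)$ is discontinuous (see the remark following Proposition~\ref{prop-rank-stab-1}), so the passage $\argmin_{X'}F_\ell(X';X_{n_k})\to\argmin_{X'}F_\ell(X';\bar X)$ is unjustified; and for $\ell=0$, $\gamma\le 1$ one cannot even conclude $\|X_{k+1}-X_k\|_F\to0$ from \eqref{suff-decrease-1}, so step (i) fails too. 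This is precisely the aggressively nonconvex regime the paper cares about (and the regime Proposition~\ref{prop-rank-stab-1} exploits). The paper's proof avoids any continuity of the update: it writes the subgradient stationarity condition of each subproblem, $P_\Omega(X_{k+1})-P_\Omega(Y)+U_{k+1}\nabla_{k+1}V_{k+1}'$ plus a term that vanishes by Proposition~\ref{conv-rate-1}, multiplies by $u_{k+1,j}'$ and $v_{k+1,j}$, and lets $n_k\to\infty$ to arrive at \eqref{deriv-1-5}, which are exactly the stationarity conditions of the restricted regression problem \eqref{deriv-1-6} with design $\bar\Theta$; full rank of $\bar\Theta$ gives (a) and strong convexity ($\lambda_{\min}(\bar\Theta'\bar\Theta)+\phi_P>0$) gives uniqueness and hence (b). That argument needs only the standing assumption that $\sigma\mapsto P(\sigma;\lambda,\gamma)$ is differentiable with bounded derivative, not continuity of the proximal map.

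Part (b) of your sketch is also incomplete at exactly the step you flag: translating matrix-level stationarity of $\bar U\diag(\bar{\B\sigma}')\bar V'$ into stationarity of $h(\B\sigma)=f(\bar U\diag(\B\sigma)\bar V')$. The paper never needs this translation because it derives the $\B\sigma$-level conditions \eqref{deriv-1-5} directly; if you wish to keep your formulation, an easier patch than outer semicontinuity of the spectral subdifferential is to restrict the global inequality $F_\ell(X;\bar X)\ge F_\ell(\bar X;\bar X)$ to matrices $X=\bar U\diag(\B\sigma)\bar V'$, observe that the terms beyond $h$ are nonnegative and vanish with zero gradient at $\bar{\B\sigma}$, and invoke strong convexity of $h$. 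On the positive side, your boundedness argument --- $\B\sigma_{n_k}=(\Theta_{n_k}'\Theta_{n_k})^{-1}\Theta_{n_k}'\,\text{vec}(P_\Omega(X_{n_k}))$ with $\|P_\Omega(X_{n_k}-Y)\|_F^2\le 2f(X_1)$ and $\Theta_{n_k}\to\bar\Theta$ of full column rank --- is cleaner and more explicit than the paper's brief assertion that any solution of \eqref{deriv-1-5} is finite, and is worth retaining.
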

%\textcolor{red}{(from the proof of proposition 9, it seems that the proposition is based on the rank stability results? otherwise $ (U_{n_k}, V_{n_k}) \rightarrow (\bar{U}, \bar{V})$ does not make sense ?)}

Proposition~\ref{prop-rank-stab-1} describes sufficient conditions under which the rank of the sequence $X_{k}$ stabilizes after finitely many iterations ---
it does \emph{not} describe the boundedness of the  sequence $X_k$, which is addressed in Proposition~\ref{prop-asympt-conv}. 
Note that Proposition~\ref{prop-asympt-conv} does not imply that the rank of the sequence $X_{k}$ stabilizes after finitely many iterations (recall that Proposition~\ref{prop-asympt-conv} does not assume that the thresholding operators are discontinuous, an assumption required by Proposition~\ref{prop-rank-stab-1}).

\subsection{Computing the Thresholding Operators}\label{sec:compute-th-op1}

The operator~\eqref{def-Xk+1-S} requires computing a thresholded SVD of the matrix $\widetilde{X}_{k}$, as demonstrated by Proposition~\ref{prop1}. The thresholded singular values 
$s^\ell_{\lambda, \gamma}( \cdot)$ as in~\eqref{tilde-xk-spe-1}
%$s_{\lambda, \gamma}( \B\sigma)$, given by~\eqref{gen-thresh-1} 
will have many zero coordinates due to the ``sparsity promoting'' nature of the concave penalty.
%% \textcolor{red}{(if we mention $\widetilde{X}_{k}$, should we be using instead $s_{\lambda, \gamma}^{\ell}(\B\sigma(Z))$ and (34) for consistency?)}. 
Thus, computing the thresholding operator~\eqref{def-Xk+1-S} will typically require performing a low-rank SVD on the matrix $\widetilde{X}_{k}$. While direct factorization based SVD methods can be used for smaller problems where $\min \{ m, n\}$ is of the order of a thousand or so; for larger matrices, such methods become computationally prohibitive --- we thus resort to iterative methods for computing low-rank SVDs for large scale problems. Algorithms such as the block power method; also known as block QR iterations, or those based on the Lanczos method~\citep{GVL83} are quite effective in computing the top few singular value and vectors of a matrix $A$, especially when the operations of multiplying ${A}{b}_{1}$ and ${A}'{b}_{2}$ (for vectors $b_{1},b_{2}$ of matching dimensions) can be done efficiently. Indeed, such matrix-vector multiplications turn out to be quite computationally attractive for our problem, since the computational cost of multiplying $\widetilde{X}_{k}$ and $\widetilde{X}_{k}'$ with vectors of matching dimensions is quite low. This is due to the structure of:
\begin{equation}\label{svd-tildex-k}
\begin{aligned}
\widetilde{X}_{k} = & \left( \mathcal{P}_{\Omega}(Y) + \mathcal{P}_{\Omega}^\perp(X_{k}) + \ell X_{k}\right)/(\ell+1)\\
= & \frac{1}{\ell+1}\underbrace{\mathcal{P}_{\Omega} (Y - X_{k})}_{\text{Sparse}} ~~+~~~ \underbrace{X_{k}}_{\text{Low-rank}},
\end{aligned}
\end{equation}
which admits a decomposition as the sum of a sparse matrix  and a low-rank matrix\footnote{We note that it is not guaranteed that the ${X}_k$'s will be of low-rank across the iterations of the algorithm for $k \geq 1$, even if they are eventually, for $k$ sufficiently large. However, in the presence of warm-starts across $(\lambda,\gamma)$ they are indeed, empirically, found to have low-rank as long as the regularization parameters are large enough to result in a small rank solution. Typically, as we have observed in our experiments, in the presence of warm-starts, the rank of $X_{k}$ is found to remain low across all iterations.}. Note that the sparse matrix has the same sparsity pattern as the observed indices $\Omega$. Decomposition~\eqref{svd-tildex-k} is inspired by a similar decomposition that was exploited effectively in the algorithm \SI \citep{MazumderEtal2010}, where the authors use PROPACK \citep{propack} to compute the low-rank SVDs.
%Thus, we follow a heuristic that was successfully employed in the \SI~algorithm --- we maintain an upper bound on the maximum possible rank $r$ that we are
%willing to compute across the iterations --- as the iteration index, $k$ becomes large and the algorithm makes its way towards convergence, it can be easily verified if the working rank is
%indeed sufficient for computing the thresholding operators. If not, we increase the working rank by a small number (say, five or ten) and continue the iterative scheme.
%To compute the low-rank SVD for the matrix appearing in~\eqref{svd-tildex-k}, several strategies are possible --- \cite{MazumderEtal2010} use PROPACK~\cite{propack}.
In this paper, we use the Alternating Least Squares (ALS)-stylized procedure, which computes a low-rank SVD by solving the following nonlinear optimization problem:
\begin{equation}\label{als-prob-1}
 \mini_{U_{m \times \tilde{r}}, V_{n \times \tilde{r}}}\;\; \frac12 \| \widetilde{X}_{k} - U V' \|_{F}^2, 
 \end{equation}
using alternating least squares---this is in fact, equivalent to the block power method~\citep{GVL83}, in computing a rank $\tilde{r}$ SVD of the matrix $\widetilde{X}_{k}$. Across the iterations of \NSI, we pass the warm-start information in the $U,V$'s obtained from a low-rank SVD of $\widetilde{X}_{k}$ to compute the low-rank SVD for $\widetilde{X}_{k+1}$. Empirically, this warm-start strategy is found to be significantly more advantageous than a black-box low-rank SVD stylized approach, as used in the \SI~algorithm (for example), where, at every iteration, a new low-rank SVD is computed from scratch via PROPACK. This strategy quite naturally leads to a loss of useful information about the left and right singular vectors, 
which become closer to each other along the course of the \SI~iterations (as formalized by Section~\ref{sec:subsp-stab-1}).  Using warm-start information across successive iterations (i.e., $k$ values) leads to notable gains in computational speed (often reduces the total time to compute a family of solutions by orders of magnitude), when compared to black-box SVD stylized methods that do not rely on such warm-start strategies. This improvement is also supported by theory --- the computational guarantee of block power iterations~\citep{GVL83} states that the subspace spanned by the $U$ matrix (in the factorization $UV'$ in~\eqref{als-prob-1})
converges to that of the top $\tilde{r}$ left singular vectors at the rate: $C \gamma^q$, where, $q$ denotes the number of 
power iterations, $\gamma$ depends upon the ratio between the $\tilde{r}+1$ and $\tilde{r}$ singular values of the matrix $\widetilde{X}_{k}$; and $C$ depends upon the distance between: the initial estimate
 of (the subspace spanned by) $U$ and the 
 left top-$\tilde{r}$ set of singular vectors of $\tilde{X}_{k}$. The constant $C$ is smaller with a good warm-start, when compared to a random initialization.  A similar argument applies for the right set of singular vectors.

%We make use of this crucial insight in the proposed algorithmic framework, herein. 

\section{Numerical Experiments}\label{sec4}
In this section, we present a systematic experimental study of the statistical properties of estimators obtained from~\eqref{nonconv-prob-1} for different choices of penalty functions. 
We perform our experiments on a wide array of synthetic and real data instances. Recall that the majority of the algorithmic guarantees proved in Section \ref{sec3} rely on the condition $\nu^{\dagger}(\ell)+\ell>0$. For MC+ penalty functions, it is straightforward to verify that $\nu^{\dagger}(0)>0$ as long as $\gamma \in (1,\infty]$. Hence we will use $\ell=0$ in \textsc{NC-Impute} throughout this section.

%In this section, we show the advantages of using nonconvex regularizers in the noisy matrix completion setting with both synthetic and real datasets \textcolor{red}{(again, we might benefit from an explicit definition of the noisy matrix completion problem which we can just later refer to)}.
%For simplicity, we only analyze solutions to problem \eqref{unobserved_nonconv} as provided by Algorithm \ref{aone}, leaving a more careful study of the recalibration approach based on our effective degrees of freedom analysis for future work.

\subsection{Synthetic Examples}
\label{simulation:study}

\begin{figure*}[htb!]
\begin{center}
{\bf {Example-A}} (Low SNR, less missing entries) \\
%%----start of first subfigure----
\subfigure[ROM, $90$\% missing, $\text{SNR}=1$, $\text{true rank}=10$]{
%\label{fig:distplots:a} %% label for first subfigure
\scalebox{.9}{\includegraphics[width=2.6in, height=2.in]{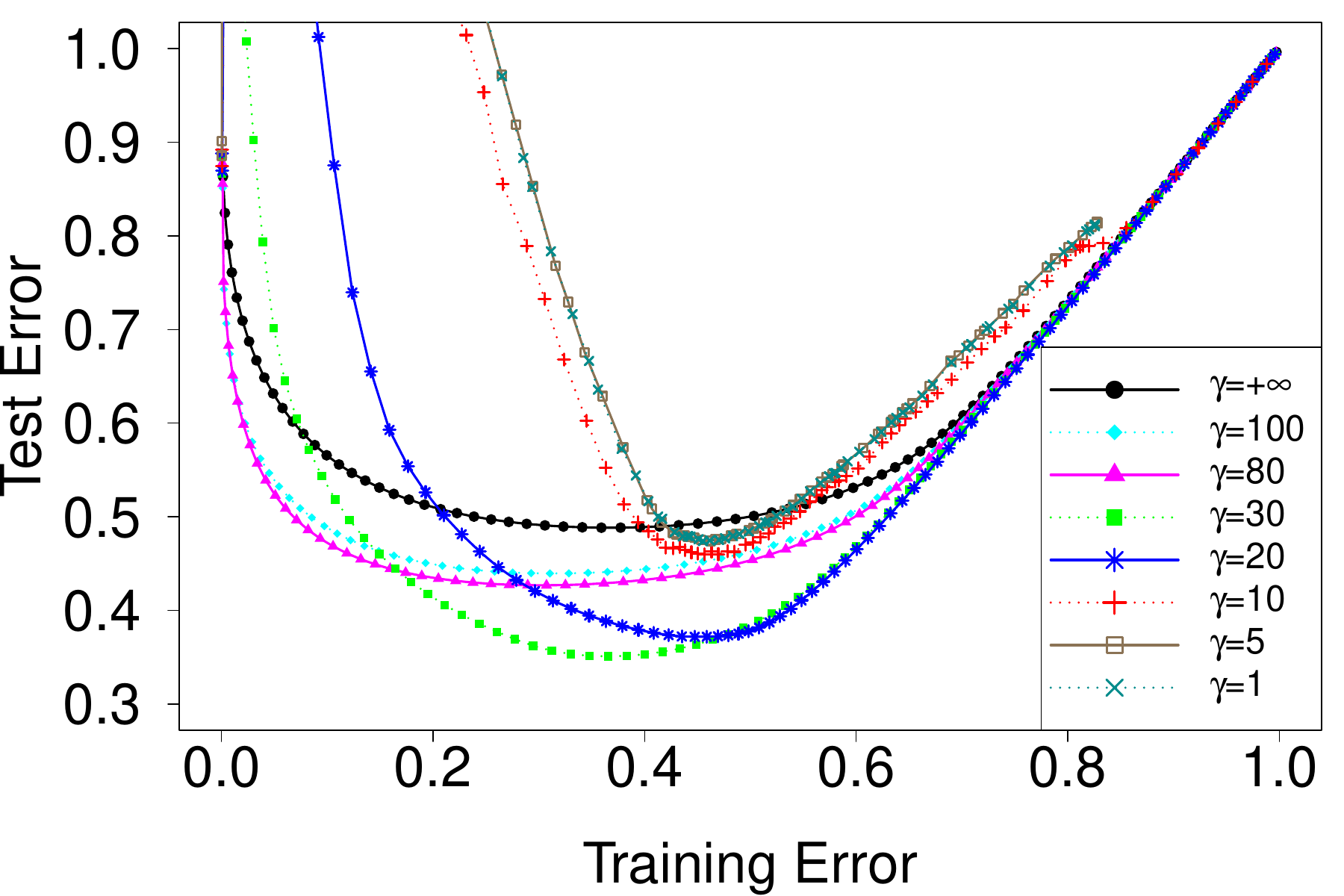}}}
%%----start of second subfigure----
\subfigure[ROM, $90$\% missing, $\text{SNR}=1$, $\text{true rank}=5$]{
%\label{fig:distplots:b} %% label for second subfigure
\scalebox{.9}{\includegraphics[width=2.6in, height=2.in]{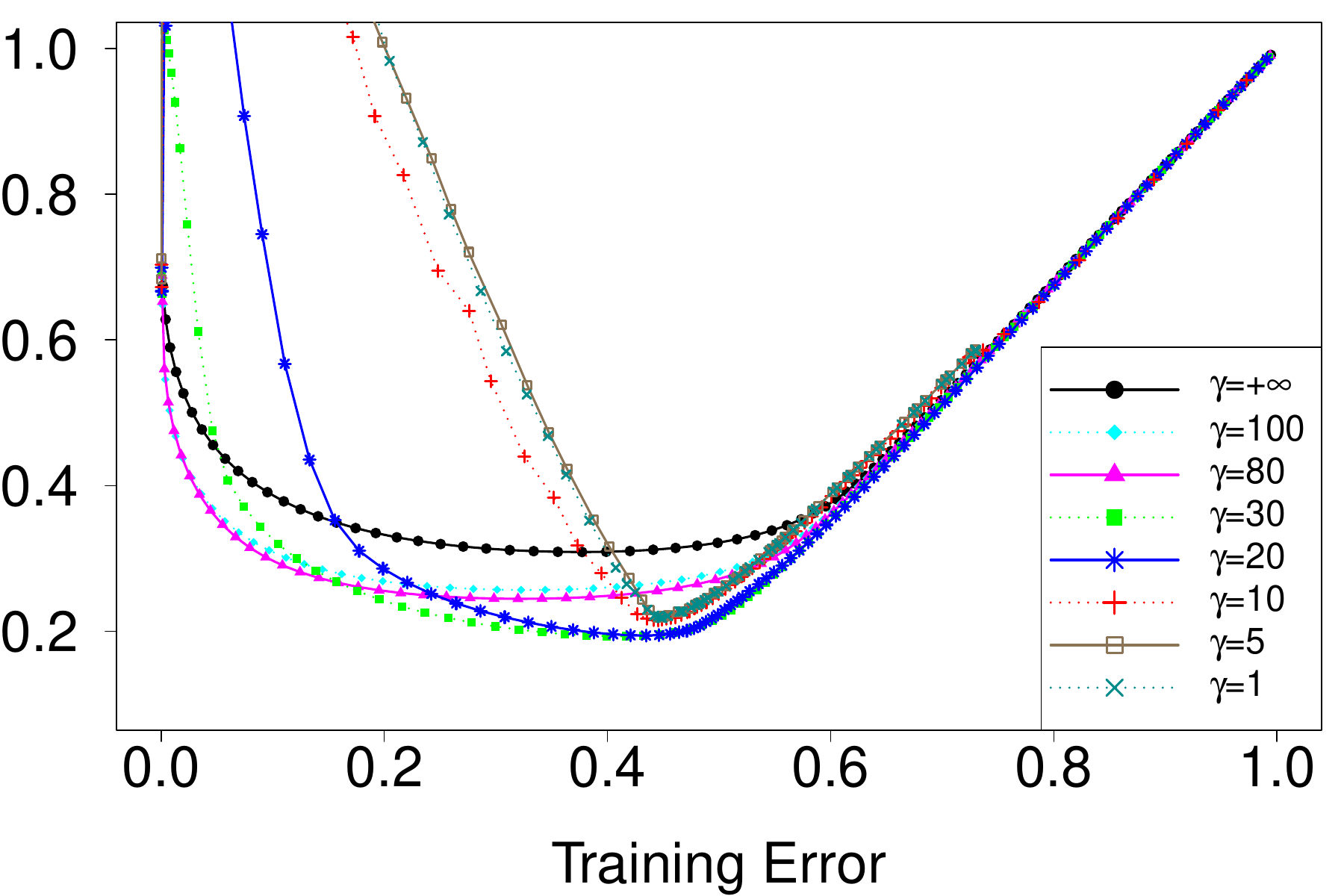}}}
%%----start of third subfigure----
\subfigure[ROM, $90$\% missing, $\text{SNR}=1$, $\text{true rank}=10$]{
%\label{fig:distplots:b} %% label for third subfigure
\scalebox{.9}{\includegraphics[width=2.6in, height=2.in]{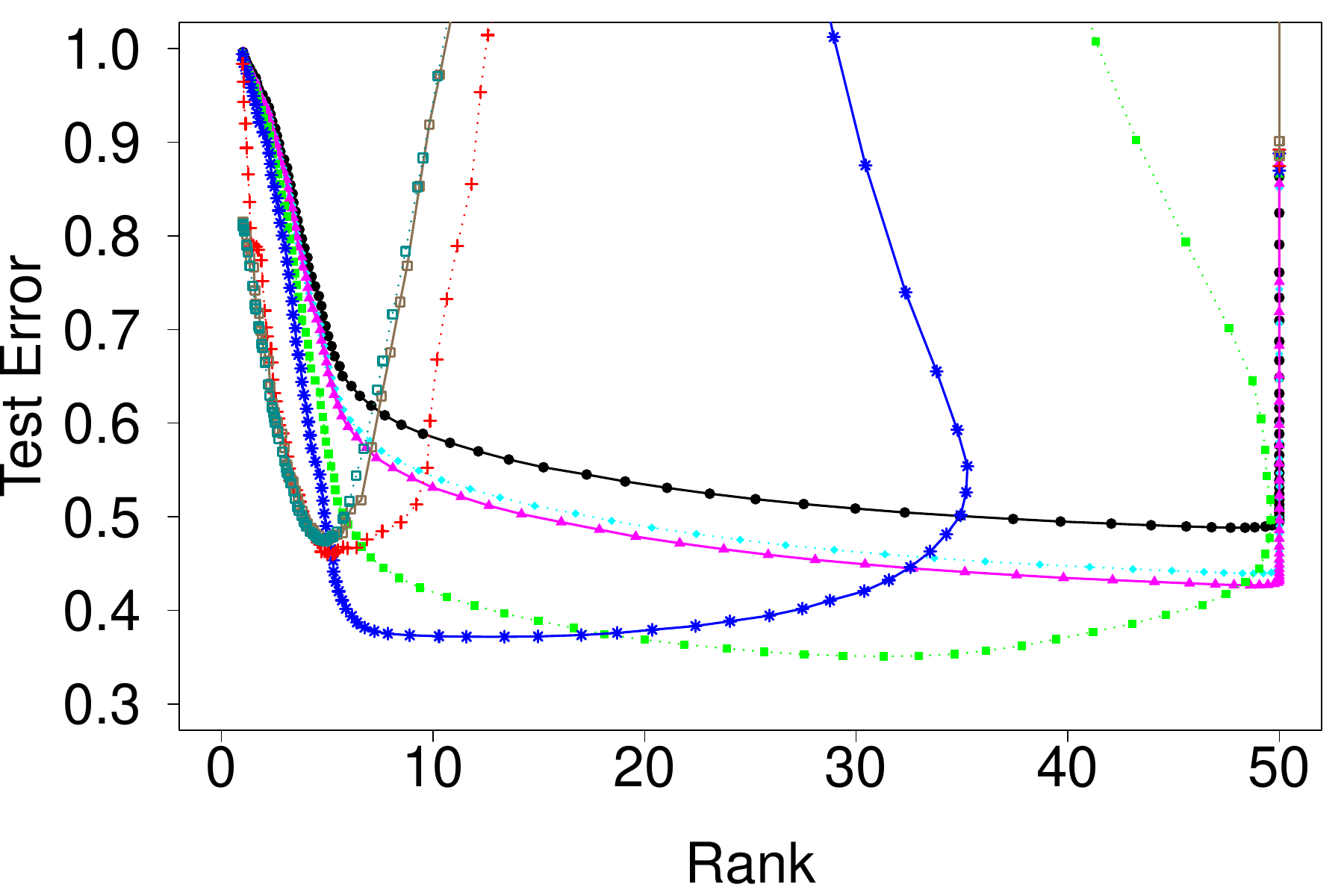}}}
%%----start of fourth subfigure----
\subfigure[ROM, $90$\% missing, $\text{SNR}=1$, $\text{true rank}=5$]{
%\label{fig:distplots:a} %% label for fourth subfigure
\scalebox{.9}{\includegraphics[width=2.6in, height=2.in]{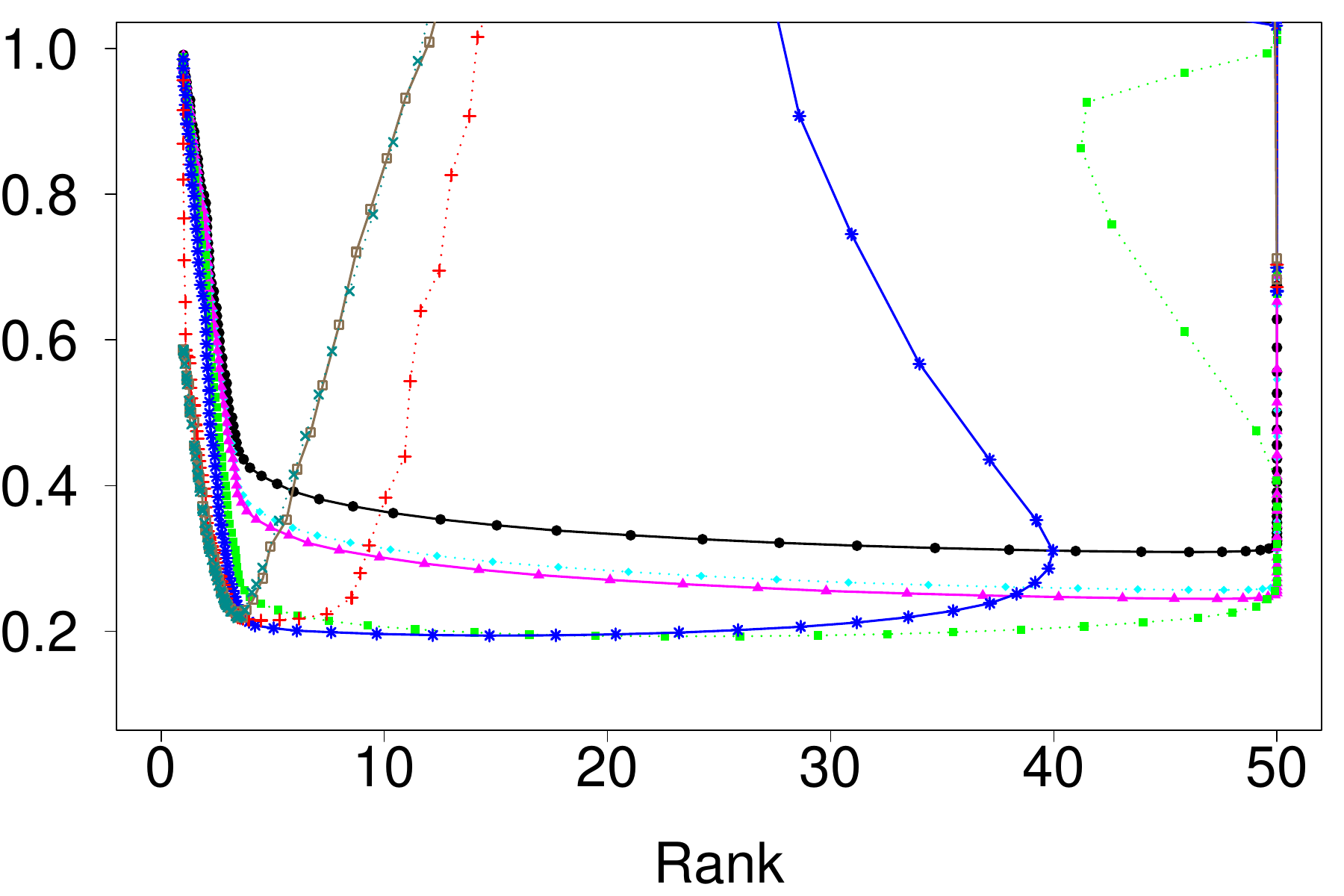}}}
\caption{\small (Color online) Random Orthogonal Model (ROM) simulations with $\text{SNR}=1$. The choice $\gamma=+\infty$ refers to nuclear norm regularization as provided by the \textsc{Soft-Impute} algorithm. We also include the choice $\gamma=1$ to represent the rank regularized approach. The least nonconvex alternatives at $\gamma=100$ and $\gamma=80$ behave similarly to nuclear norm, although with better prediction performance. The choices of $\gamma=1, 5, 10$ result in excessively aggressive fitting behavior for the $\text{true rank}=10$ case, but improve significantly in prediction error and recovering the true rank in the sparser $\text{true rank}=5$ setting. In both scenarios, the intermediate models with $\gamma=30$ and $\gamma=20$ fare the best, with the former achieving the smallest prediction error, while the latter estimates the actual rank of the matrix. Values of test error larger than one are not displayed in the figure.}\label{fig4}
%\label{fig:PlotCritFunc} %% label for entire figure
\end{center}
\end{figure*}

\begin{figure*}[htb!]
\begin{center}
{\bf {Example-A}} (High SNR, more missing entries)\\
%%----start of first subfigure----
\subfigure[ROM, $95$\% missing, $\text{SNR}=5$, $\text{true rank}=10$]{
%\label{fig:distplots:a} %% label for first subfigure
\scalebox{.9}{\includegraphics[width=2.6in, height=2.in]{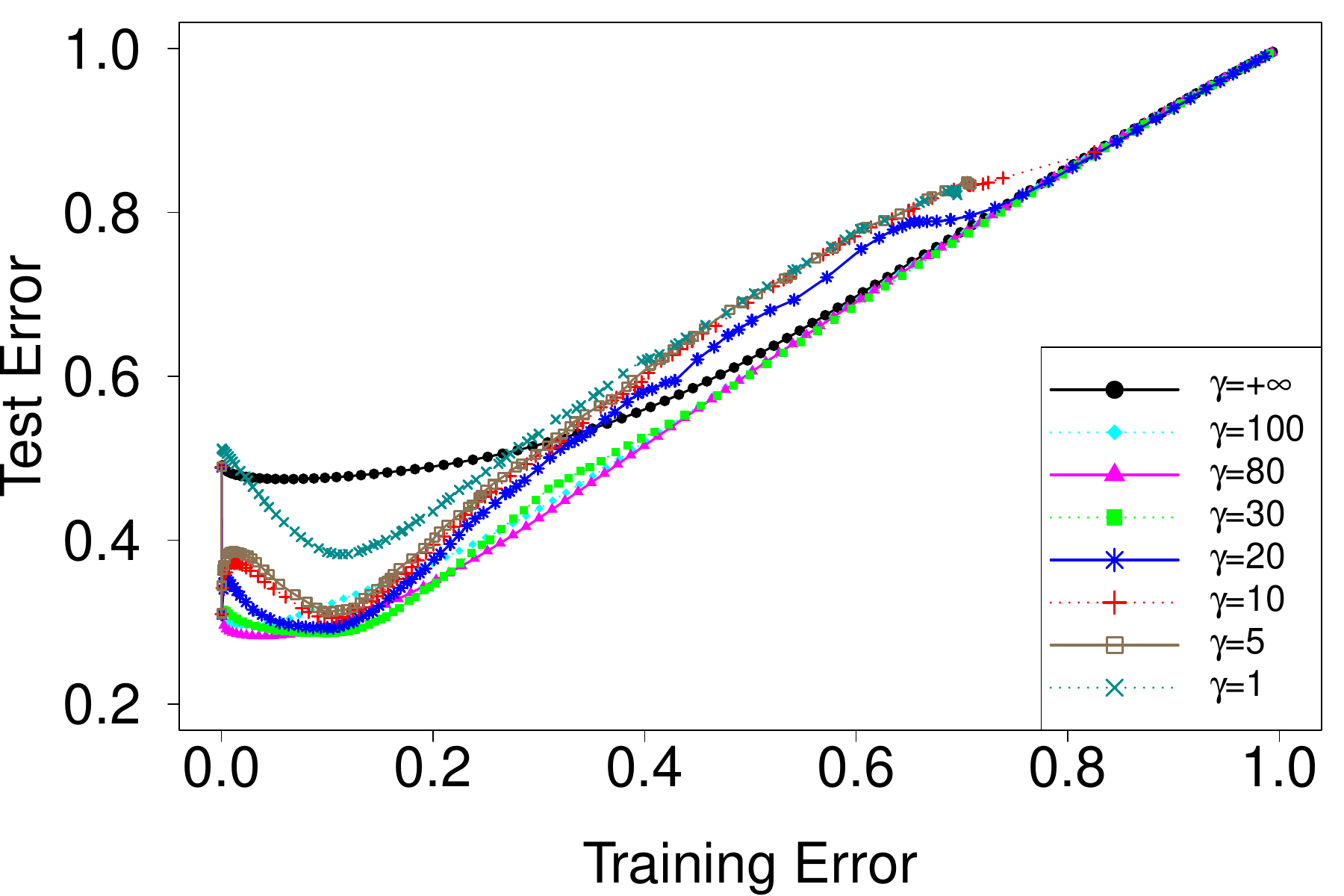}}}
%%----start of second subfigure----
\subfigure[ROM, $95$\% missing, $\text{SNR}=5$, $\text{true rank}=5$]{
%\label{fig:distplots:b} %% label for second subfigure
\scalebox{.9}{\includegraphics[width=2.6in, height=2.in]{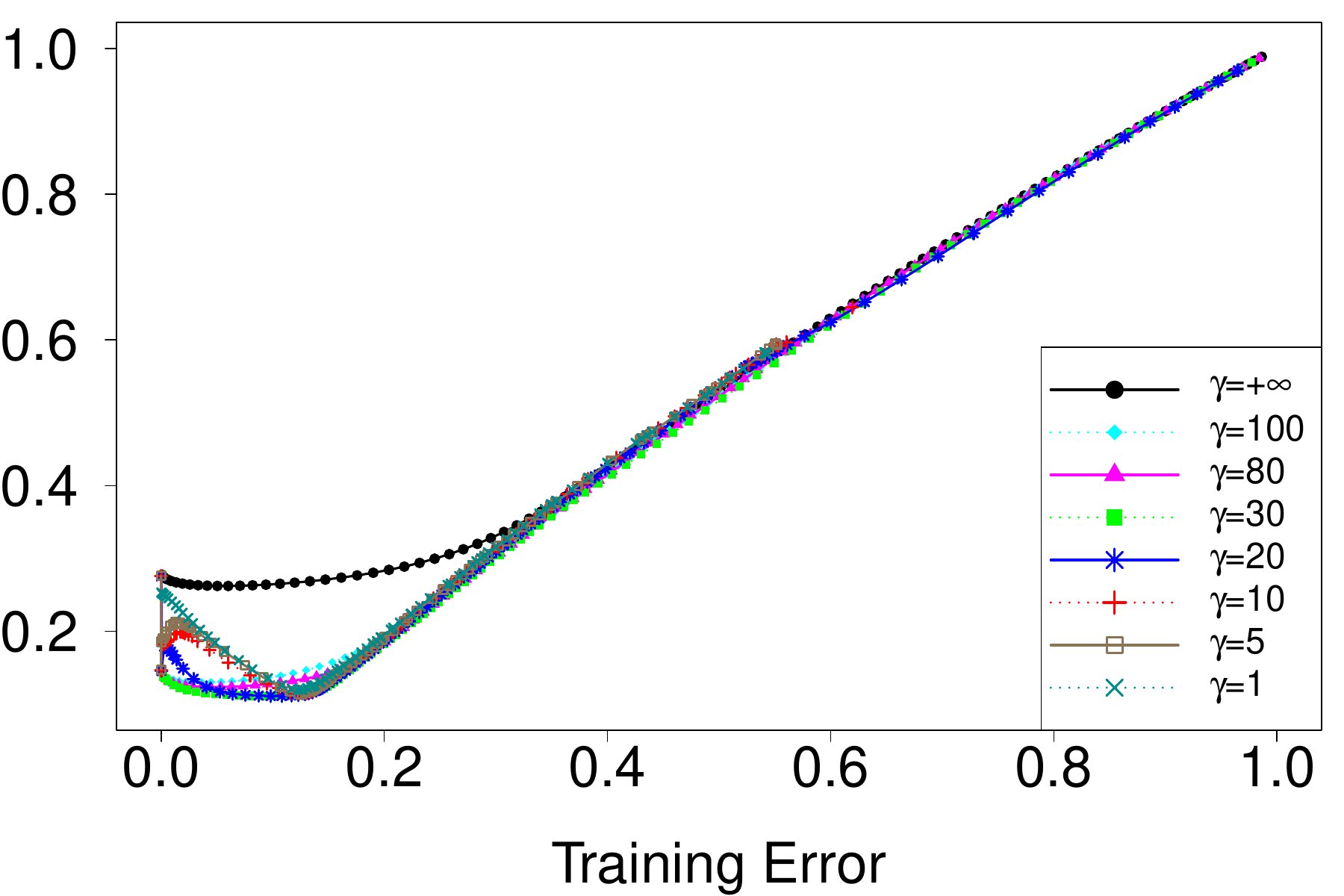}}}
%%----start of third subfigure----
\subfigure[ROM, $95$\% missing, $\text{SNR}=5$, $\text{true rank}=10$]{
%\label{fig:distplots:b} %% label for third subfigure
\scalebox{.9}{\includegraphics[width=2.6in, height=2.in]{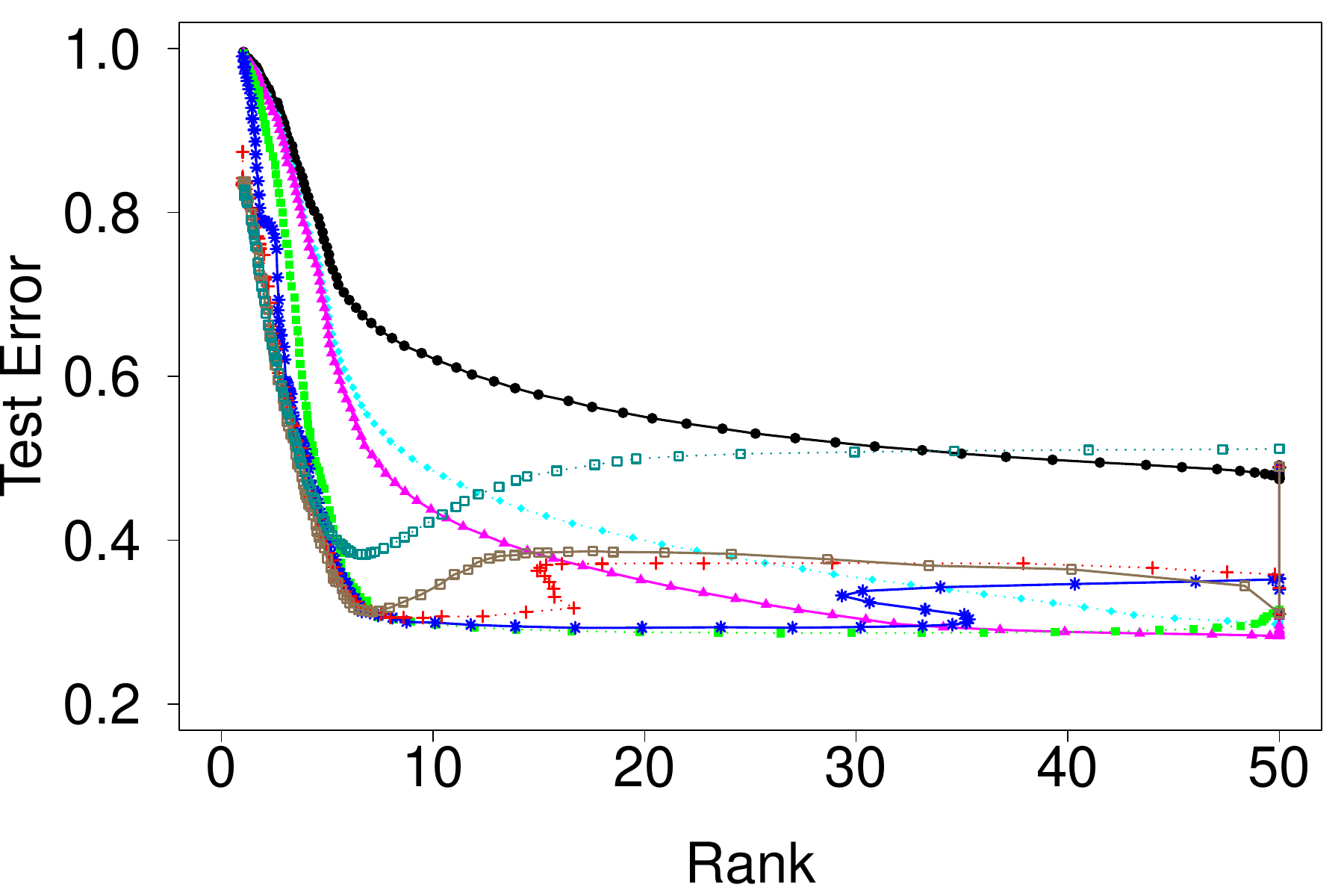}}}
%%----start of fourth subfigure----
\subfigure[ROM, $95$\% missing, $\text{SNR}=5$, $\text{true rank}=5$]{
%\label{fig:distplots:a} %% label for fourth subfigure
\scalebox{.9}{\includegraphics[width=2.6in, height=2.in]{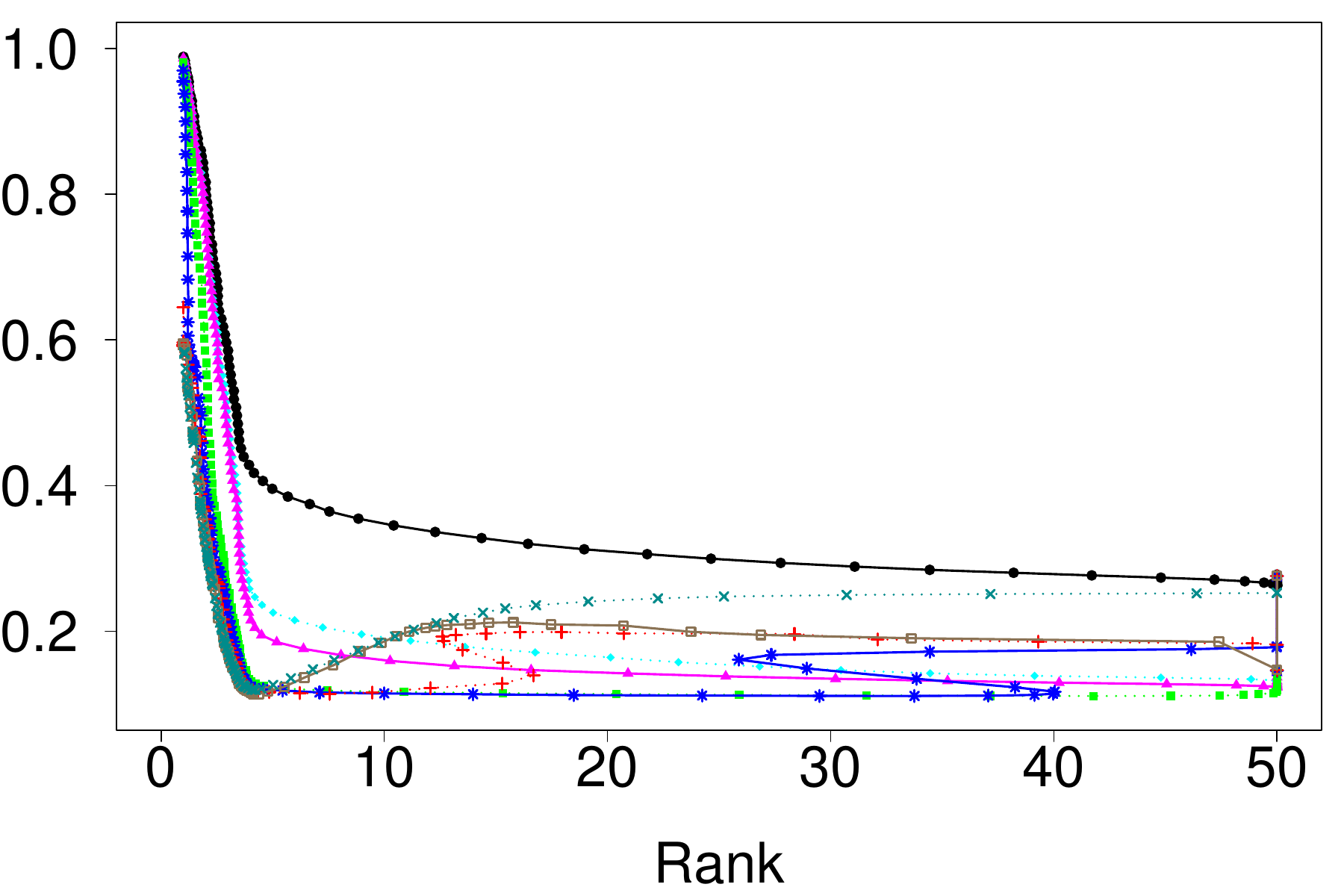}}}
\caption{\small (Color online) Random Orthogonal Model (ROM) simulations with $\text{SNR}=5$. The benefits of nonconvex regularization are more evident in this high-sparsity, high-missingness scenario. While the $\gamma=100$ and $\gamma=80$ models distance themselves more from nuclear norm, the remaining members of the MC+ family essentially minimize prediction error while correctly estimating the true rank. This is especially true in panel (d), where the best predictive performance of the model $\gamma=5$ at the correct rank is achieved under a low-rank truth and high SNR setting.}\label{fig5}
%\label{fig:PlotCritFunc} %% label for entire figure
\end{center}
\end{figure*}

\begin{figure*}[htb!]
\begin{center}
{\bf {Example-B}} \hspace{4.5cm} {\bf {Example-C}} \\
%%----start of first subfigure----
\subfigure[\scriptsize Coherent, $90$\% missing, $\text{SNR}=10$, $\text{true rank}=10$]{
%\label{fig:distplots:a} %% label for first subfigure
\scalebox{.9}{\includegraphics[width=2.6in, height=2.in]{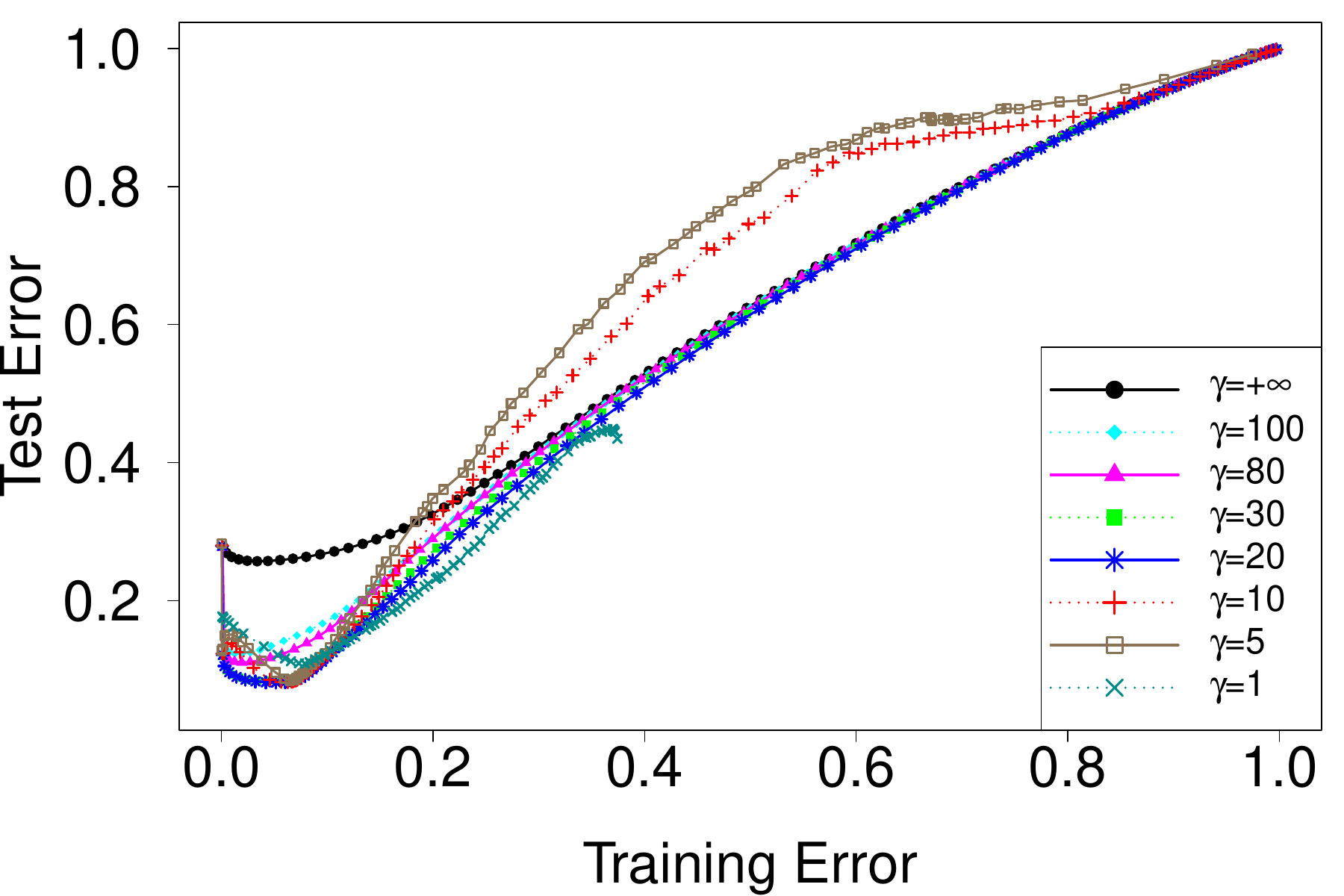}}}
%%----start of second subfigure----
%\hspace{0.cm}
\subfigure[\scriptsize NUS, $25$\% missing, $\text{SNR}=10$, $\text{true rank}=10$]{
%\label{fig:distplots:b} %% label for second subfigure
\scalebox{.9}{\includegraphics[width=2.6in, height=2.in]{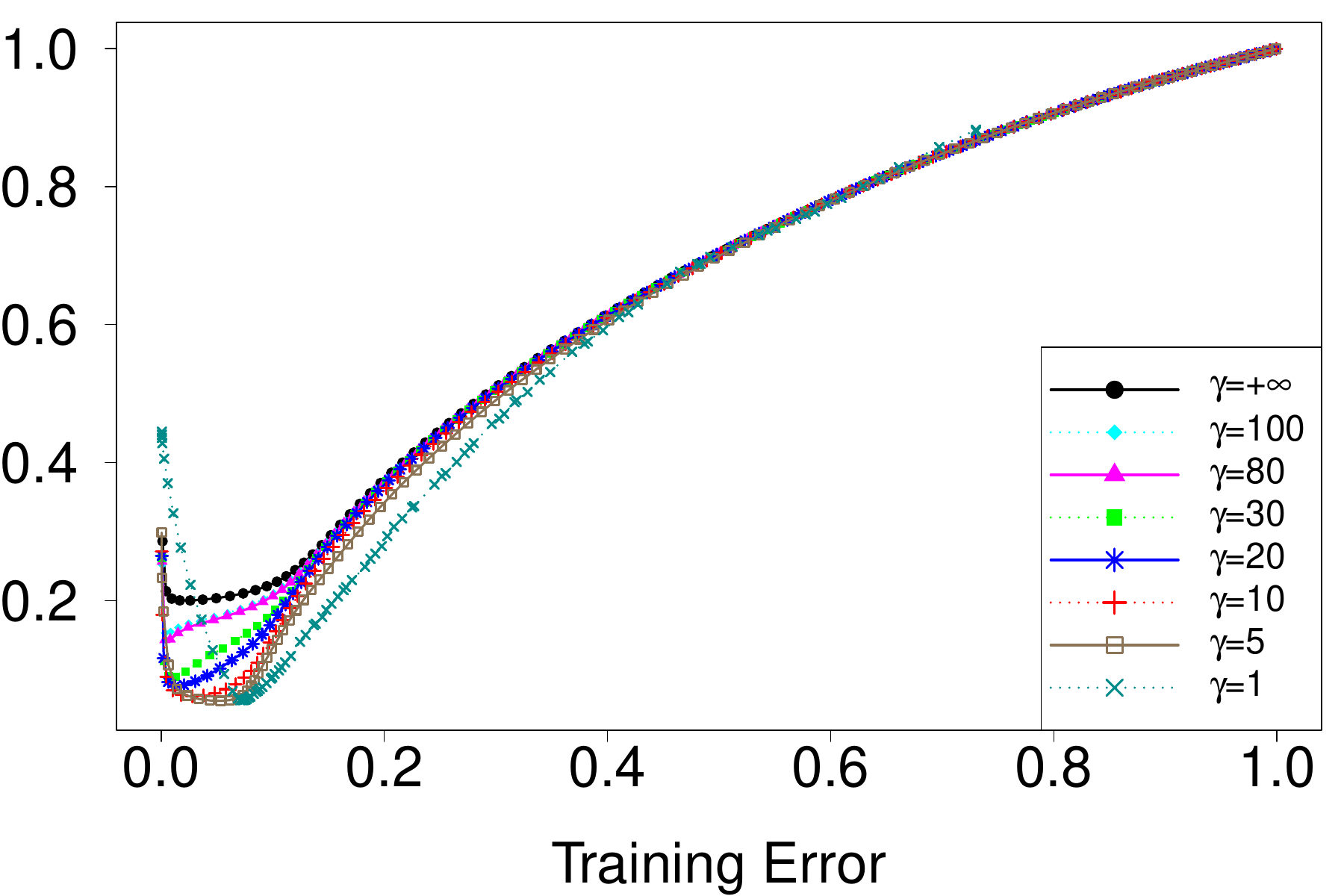}}}
%%----start of third subfigure----
\subfigure[\scriptsize Coherent, $90$\% missing, $\text{SNR}=10$, $\text{true rank}=10$]{
%\label{fig:distplots:b} %% label for third subfigure
\scalebox{.9}{\includegraphics[width=2.6in, height=2.in]{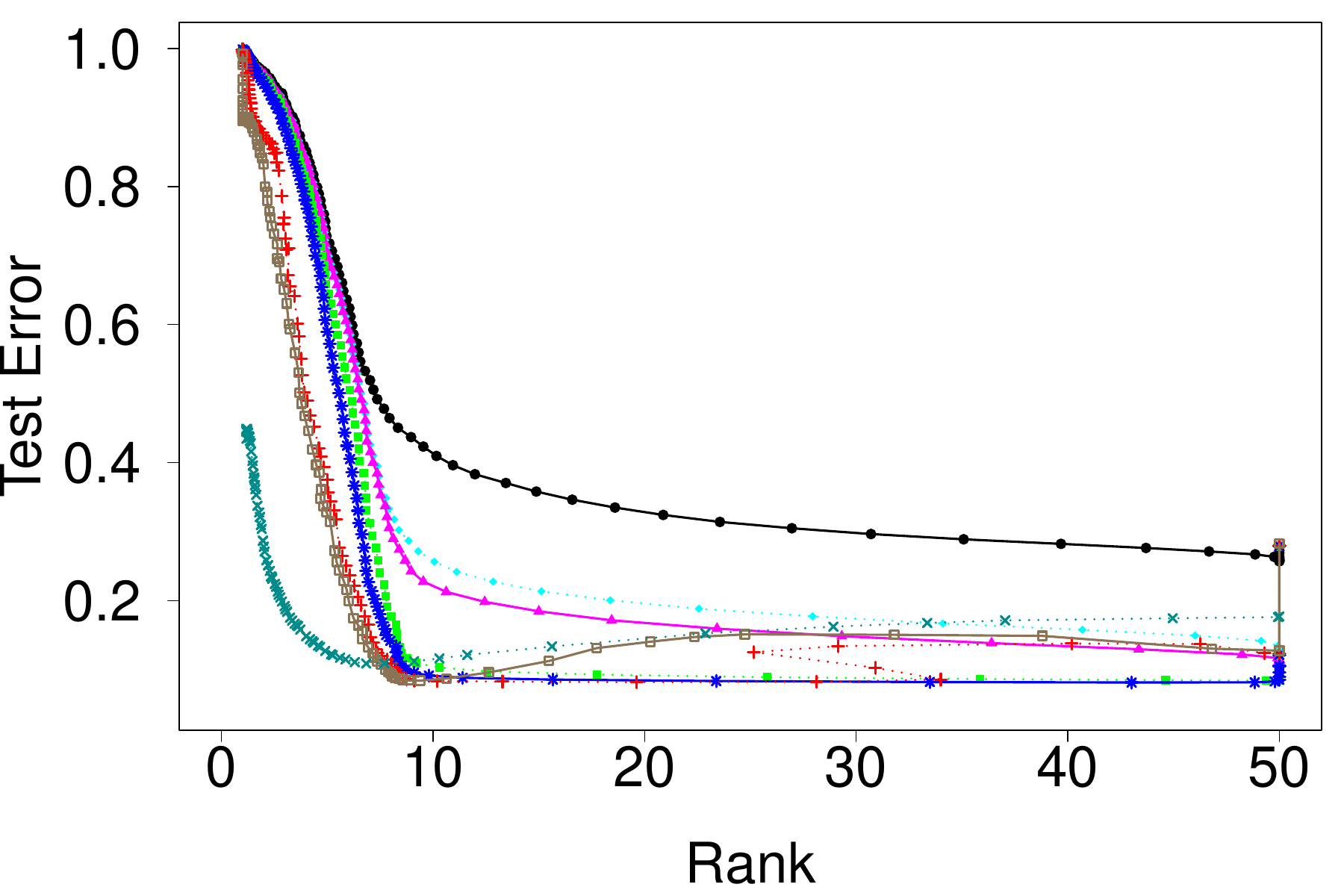}}}
%%----start of fourth subfigure----
\subfigure[\scriptsize NUS, $25$\% missing, $\text{SNR}=10$, $\text{true rank}=10$]{
%\label{fig:distplots:a} %% label for fourth subfigure
\scalebox{.9}{\includegraphics[width=2.6in, height=2in]{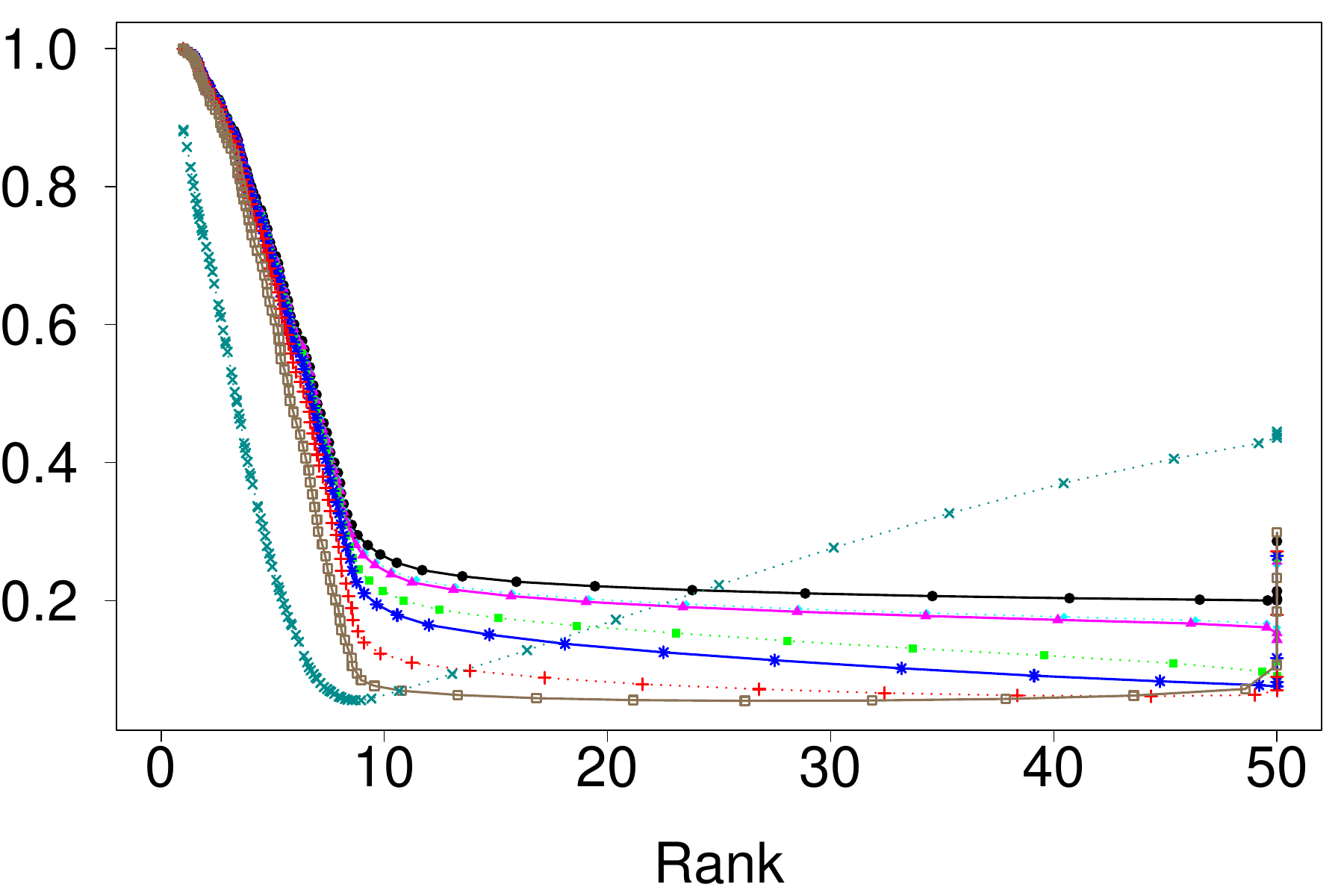}}}
\caption{\small (Color online) Coherent and Nonuniform Sampling (NUS) simulations with $\text{SNR}=10$. nonconvex regularization also proves to be a successful strategy in these challenging scenarios, particularly in the nonuniform sampling setting where the MC+ family exhibits a monotone decrease in prediction error as $\gamma$ approaches $1$. Again, the model $\gamma=5$ estimates the correct rank under high SNR settings. Although nuclear norm achieves a relatively small prediction error, compared with previous simulation settings, the MC+ family still provides a superior and more robust mechanism for regularization.}\label{fig6}
%\label{fig:PlotCritFunc} %% label for entire figure
\end{center}
\end{figure*}

\begin{figure*}[htb!]
\begin{center}
%%----start of first subfigure----
{\sf {Real Data Example: MovieLens}}\\
\subfigure[MovieLens100k, $20$\% test data]{
%\label{fig:distplots:a} %% label for first subfigure
\scalebox{.9}{\includegraphics[width=2.6in, height=2.in]{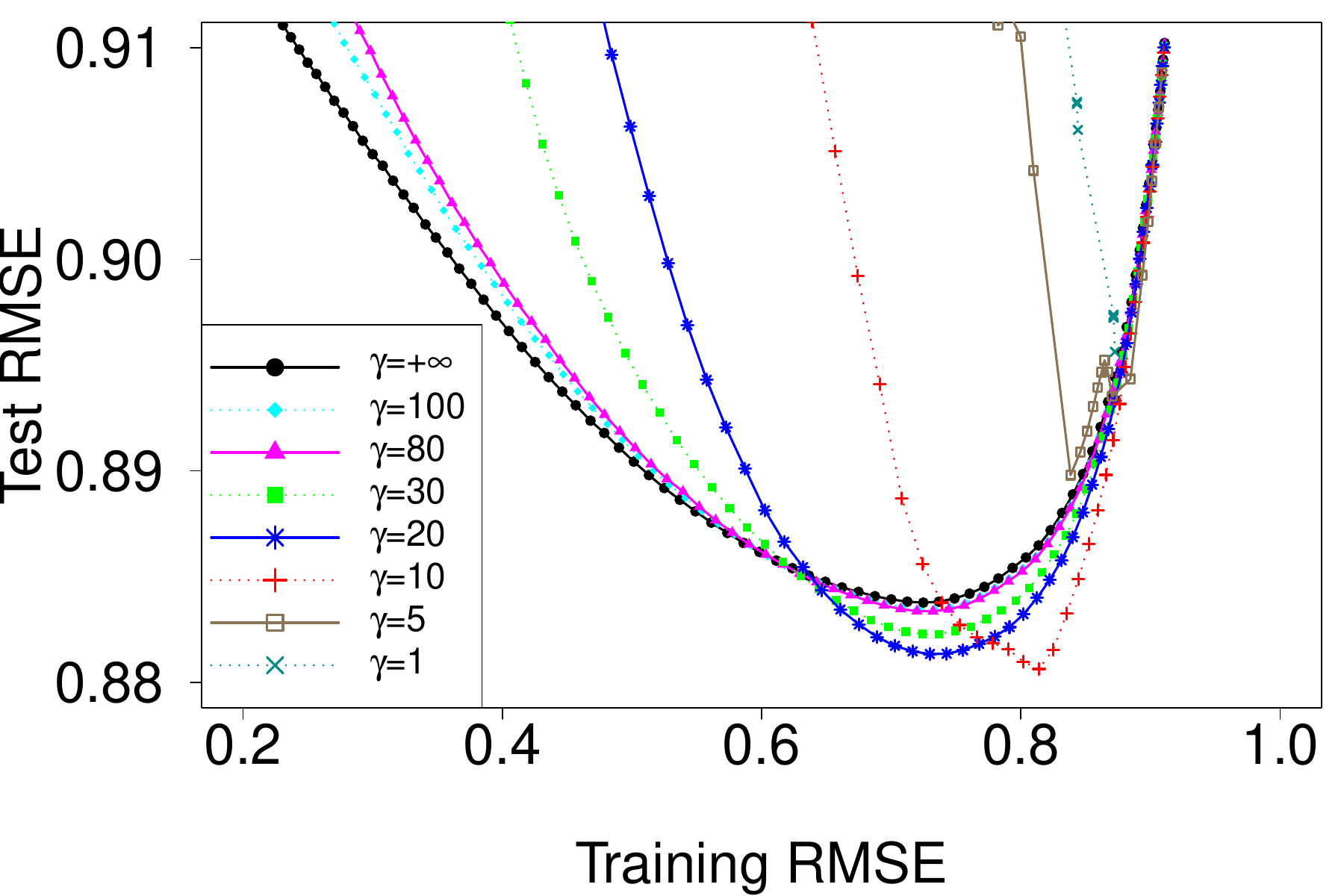}}}
%%----start of second subfigure----
\subfigure[MovieLens1m, $20$\% test data]{
%\label{fig:distplots:b} %% label for second subfigure
\scalebox{.9}{\includegraphics[width=2.6in, height=2.in]{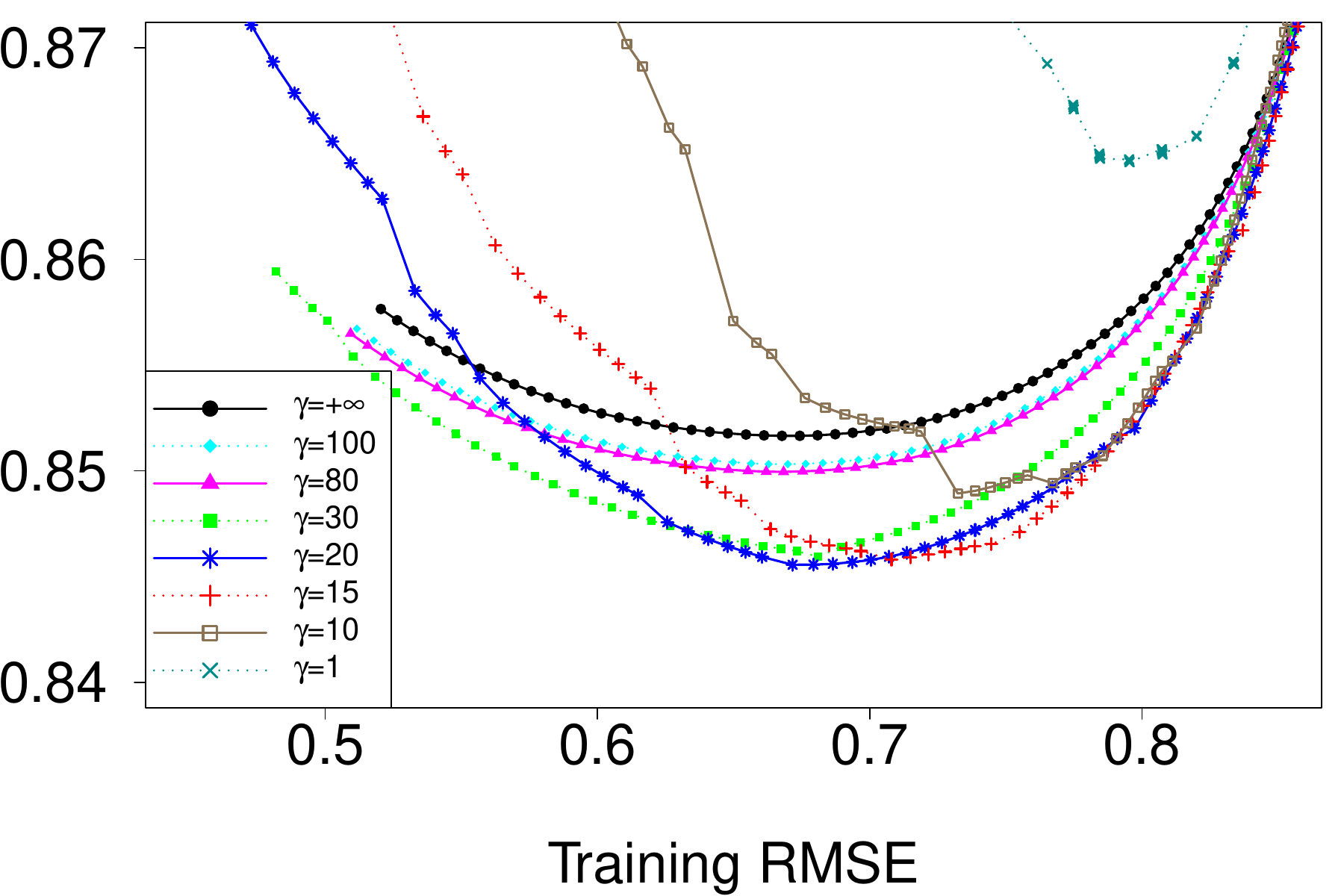}}}
%%----start of third subfigure----
\subfigure[MovieLens100k, $20$\% test data]{
%\label{fig:distplots:b} %% label for third subfigure
\scalebox{.9}{\includegraphics[width=2.6in, height=2.in]{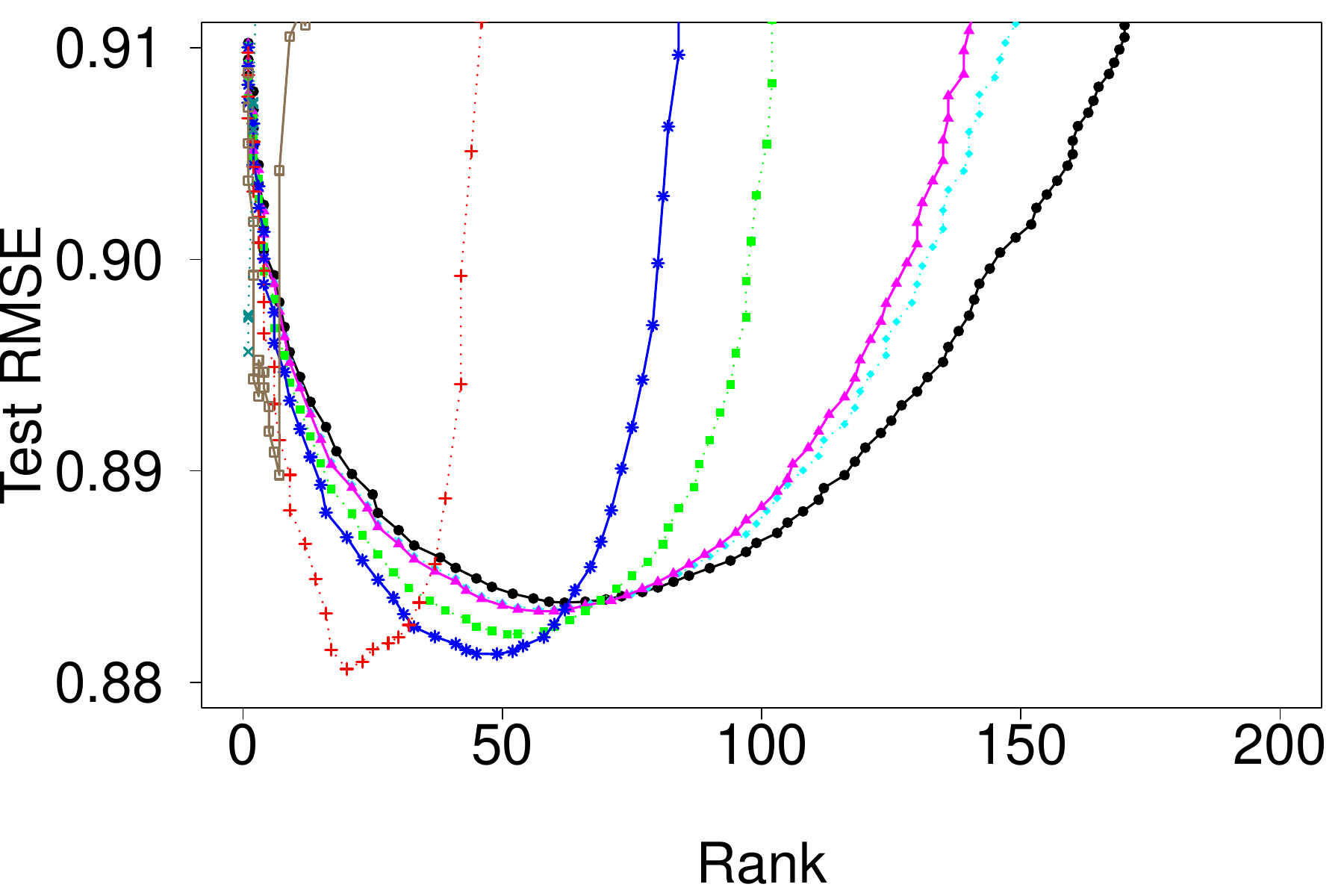}}}
%%----start of fourth subfigure----
\subfigure[MovieLens1m, $20$\% test data]{
%\label{fig:distplots:a} %% label for fourth subfigure
\scalebox{.9}{\includegraphics[width=2.6in, height=2.in]{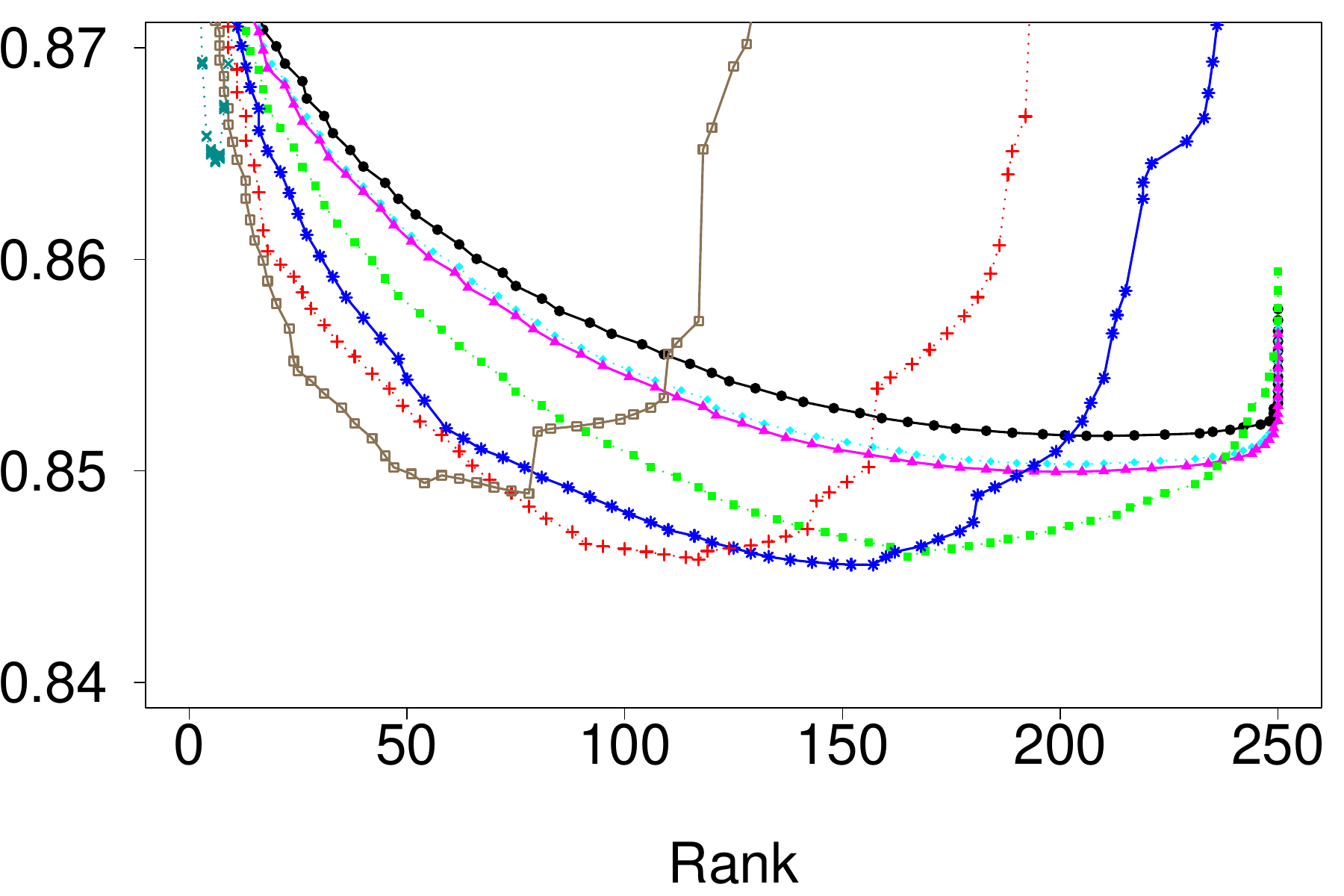}}}
\caption{\small (Color online) MovieLens 100k and 1m data. For each value of $\lambda$ in the solution path, an operating rank threshold (capped at 250) larger than the rank of the previous solution was employed.}\label{fig7}
%\label{fig:PlotCritFunc} %% label for entire figure
\end{center}
\end{figure*}

We study three different examples, where, for the true low-rank matrix $M=L \Phi R'$, we vary both the structure of the left and right singular vectors in $L$ and $R$, as well as the sampling scheme used to obtain the observed entries in $\Omega$. Our basic model is $Y_{ij} = M_{ij} + \B\varepsilon_{ij}$, where we observe entries $(i,j) \in \Omega$. 
%%\textcolor{red}{(when explicitly defining the noisy matrix completion problem, we should stick to a consistent notation for the ``signal" matrix and the ``noise" matrix throughout the paper. Currently, we use $M$ in the simulations but $\mu$ in equation (17) for the ``signal" matrix. Same is true for observed data, we have to be consistent between using $Y$ or $Z$; this issue has been raised before)}. 
We consider different types of missing patterns for $\Omega$, and various signal-to-noise (SNR) ratios for the Gaussian error term $\B\varepsilon$, defined here to be:
$$
\text{SNR} = \frac{\var(\text{vec}{(M)})}{\var (\text{vec}(\B\varepsilon))} \,.
$$
Accordingly, the (standardized) training and test error for the model are defined as:
%\begin{align*}
$$\text{Training Error}=\frac{\| \mathcal{P}_{\Omega}(Y-\hat{M}) \|^2_F}{\| \mathcal{P}_{\Omega}(Y) \|^2_F}, ~~\text{Test Error}=\frac{\| \mathcal{P}_{\Omega}^{\perp}(L\Phi R' - \hat{M}) \|^2_F}{\| \mathcal{P}_{\Omega}^{\perp}(L\Phi R') \|^2_F},$$
%\end{align*}
where a value greater than one for the test error indicates that the computed estimate $\hat{M}$ does a worse job at estimating $M$ than the zero solution, and the training error corresponds to the fraction of the error explained on the observed entries by the estimate $\hat{M}$ relative to the zero solution. 
%%\textcolor{red}{($\hat{M}$ or $\hat{\mu}$?)}

\paragraph{Example-A:} In our first simulation setting, we use the model 
$$Y_{m \times n}= L_{m \times r} \Phi_{r \times r} R_{r \times n}' + \B\varepsilon_{m \times n},$$
where $L$ and $R$ are matrices generated from the \textit{random orthogonal model} \citep{CandesRecht2009}, and the singular values $\Phi=\text{diag}(\phi_1,\ldots,\phi_r)$ are randomly selected as $\phi_1,\ldots,\phi_r \overset{\text{iid}}{\sim} \text{Uniform}(0,100)$. The set $\Omega$ is sampled uniformly at random. Recall that for this model, exact matrix completion in the noiseless setting is guaranteed as long as $|\Omega| \geq Cmr\log^4m$, for some universal constant $C$ \citep{Recht2011}. Under the noisy setting, \citet{MazumderEtal2010} show superior performance of nuclear norm regularization \textit{vis-\`a-vis} other matrix recovery algorithms \citep{CaiEtal2010,KeshavanEtal2010} in terms of achieving smaller test error. For the purposes herein, we fix $(m,n)=(800,400)$ and set the fraction of missing entries to $|\Omega^c| / mn = 0.9$ and $|\Omega^c| / mn = 0.95$.

\paragraph{Example-B:} In our second setting, we also consider the model 
$$Y_{m \times n}=L_{m \times r} \Phi_{r \times r} R_{r \times n}' +\B\varepsilon_{m \times n}, $$
but we now select matrices $L$ and $R$ which do not satisfy the incoherence conditions required for full matrix recovery. Specifically, for the choices of $(m,n,r)=(800,400,10)$ and $|\Omega^c| / mn = 0.9$, we select $L$ and $R$ to be block-diagonal matrices of the form
\[
L = \diag(L_1, \ldots, L_5), \quad \quad R = \diag(R_1, \ldots, R_5),
\]
%$$
%L=
%\begin{bmatrix}
%L_1 &  & \\
% & \ddots &  \\
%& & L_5
%\end{bmatrix}
%\qquad \text{and} \qquad
%R=
%\begin{bmatrix}
%R_1 &  & \\
% & \ddots &  \\
%& & R_5
%\end{bmatrix}\,,
%$$
where $L_i \in \mathbb{R}^{160 \times 2}$ and $R_i \in \mathbb{R}^{80 \times 2}$, $i=1,\ldots,5$, are random matrices with scaled Gaussian entries. The singular values are again sampled as $\phi_1,\ldots,\phi_r \overset{\text{iid}}{\sim} \text{Uniform}(0,100)$ with $\Omega$ being uniformly random over the set of indices. For this model, successful matrix completion is not guaranteed even for the noiseless problem with the nuclear norm relaxation, as the left and right singular vectors are not sufficiently spread. We observe the usefulness of the nonconvex regularized estimators in this regime, in our experimental results. 
%We would like to investigate the performance of nonconvex regularizers based on the MC+ family of penalties in this challenging scenario. A superior performance of these penalties over the usual nuclear norm regularization signals that nonconvex penalties are able to weaken the strong incoherence conditions required for successful nuclear norm matrix reconstruction.

\paragraph{Example-C:} In our third simulation setting, for the choice of $(m,n,r)=(100,100,10)$, we also generate $Y_{m \times n}= L_{m \times r} \Phi_{r \times r} R_{r \times n}' + \B\varepsilon_{m \times n}$ from the random orthogonal model as in our first setting, but we now allow the observed entries in $\Omega$ to follow a nonuniform sampling scheme. In particular, we fix $\Omega^c=\{1 \leq i,j \leq 100: 1 \leq i \leq 50, 51 \leq j \leq 100 \}$ so that 
$$
\mathcal{P}_{\Omega}(Y)=
\begin{bmatrix}
Y_{11} & 0 \\
Y_{21} & Y_{22}
\end{bmatrix} \,\;\;\text{where,}\;\;\; Y=
\begin{bmatrix}
Y_{11} & Y_{12} \\
Y_{21} & Y_{22}
\end{bmatrix},
$$
with the fraction of missing entries thus being $|\Omega^c| / mn = 0.25$. This is again a challenging simulation setting in which both the uniform \citep{CandesRecht2009} and independent \citep{ChenEtal2014} sampling scheme assumptions in $\Omega$ are violated. 
Our aim again is to explore whether the nonconvex MC+ family is able to outperform nuclear norm regularization in this regime. 
%even in this setting where no theoretical guarantees for exact matrix reconstruction exist.

For all three settings above,  we choose a $100 \times 25$ grid of $(\lambda,\gamma)$ values as follows. In each simulation instance we fix $\lambda_1=\|\mathcal{P}_{\Omega}(Y)\|_2$, the smallest value of $\lambda$ for which the nuclear norm regularized solution is zero, and set $\lambda_{100}=0.001 \cdot \lambda_1$. Keeping in mind that \textsc{NC-Impute} benefits greatly from using warm starts, we construct an equally spaced sequence of $100$ values of $\lambda$ decreasing from $\lambda_1$ to $\lambda_{100}$. We choose 25 $\gamma$-values in a logarithmic grid from 5000 to 1.1.
%The selected grid of $\gamma$ parameters controlling the nonconvexity of the MC+ family is given by
%$\{ 5000,1000,500,100,80,70,50,30,20,15,10,9,8,7,6, \allowbreak 5.5,5,4.5,4,3.5,3,2.5,2,1.5,1.1 \}$.
The results displayed in Figures \ref{fig4} -- \ref{fig6} show averages of training and test errors, as well as recovered ranks of the solution matrix $\hat{M}_{\lambda,\gamma}$ 
for the values of $(\lambda,\gamma)$, taken over 50 simulations under all three problem instances. The plots including rank reveal how effective the MC+ family is at recovering the true rank while minimizing prediction error. 
Throughout the simulations we keep an upper bound of the operating rank as  50.
%Throughout these simulations, if the warm started solution $M^{\text{old}}$ \textcolor{red}{($X^{\text{old}}$?)} has a rank higher than an ``operating rank" threshold of 50 along the $(\lambda,\gamma)$ grid (an option in the numerical implementation of our algorithm), we truncate the smaller singular values so as to deliver solutions with rank at most $50$.

\subsubsection{Discussion of Experimental Results}

We devote Figures \ref{fig4} and \ref{fig5} to analyze the simpler random orthogonal model (Example-A), leaving the more challenging coherent and nonuniform sampling settings (Example-B and Example-C) for Figure \ref{fig6}. In each case, the captions detail the results which we summarize here. The noise is quite high in Figure \ref{fig4} with $\text{SNR}=1$ and $90$\% of the entries missing in both displayed settings, while the model complexity decreases from a true rank of $10$ to $5$. The underlying true ranks remain the same in Figure \ref{fig5}, but the noise level has decreased to $\text{SNR}=5$ with the missing entries increasing to $95$\%. For each model setting considered, all nonconvex methods from the MC+ family outperform nuclear norm regularization in terms of prediction performance, while members of the MC+ family with smaller values of $\gamma$ are better at estimating the correct rank. The choices of $\gamma=30$ and $\gamma=20$ have the best performance in Figure \ref{fig4} (best prediction errors around the true ranks), while more nonconvex alternatives fare better in the high-sparsity, low-noise setting of Figure \ref{fig5}. In both figures, the performance of nuclear norm regularization is somewhat similar to the least nonconvex alternative displayed at $\gamma=100$, however, the bias induced in the estimation of the singular values of the low-rank matrix $M$ leads to the worst bias-variance trade-off among all training versus test error plots for the settings considered.

While the nuclear norm relaxation provides a good convex approximation for the rank of a matrix~\citep{RechtEtal2010}, these examples show that nonconvex regularization methods provide a superior mechanism for rank estimation. This is reminiscent of the performance of the MC+ penalty in the context of variable selection within high-dimensional sparse regression models. Although the $\ell_1$ penalty function represents the best convex approximation to the $\ell_0$ penalty, the gap bridged by the nonconvex MC+ penalty family provides a better basis for model selection, and hence rank estimation in the low-rank matrix completion setting.

For the coherent and nonuniform sampling settings of Figure \ref{fig6}, we choose the small noise scenario $\text{SNR}=10$ in order to favor all considered models. Despite the absence of any theoretical guarantees for successful matrix recovery, the nuclear norm regularization approach achieves a relatively small prediction error in all displayed instances. Nevertheless, the nonconvex MC+ family of penalties seems empirically more adept at overcoming the limitations of nuclear norm penalizated matrix completion in these challenging simulation settings.
%%theoretical challenges for full matrix reconstruction posed by these simulation settings.
In particular, the most aggressive nonconvex fitting behavior at $\gamma=5$ achieves excellent prediction performance in the nonuniform sampling setting \\ while correctly estimating the true rank of the coherent model. 
%In spite of these promising initial findings, a more thorough theoretical investigation of whether nonconvex regularization methods are able to overcome the incoherence and uniform sampling assumptions required for successful matrix reconstruction in the noisy setting \citep{CandesPlan2010,KeshavanEtal2010} is deferred for future work.

\begin{figure*}[tb!]
\begin{center}
{\sf {Real Data Example: Netflix}}\\
%%----start of first subfigure----
\subfigure[Netflix, test data=$1,500,000$ ratings]{
%\label{fig:distplots:a} %% label for first subfigure
\scalebox{.9}{\includegraphics[width=2.9in, height=2.7in]{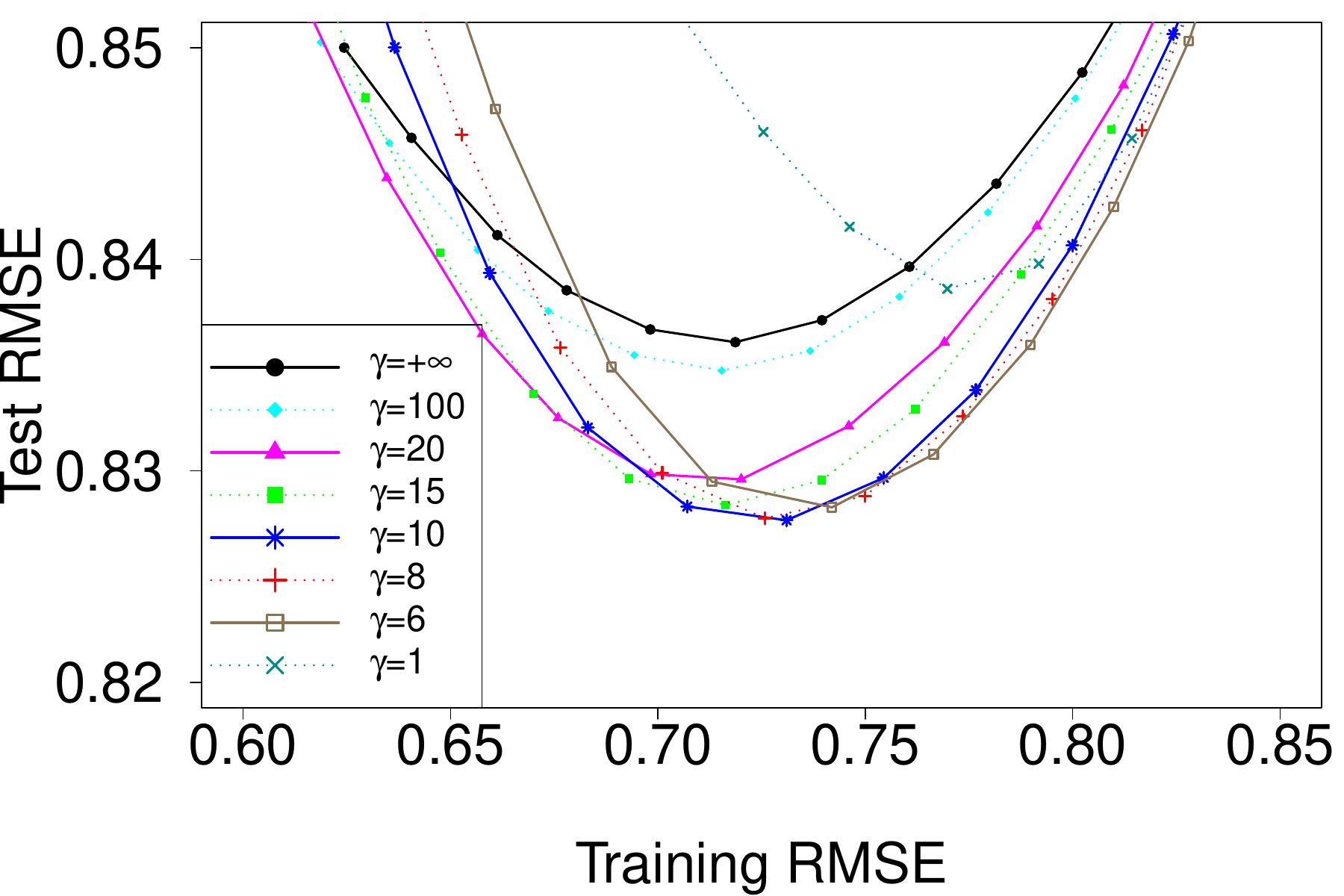}}}
%%----start of second subfigure----
\subfigure[Netflix, test data=$1,500,000$ ratings]{
%\label{fig:distplots:b} %% label for second subfigure
\scalebox{.9}{\includegraphics[width=2.9in, height=2.7in]{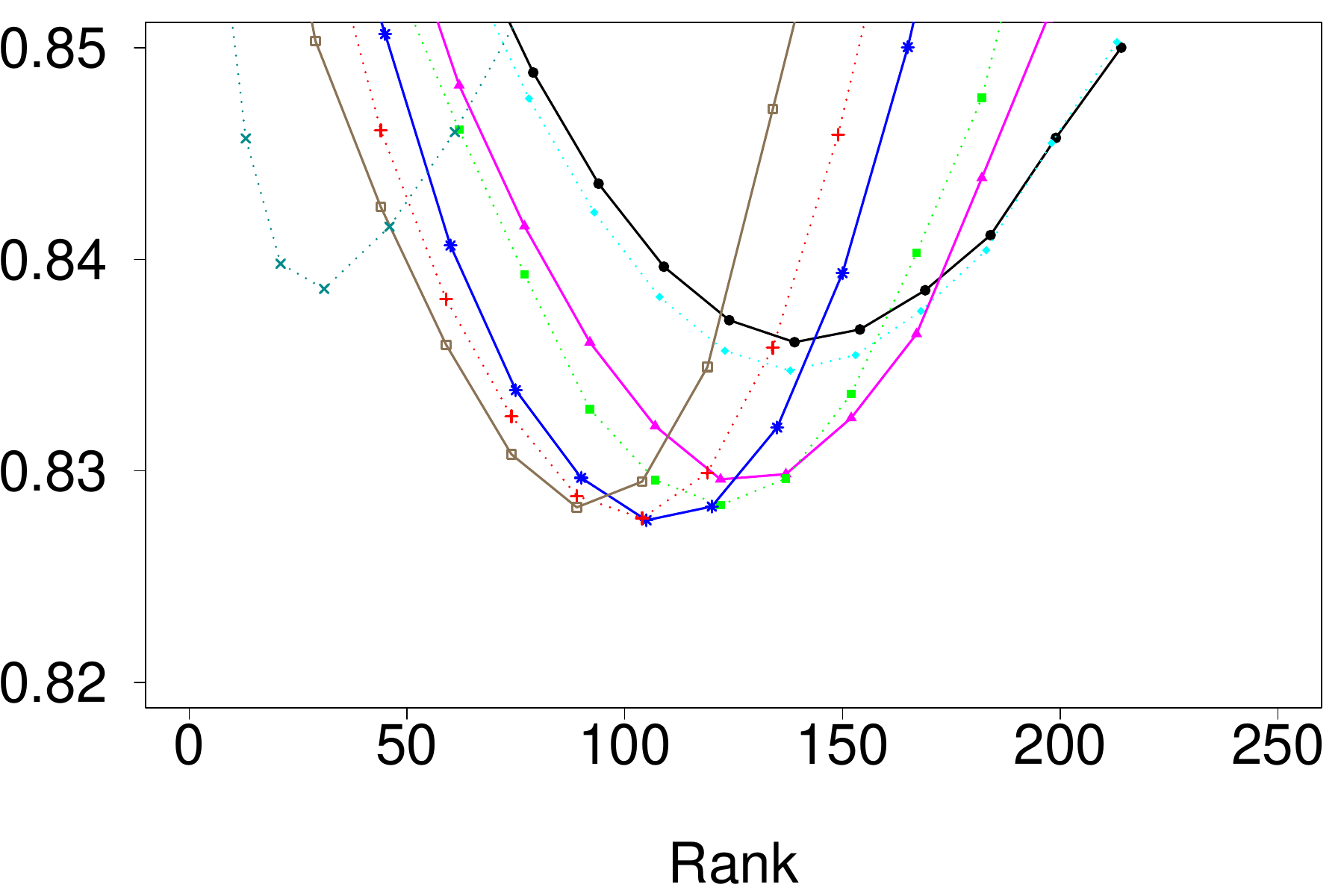}}}
\caption{\small (Color online) Netflix competition data. The model $\gamma=10$ achieves optimal test set RMSE of $0.8276$ for a solution rank of 105.}\label{fig8}
%\label{fig:PlotCritFunc} %% label for entire figure
\end{center}
\end{figure*}

\subsection{Real Data Examples: MovieLens and Netflix datasets}

We now use the real world recommendation system datasets \verb"ml100k" and \verb"ml1m" provided by MovieLens\footnote{Available at
\url{http://grouplens.org/datasets/movielens/}}, as well as the famous Netflix competition data to compare the usual nuclear norm approach with the MC+ regularizers. The dataset \verb"ml100k" consists of $100,000$ movie ratings (1--5) from $943$ users on $1,682$ movies, whereas \verb"ml1m" includes $1,000,209$ anonymous ratings from $6,040$ users on $3,952$ movies. In both datasets, for all regularization methods considered, a random subset of $80$\% of the ratings were  used for training purposes; the remaining were used as the test set.

We also choose a similar $100 \times 25$ grid of $(\lambda,\gamma)$ values, but for each value of $\lambda$ in the decreasing sequence, we use an ``operating rank" threshold somewhat larger than the rank of the previous solution, with the goal of always obtaining solution ranks smaller than the operating threshold. Following the approach of \citet{HastieEtal2015}, we perform row and column centering of the corresponding (incomplete) data matrices as a preprocessing step.

Figure \ref{fig7} compares the performance of nuclear norm regularization with the MC+ family of penalties on these datasets, in terms of the prediction error (RMSE) obtained from the left out portion of the data. While the fitting behavior at $\gamma=5$ seems to be overly aggressive in these instances, the choice $\gamma=10$ achieves the best test set RMSE with a minimum solution rank of 20 for the \verb"ml100k" data. With a higher test RMSE, nuclear norm regularization achieves its minimum with a less parsimonious model of rank 62. Similar results hold for the \verb"ml1m" data, where the model $\gamma=15$ achieves near optimal test RMSE at a solution rank of 115, while the best estimation accuracy of \textsc{Soft-Impute} occurs for ranks well over 200.

The Netflix competition data consists of $100,480,507$ ratings from $480,189$ users on $17,770$ movies. A designated probe set, a subset of $1,408,395$ of these ratings, was distributed to participants for calibration purposes, leaving $99,072,112$ for training. We did not consider the probe set as part of this numerical experiment, instead choosing \\ $1,500,000$ randomly selected entries as test data with the remaining $97,572,112$ used for training purposes. Similar to the MovieLens data, we select a $20 \times 25$ grid of $(\lambda,\gamma)$ values which adaptively chooses an operating rank threshold, and also remove row and columns means for prediction purposes.

As shown in Figure \ref{fig8}, the MC+ family again yields better prediction performance under more parsimonious models. On average, and for a convergence tolerance of $0.001$ in Algorithm \ref{aone}, the sequence of twenty models took under $10.5$ hours of computing on an Intel  E5-2650L cluster with 2.6 GHz processor. 
We note that our main goal here is to show the feasibility of applying \textsc{NC-Impute} to the MC+ family on a Netflix sized dataset, and further reductions in 
computation time may be possible with specialized implementations. It seems that using a family of nonconvex penalties leads to 
models with better statistical properties, when compared to the nuclear norm regularized problem and the rank constrained problem (obtained via \textsc{Hard-Impute}, for example).

\section{Conclusions and Discussions}\label{sec-concl}
In this paper we present a computational study for the noisy matrix completion problem with nonconvex spectral penalties --- we consider a family of spectral penalties that bridge the convex nuclear norm penalty and the rank penalty, leading to a family of estimators with varying degrees of shrinkage and nonconvexity. 
We propose~\NSI --- an algorithm that appropriately modifies and enhances the EM-stylized procedure \SI~\citep{MazumderEtal2010}, to compute a two dimensional 
family of solutions with specialized warm-start strategies. The main computational bottleneck of our algorithm is a low-rank SVD of a structured matrix, which is performed using a block QR stylized strategy that makes effective use of singular subspace warm-start information across iterations.  We discuss computational guarantees of our algorithm, including a finite time complexity analysis to a first order stationary point. We present a systematic study of various statistical and structural properties of spectral thresholding functions, which form a main building block in our algorithm. We demonstrate the impressive gains in statistical properties of our framework on a wide array of synthetic and real datasets. The current work leaves open several important directions for future research.
\begin{itemize}
\item \emph{Statistical guarantees for the nonconvex method}. In addition to the comprehensive algorithmic analysis presented in the paper, it is of great importance to establish statistical theories such as estimation error bounds to shed new lights on the empirical success of the proposed nonconvex method. In particular, given that our algorithm returns stationary points, it would be interesting to obtain statistical errors for these local optima. In fact, such type of results have been derived for regularized M-estimators in some general multivariate analysis settings when the penalty function is separable \citep{loh2015regularized}. A notable difficulty in the matrix completion problem is that the penalty is imposed on the singular values and is hence a non-separable function of the matrix, which will require more delicate analyses. Moreover, we should point out that statistical analysis of nonconvex optimization methods for matrix completion has been actively investigated in recent years. See the works surveyed in the last paragraph of Introduction and \cite{chi2019nonconvex} for a thorough review. However, the nonconvex methods studied in this line of research are exclusively based on low-rank matrix factorization formulation, and the regularization from these methods is less general than the ones in our method. The nonconvexity of the former methods is largely due to the matrix factorization which enables the reduction of memory and computation costs, while the nonconvexity of our method arises from the nonconvex penalties that aim to attenuate the bias. 
\item \emph{Sharp comparison between nonconvex and convex methods}. Referring to both simulation study and real data analysis in Section \ref{sec4}, we observe that the value of $\gamma$ leading to the optimal matrix completion performance lies between 5 and 30. Recall that the penalty parameter $\gamma$ in the MC+ penalties controls the amount of nonconvexity in the regularization. As $\gamma$ decreases from $\infty$ down to $1$, the penalty behaves closer to $\ell_0$ and farther away from $\ell_1$. Hence, the empirical results in Section \ref{sec4} demonstrate that neither the convex approach ($\gamma=+\infty)$ nor the most aggressive nonconvex one ($\gamma=1$) is the optimal choice. This phenomenon is in fact a manifestation of bias-variance tradeoff. A smaller value of $\gamma$ brings more ``nonconvexity" to the regularization and hence induces less bias as expected. On the other hand, more ``nonconvexity" means more aggressiveness in the selection of low rank matrices and thus results in larger variance. Consequently, for a given level of signal-to-noise ratio in the observations, the optimal $\gamma$ is the one that strikes the best balance between bias and variance. This point of view lends further support to our proposed nonconvex method which incorporates the entire family of nonconvex penalties instead of some particular instantiations. Recent works by a subset of the authors~\citep{zheng2017does, weng2018overcoming, wang2019bridge} have given sharp theoretical characterizations of such a phenomenon in the high-dimensional sparse regression and variable selection problems. The results there reveal that among the $\ell_q$ penalties for $q\in [0,2]$, as the signal-to-noise ratio (SNR) decreases, the optimal value of $q$ will move from $0$ towards $2$. See also the work of~\cite{hazimeh2019fast, mazumder2017subset} for similar observations regarding the overfitting of $\ell_0$-based estimators for low SNR regimes. For the MC+ penalties in the matrix completion problem, $\gamma$ plays a similar role as $q$ does in the regression problem. It would be of great interest to derive analogue theories for the matrix completion problem and establish a sharp characterization of the proposed nonconvex method. 

%Asymptotic analyses in \cite{zheng2017does, wang2019bridge} have provided theoretical explanations for such phenomena in high-dimensional sparse regression and variable selection problems.\textcolor{red}{(cite rahul's work)} It would be of great interest to derive analogue theories for the matrix completion problem and establish a sharp characterization of the proposed nonconvex method. 

\end{itemize}

\section{Appendix}
\label{last:appen}

\subsection{Additional Technical Material}
\label{app}

\begin{lemma}\label{lemma3}(Marchenko-Pastur law \citep{BaiSilverstein2010}).
Let $X\in \mathbb{R}^{m \times n}$, where $X_{ij}$ are \text{iid} with $\mathbb{E}(X_{ij})=0, \mathbb{E}(X_{ij}^2)=1$, and $m>n$. Let $\lambda_1\leq \lambda_2 \leq \dots \leq \lambda_n$ be the eigenvalues of $Q_m=\frac{1}{m}X'X$. Define the random spectral measure
$$\mu_n=\frac{1}{n}\sum_{i=1}^n\delta_{\lambda_i}\,.$$
Then, assuming $n/m \rightarrow \alpha \in (0,1]$, we have
$$\mu_n(\cdot, \omega) \rightarrow \mu~~a.s.,$$
where $\mu$ is a deterministic measure with density
\begin{equation*}\label{app2}
\frac{d\mu}{dx}=\frac{\sqrt{(\alpha_+-x)(x-\alpha_-)}}{2\pi \alpha x}I(\alpha_-\leq x \leq \alpha_+).
\end{equation*}
Here, $\alpha_+=(1+\sqrt{\alpha})^2\,$ and $\, \alpha_-=(1-\sqrt{\alpha})^2$.
\end{lemma}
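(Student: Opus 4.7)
My plan is to use the Stieltjes transform method, which is the standard tool for such sample covariance convergence results. For any $z\in\mathbb{C}^+$, define the Stieltjes transform of the empirical spectral measure as
\begin{equation*}
s_n(z) = \int \frac{d\mu_n(x)}{x-z} = \frac{1}{n}\operatorname{tr}\!\bigl((Q_m - zI_n)^{-1}\bigr).
\end{equation*}
Since weak convergence of measures on $\mathbb{R}$ is equivalent to pointwise convergence of Stieltjes transforms on $\mathbb{C}^+$, it suffices to show that $s_n(z)\to s(z)$ almost surely for every fixed $z\in\mathbb{C}^+$, where $s(z)$ is the Stieltjes transform of the candidate limit $\mu$ in~\eqref{app2}, and then invoke Stieltjes inversion to identify the density.

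The first step is concentration: I would show $s_n(z) - \mathbb{E}s_n(z)\to 0$ a.s. via a martingale decomposition along the filtration generated by the rows of $X$. Writing $s_n(z) - \mathbb{E}s_n(z) = \sum_{k=1}^m (\mathbb{E}_k - \mathbb{E}_{k-1})s_n(z)$, the rank-one perturbation estimate $|\operatorname{tr}((A-zI)^{-1}) - \operatorname{tr}((A+vv^*-zI)^{-1})|\le 1/\operatorname{Im}(z)$ bounds each martingale difference by $O(1/(n\,\operatorname{Im}(z)))$, so Azuma–Hoeffding plus Borel–Cantelli yields almost sure convergence. A preliminary truncation and centering step (replacing $X_{ij}$ by $X_{ij}\mathbf{1}(|X_{ij}|\le \delta\sqrt{m})$ and re-normalizing) reduces to the bounded-entry case without loss of generality under only two moments, using the fact that the bounded Lipschitz distance on spectral measures is controlled by the Frobenius norm of the perturbation via Hoffman–Wielandt.

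The core step is deriving the self-consistent equation for $\bar s_n(z) := \mathbb{E}s_n(z)$. Using the block/resolvent identity $(Q_m - zI_n)^{-1}_{jj} = 1/\bigl(\frac{1}{m}x_j^\top x_j - z - \frac{1}{m^2}x_j^\top X_{(j)}^\top(Q_{m,(j)} - zI_{n-1})^{-1}X_{(j)}x_j\bigr)$, where $X_{(j)}$ and $Q_{m,(j)}$ delete the $j$-th column, combined with the trace lemma (if $y$ has iid zero-mean, unit-variance entries and $B$ is independent of $y$, then $\frac{1}{m}y^\top B y - \frac{1}{m}\operatorname{tr}(B)\to 0$ in probability under the truncated moment bounds), gives
\begin{equation*}
\bar s_n(z) = \frac{1}{-z + \frac{1}{1 + \alpha \bar s_n(z)}} + o(1).
\end{equation*}
Passing to the limit along $n/m\to\alpha$, any subsequential limit $s(z)$ satisfies the quadratic fixed-point equation $\alpha z\,s(z)^2 + (z-1+\alpha)s(z) + 1 = 0$, which has a unique solution in $\mathbb{C}^+$, namely
\begin{equation*}
s(z) = \frac{1-\alpha - z + \sqrt{(z-1-\alpha)^2 - 4\alpha}}{2\alpha z},
\end{equation*}
with the principal branch chosen so that $\operatorname{Im} s(z) > 0$ on $\mathbb{C}^+$. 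Applying Stieltjes inversion, $\frac{d\mu}{dx} = \frac{1}{\pi}\lim_{\eta\downarrow 0}\operatorname{Im} s(x+i\eta)$, recovers precisely the density in~\eqref{app2} with support $[\alpha_-,\alpha_+]$.

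The main obstacle is the self-consistent equation step, where one must justify the trace lemma and the decoupling of $x_j$ from $(Q_{m,(j)}-zI)^{-1}$ under only the assumed second moment hypothesis; this is where the truncation argument is genuinely needed and where the bookkeeping on error terms (bounded via $|\operatorname{Im} z|^{-1}$ and $1/\sqrt{m}$) is most delicate. The solution of the quadratic and the Stieltjes inversion, by contrast, are routine calculus exercises once the equation is established.
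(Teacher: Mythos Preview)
The paper does not supply its own proof of this lemma: it is stated as a classical result and attributed to \cite{BaiSilverstein2010}, so there is no ``paper's proof'' to compare against. Your sketch is the standard Stieltjes-transform route (truncation, martingale concentration of $s_n(z)$, Schur-complement/resolvent expansion plus the quadratic-form trace lemma to obtain the self-consistent equation, then Stieltjes inversion), and it lands on the correct fixed-point quadratic $\alpha z\,s^2 + (z-1+\alpha)s + 1 = 0$ and the correct density. This is essentially the Bai--Silverstein argument the paper is citing.

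One minor inconsistency worth cleaning up if you write this out in full: the Schur-complement identity you display corresponds to deleting a \emph{column} $x_j$ of $X$ (so the inner matrix should read $X_{(j)}(Q_{m,(j)}-zI)^{-1}X_{(j)}'$ rather than with the transposes swapped), whereas the intermediate form $\bar s_n(z) = \bigl(-z + 1/(1+\alpha\bar s_n(z))\bigr)^{-1} + o(1)$ is the expression one usually reaches after working with the companion matrix $\tfrac{1}{m}XX'$ and deleting a \emph{row}. Both routes are valid and yield the same quadratic, but mixing the two in a single line of reasoning leaves a gap in the bookkeeping; pick one and carry the $\tfrac{1}{m}\operatorname{tr}(B)$ computation through consistently. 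Apart from that, the outline is sound.
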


\subsubsection{Proof of Proposition \ref{propfive}.} \label{proof-prop-five}
\begin{proof}

In the following proof, we make use of the notation: $\Theta_1(\cdot)$ and $\Theta_2(\cdot)$, defined as follows. 
For two positive sequences $a_{k}$ and $b_k$, we say $a_k= \Theta_2(b_k)$
if there exists a constant $c>0$ such that
$a_k \geq c b_k$ and we say $a_k = \Theta_1(b_k)$, whenever,
%$a_k \geq c b_k$, and we say $a_k= \Theta_1(b_k)$ whenever 
$a_k = \Theta_2(b_k)$ and $b_k=\Theta_2(a_k)$.

We first consider the case $\lambda_n=\Theta_1(\sqrt{m})$ 
%\textcolor{red}{(maybe we should clarify all the notations which have not been specified in one place like $\Theta_2(\cdot), \Theta_1(\cdot)$?)}. 
%In a slight abuse of notation, note $\lambda_n$ here is not the $n$th eigenvalue in Lemma \ref{lemma3} \textcolor{red}{(how about just changing the notation from Lemma 1?)}. 
For simplicity, we assume $\lambda_n=\zeta\sqrt{m}$ for some constant $\zeta>0$. Denote $df(S_{\lambda_n,\gamma}(Z))= D_{\lambda_n,\gamma}$, and use $\mathcal{T}_{t_1,t_2}$ to represent
\begin{align*}
\frac{\sqrt{mt_1} s_{\lambda_n,\gamma}(\sqrt{mt_1}) - \sqrt{mt_2} s_{\lambda_n,\gamma}(\sqrt{mt_2})}{mt_1-mt_2} \mathbbm{1}(t_1\neq t_2).
\end{align*}
Adopting the notation from Lemma \ref{lemma3}, it is not hard to verify that
\begin{align*}
D_{\lambda_n,\gamma} = n \mathbb{E}_{\mu_n} \bigg\{ s'_{\lambda_n,\gamma}(\sqrt{mt_1}) + |m-n|\frac{s_{\lambda_n,\gamma}(\sqrt{mt_1})}{\sqrt{mt_1}} \bigg\} + n^2 \mathbb{E}_{\mu_n} (\mathcal{T}_{t_1,t_2})\,,
\end{align*}
where $t_1, t_2 \overset{\text{iid}}{\sim} \mu_n$. A quick check of the relation between $s_{\lambda_n,\gamma}$ and $g_{\zeta,\gamma}$ yields
\begin{eqnarray*}
\frac{D_{\lambda_n,\gamma}}{mn}=\frac{1}{m}\mathbb{E}_{\mu_n}s'_{\lambda_n,\gamma}(\sqrt{mt_1})+\left(1-\frac{n}{m}\right)\mathbb{E}_{\mu_n}g_{\zeta,\gamma}(t_1)+\frac{n}{m} \mathbb{E}_{\mu_n} \left\{\frac{t_1g_{\zeta,\gamma}(t_1)-t_2g_{\zeta,\gamma}(t_2)}{t_1-t_2}\mathbbm{1}(t_1\neq t_2) \right\}\,.
\end{eqnarray*}
Due to the Lipschitz continuity of the functions $s_{\lambda_n,\gamma}(x)$ and $xg_{\zeta, \gamma}(x)$, we obtain
\begin{eqnarray*}
\Big| \frac{D_{\lambda_n,\gamma}}{mn} \Big| \leq \frac{\gamma}{m(\gamma-1)}+\left(1-\frac{n}{m}\right) + \frac{n}{m}\left(\frac{2\gamma-1}{2\gamma-2}\right) \,.
\end{eqnarray*}
Hence, there exists a positive constant $C_{\alpha}$, such that for sufficiently large $n$,
$$
\Big| \frac{D_{\lambda_n,\gamma}}{mn} \Big| \leq C_{\alpha}, \quad \, a.s.
$$
Let $T_1,T_2$ be two independent random variables generated from the Marchenko-Pastur distribution $\mu$. If we can show 
\begin{align*}
\frac{D_{\lambda_n,\gamma}}{mn} \overset{a.s.}{\rightarrow} (1-\alpha)\mathbb{E}g_{\zeta,\gamma}(T_1) + \alpha \mathbb{E}\left(\frac{T_1g_{\zeta,\gamma}(T_1)-T_2g_{\zeta,\gamma}(T_2)}{T_1-T_2}\right),
\end{align*}
then by the Dominated Convergence Theorem (DCT), we conclude the proof in the $\lambda_n=\Theta_1(\sqrt{m})$ regime. Note immediately that
\begin{eqnarray}\label{mndfone}
\frac{1}{m} \mathbb{E}_{\mu_n}s'_{\lambda_n,\gamma}(\sqrt{mt_1}) \rightarrow 0 \quad a.s.
\end{eqnarray}
Moreover, given that $g_{\zeta,\gamma}(\cdot)$ is bounded and continuous, the Marchenko-Pastur theorem in Lemma \ref{lemma3} implies
\begin{eqnarray}\label{mndftwo}
\left(1-\frac{n}{m}\right) \mathbb{E}_{\mu_n} g_{\zeta,\gamma}(t_1) \rightarrow (1-\alpha) \mathbb{E}_{\mu} g_{\zeta,\gamma}(T_1) \quad a.s.
\end{eqnarray}
Since $(t_1, t_2) \overset{d}{\rightarrow} (T_1, T_2)$, and the discontinuity set of the function $\frac{t_1g_{\zeta,\gamma}(t_1)-t_2g_{\zeta,\gamma}(t_2)}{t_1-t_2}\mathbbm{1}(t_1\neq t_2)$ has zero probability under the measure induced by $(T_1,T_2)$, by the continuous mapping theorem,
\begin{align*}
\frac{t_1g_{\zeta,\gamma}(t_1)-t_2g_{\zeta,\gamma}(t_2)}{t_1-t_2}\mathbbm{1}(t_1\neq t_2) \overset{d}{\rightarrow} \frac{T_1g_{\zeta,\gamma}(T_1)-T_2g_{\zeta,\gamma}(T_2)}{T_1-T_2}\mathbbm{1}(T_1 \neq T_2) \quad \text{as } \, n \rightarrow \infty\,.
\end{align*}
Also, due to the boundedness of $\frac{t_1g_{\zeta,\gamma}(t_1)-t_2g_{\zeta,\gamma}(t_2)}{t_1-t_2}\mathbbm{1}(t_1\neq t_2)$, it holds that
\begin{align}\label{mndfthree}
\mathbb{E}_{\mu_n} \left\{ \frac{t_1g_{\zeta,\gamma}(t_1)-t_2g_{\zeta,\gamma}(t_2)}{t_1-t_2}\mathbbm{1}(t_1\neq t_2) \right\} \overset{a.s.}{\rightarrow}  \mathbb{E}_{\mu} \left\{ \frac{T_1g_{\zeta,\gamma}(T_1)-T_2g_{\zeta,\gamma}(T_2)}{T_1-T_2}\mathbbm{1}(T_1 \neq T_2)\right\}.
\end{align}
Combining \eqref{mndfone} -- \eqref{mndfthree} completes the proof for the $\lambda_n=\Theta_1(\sqrt{m})$ case.

When $\lambda_n=o(\sqrt{m})$, we can readily see that $$\mathbb{E}_{\mu_n}\mathbbm{1}(\sqrt{mt_1} \geq \lambda_n \gamma) \rightarrow 1, a.s.$$ Using that both $\frac{s_{\lambda_n,\gamma}(\sqrt{mt_1})}{\sqrt{mt_1}}\,$ and $\mathcal{T}_{t_1,t_2}$ are bounded, we have, almost surely
\begin{align*}
\mathbb{E}_{\mu_n}\frac{s_{\lambda_n,\gamma}(\sqrt{mt_1})}{\sqrt{mt_1}} = &\mathbb{E}_{\mu_n}\mathbbm{1}(\sqrt{mt_1}\geq \lambda_n \gamma) + \mathbb{E}_{\mu_n} \left\{ \frac{s_{\lambda_n, \gamma}(\sqrt{mt_1})}{\sqrt{mt_1}} \mathbbm{1}(\sqrt{mt_1}< \lambda_n \gamma) \right\} \rightarrow 1
\end{align*}
and
\begin{align*}
 \mathbb{E}_{\mu_n}(\mathcal{T}_{t_1,t_2}) = \mathbb{E}_{\mu_n} \mathbbm{1}(\sqrt{mt_1} \geq \lambda_n \gamma) \mathbbm{1}(\sqrt{mt_2} \geq \lambda_n \gamma) + o(1) \rightarrow 1.
\end{align*}
Invoking DCT completes the proof. Similar arguments hold for the case $\lambda_n=\Theta_2(\sqrt{m})$. 
%\box
%%\qed
\end{proof}

\subsubsection{Proof of Proposition~\ref{prop-subsp-stab1}}\label{proof-prop-subsp-stab1}
\begin{proof}
Observe that $R$ as defined in Proposition~\ref{mat-pert-svd-2} can be written as:
\begin{equation}
\begin{aligned}
 R &=&  \widetilde{A}\widetilde{V}_{1}  - \widetilde{U}_{1}\widetilde{\Sigma}_{1}   +  (A - \widetilde{A})\widetilde{V}_{1} &=& (A - \widetilde{A})\widetilde{V}_{1}
 \end{aligned} 
\end{equation}
where, above we have used the fact that
$\widetilde{A}\widetilde{V}_{1}  = \widetilde{U}_{1}\widetilde{\Sigma}_{1}$, which follows from the 
definition of the SVD of $\widetilde{A}$. By a simple inequality it follows that 
\begin{equation}\label{bound-R1}
 \| R \|_2 \leq \|(A - \widetilde{A})\|_2 \| \widetilde{V}_{1}\|_2  =  \|(A - \widetilde{A})\|_2,
 \end{equation}
where we have used the 
%fact that  $\| (A - \widetilde{A})\widetilde{V}_{1} \|_2 \leq  \|(A - \widetilde{A})\|_2 \| \widetilde{V}_{1}\|_2$ 
 the fact that $\| \widetilde{V}_{1}\|_2 =  1$.  Similarly, we have an analogous result for $Q$:
\begin{equation}\label{bound-S1}
 \| Q \|_2 \leq \|(A - \widetilde{A})\|_2 \| \widetilde{U}_{1}\|_2 =   \|(A - \widetilde{A})\|_2.
 \end{equation}
Note that~\eqref{bound-R1} and~\eqref{bound-S1} together imply that if  $\|\widetilde{A} - A \|_2$ is small, then so are $\| R\|_2, \|Q\|_2$.  

We now apply~\eqref{ineq-rho-r} (Proposition~\ref{mat-pert-svd-2}) with $A = X_{k}$ and $\widetilde{A} = X_{k+1}$ and $r_{1} = p$, to arrive at the proof of 
Proposition~\ref{prop-subsp-stab1}.
\end{proof}

\subsubsection{Proof of Proposition~\ref{prop-asympt-conv}}\label{pf-asympt-conv}
\begin{proof}
Proof of Part ({{a}}):\\
Let us write the stationary conditions for every update: $$X_{k+1} = \argmin_{X} \; F_{\ell}(X;X_{k}).$$ We set the subdifferential of the map
$X \mapsto  F_{\ell}(X;X_{k})$ to zero at $X = X_{k+1}$:
\begin{align}\label{deriv-1-1}
\left( X_{k+1} -  \left(\mathcal{P}_{\Omega}(Y)+ \mathcal{P}_{\Omega}^\perp(X_{k}) \right) \right) + \ell (X_{k+1} - X_{k}) +  U_{k+1} \nabla_{k+1} V_{k+1}' = 0,
\end{align}
where $X_{k+1} = U_{k+1} \diag(\B\sigma_{k+1})V'_{k+1}$ is the SVD of $X_{k+1}$. Note that the term: $U_{k+1} \nabla_{k+1} V_{k+1}'$ in~\eqref{deriv-1-1}, 
is a subdifferential~\citep{Lewis1995} of the spectral function: 
$$X \mapsto \sum_{i} P(\sigma_{i}(X); \lambda, \gamma),$$ where
$\nabla_{k+1}$ is a diagonal matrix with the $i$th diagonal entry being a derivative of the map $\sigma_{i} \mapsto P(\sigma_{i}; \lambda,\gamma)$  (on $\sigma_{i} \geq 0$), denoted by $\partial P(\sigma_{k+1, i}; \lambda,\gamma)/\partial \sigma_{i}$ for all $i$. Note that~\eqref{deriv-1-1} can be rewritten as:
\begin{align*}\label{deriv-1-2}
\mathcal{P}_{\Omega}(X_{k+1}) - \mathcal{P}_{\Omega}(Y)  +U_{k+1} \nabla_{k+1} V_{k+1}' +  \underbrace{\left ( \mathcal{P}_{\Omega}^\perp(X_{k+1} - X_{k})  + \ell (X_{k+1} - X_{k}) \right)}_{(a)}  = 0.
\end{align*}
As $k \rightarrow \infty$, term (a) converges to zero (See Proposition~\ref{conv-rate-1}) and thus, we have:
\begin{equation*}\label{deriv-1-3}
\mathcal{P}_{\Omega}(X_{k+1}) - \mathcal{P}_{\Omega}(Y)  +  U_{k+1} \nabla_{k+1} V_{k+1}' \rightarrow  0.
\end{equation*}
Let us denote the $i$th column of $U_{k}$ by ${u}_{k,i}$, and use a similar notation for $V_{k}$ and $v_{k,i}$. Let $r_{k+1}$ denote the rank of $X_{k+1}$.
Hence, we have:
$$  \sum_{i=1}^{r_{k+1}}\sigma_{k+1,i} \mathcal{P}_{\Omega}(u_{k+1,i} v_{k+1, i}') - \mathcal{P}_{\Omega}(Y)  +  U_{k+1} \nabla_{k+1} V_{k+1}' \rightarrow  0.$$
%%P_{\Omega}(X_{k+1}) = \sum_{i=1}^{r_{k+1}} \sigma_{k+1,j} P_{\Omega}(u_{k+1,j} v_{k+1, j}'),$ since $U_{k+1} \diag(\B\sigma_{k+1})V'_{k+1}$ is the SVD of $X_{k+1}$:
Multiplying the left and right hand sides of the above by $u'_{k+1,j}$ and $v_{k+1,j}$, we have the following:
\begin{align*}\label{deriv-1-4}
\sum_{i=1}^{r_{k+1}} \sigma_{k+1,i} u'_{k+1,j}\mathcal{P}_{\Omega}(u_{k+1,i}v'_{k+1,i})v_{k+1,j} - u'_{k+1,j}\mathcal{P}_{\Omega}(Y)v_{k+1,j} +  \nabla_{k+1,j} \rightarrow  0,
\end{align*}
%%\textcolor{red}{(the above should be $\sum_{i=1}^{r_{k+1}} \sigma_{k+1,i} u'_{k+1,j}P_{\Omega}(u_{k+1,i}v'_{k+1,i})v_{k+1,j} - u'_{k+1,j}P_{\Omega}(Y)v_{k+1,j}  +  \nabla_{k+1,j} \rightarrow  0$)}
 for  $j = 1, \ldots, r_{k+1}.$
Let  $\left \{ \bar{U}, \bar{V} \right\}$ denote a limit point of the sequence $\left \{U_{k},V_{k}\right\}$ (which exists since the sequence is bounded); and let $r$ be the rank of $\bar{U}$ and $\bar{V}$. Let us now study the following equations\footnote{Note that we do not assume that the sequence $\B\sigma_{k}$ has a limit point.}:
\begin{equation}\label{deriv-1-5}
\sum_{i=1}^{r} \bar{\sigma}_{j} \bar{u}'_{j}\mathcal{P}_{\Omega}(\bar{u}_{i}\bar{v}'_{i})\bar{v}_{j} - \bar{u}'_{j}\mathcal{P}_{\Omega}(Y)\bar{v}_{j}  +  \bar{\nabla}_{j} =  0, \;\;\; j = 1, \ldots, r.
\end{equation}
%which can be interpreted as the stationary equation with respect to $\B\sigma$.
%\textcolor{red}{(the above should be $\sum_{i=1}^{r} \bar{\sigma}_{i} \bar{u}'_{j}P_{\Omega}(\bar{u}_{i}\bar{v}'_{i})\bar{v}_{j} - \bar{u}'_{j}P_{\Omega}(Y)\bar{v}_{j}  +  \bar{\nabla}_{j} =  0, \;\;\; j = 1, \ldots, r.$)}
Using the notation  $\bar{\theta}_{j} = \text{vec} \left( \mathcal{P}_{\Omega}(\bar{u}_{j}\bar{v}'_{j}) \right)$ and $\bar{y} = \text{vec}(\mathcal{P}_{\Omega}(Y))$,
we note that~\eqref{deriv-1-5} are the first order stationary conditions for a point $\bar{\B\sigma}$ for the following penalized regression problem:
\begin{equation}\label{deriv-1-6}
\mini_{\B\sigma} \;\; \frac12 \| \sum_{j=1}^{r} \sigma_{j} \bar{\theta}_{j} - \bar{y} \|_{2}^2  + \sum_{j=1}^{r} P(\sigma_{j}; \lambda,\gamma),
\end{equation}
with $\B\sigma \geq \M{0}$. If the matrix $\bar{\Theta} = [\bar\theta_{1}, \ldots, \bar\theta_{r}]$ (note that $\bar{\Theta} \in \mathbb{R}^{mn \times r}$) has rank $r$, then any $\B\sigma$ that satisfies~\eqref{deriv-1-5} is finite --- in particular, the sequence $\B\sigma_{k}$ is bounded and has a limit point: $\bar{\B\sigma}$ which satisfies the first order stationary condition~\eqref{deriv-1-5}.

Proof of Part ({{b}}):\\
Furthermore, if we assume that
$$\lambda_{\min}( \bar{\Theta}'\bar{\Theta}) + \phi_P > 0,$$
then~\eqref{deriv-1-6} admits a unique solution $\bar{\B\sigma}$, which implies that $\B\sigma_{k}$ has a unique limit point, and hence the sequence $\B\sigma_{k}$ necessarily converges.

%%%\textcolor{red}{(Recht, Fazel, Parrilo reference is repeated twice throughout the text and in the References section)}

\end{proof}

\subsection{Additional Simulation Results}
\label{add:simu}

This section contains additional numerical results from the simulation study in Section \ref{simulation:study}. 
\begin{itemize}
\item To demonstrate the variation of the procedures in the experiments, we plot the averaged value and standard error of both test error and rank for some representative nonconvex penalty functions. Specifically, under each scenario considered in Section \ref{simulation:study}, we pick the nonconvex penalty that yields the best prediction and rank estimation performance. For each picked penalty, we plot the averaged value of test error and rank along with the associated standard error, against the tuning parameter $\lambda$. The results are shown in Figures \ref{figadd1}, \ref{figadd2}, and \ref{figadd3}. As is clear form the figures, the standard error is typically (at least) one order of magnitude smaller than the average. Moreover, the general patterns of test error and rank on the solution path are expected, except for a few points corresponding to very small values of $\lambda$. The irregularity of these few points occurs probably because the solutions are getting unstable as the nonconvex regularization becomes weak when $\lambda$ is significantly small. 
\item To examine the rank dynamics of the updates in \textsc{NC-Impute}, we compute the number of iterations that the algorithm takes for the convergence of the rank. We choose the same six non-convex penalties as above and evaluate the rank stabilization for several values of $\lambda$. The results are summarized in Figure \ref{figadd4}. One clearly observes that except for few instances, it takes less than 10 iterations for the rank to stabilize. Moreover, when the penalty is more ``nonconvex" (i.e., $\gamma$ is smaller), the rank stabilization occurs earlier. These empirical results provide complementary information on rank stabilization that has been theoretically investigated in \ref{rankstable:sec}.
\end{itemize}

\begin{figure*}[tb!]
\begin{center}
{\bf {Example-A}} (Low SNR, less missing entries) \\
%%----start of first subfigure----
\subfigure[ROM, $90$\% missing, $\text{SNR}=1$, $\text{true rank}=10$]{
%\label{fig:distplots:a} %% label for first subfigure
\scalebox{.9}{\includegraphics[width=2.8in, height=2.3in]{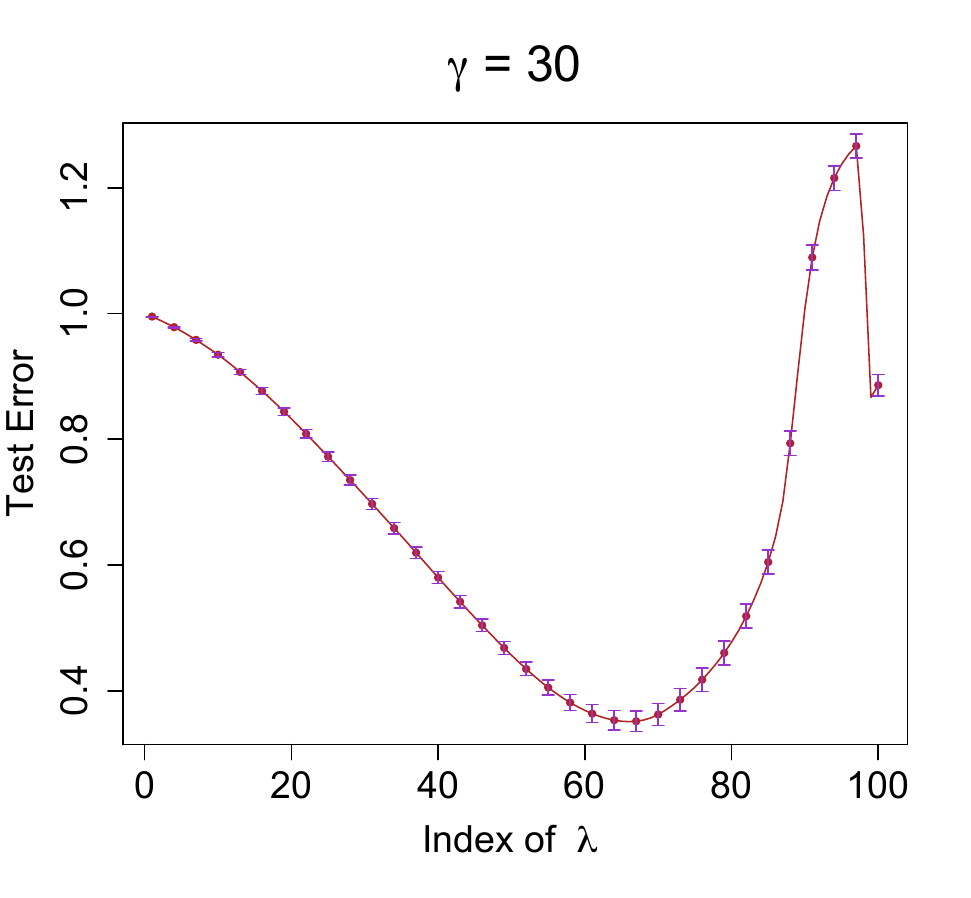}}}
%%----start of second subfigure----
\subfigure[ROM, $90$\% missing, $\text{SNR}=1$, $\text{true rank}=5$]{
%\label{fig:distplots:b} %% label for second subfigure
\scalebox{.9}{\includegraphics[width=2.8in, height=2.3in]{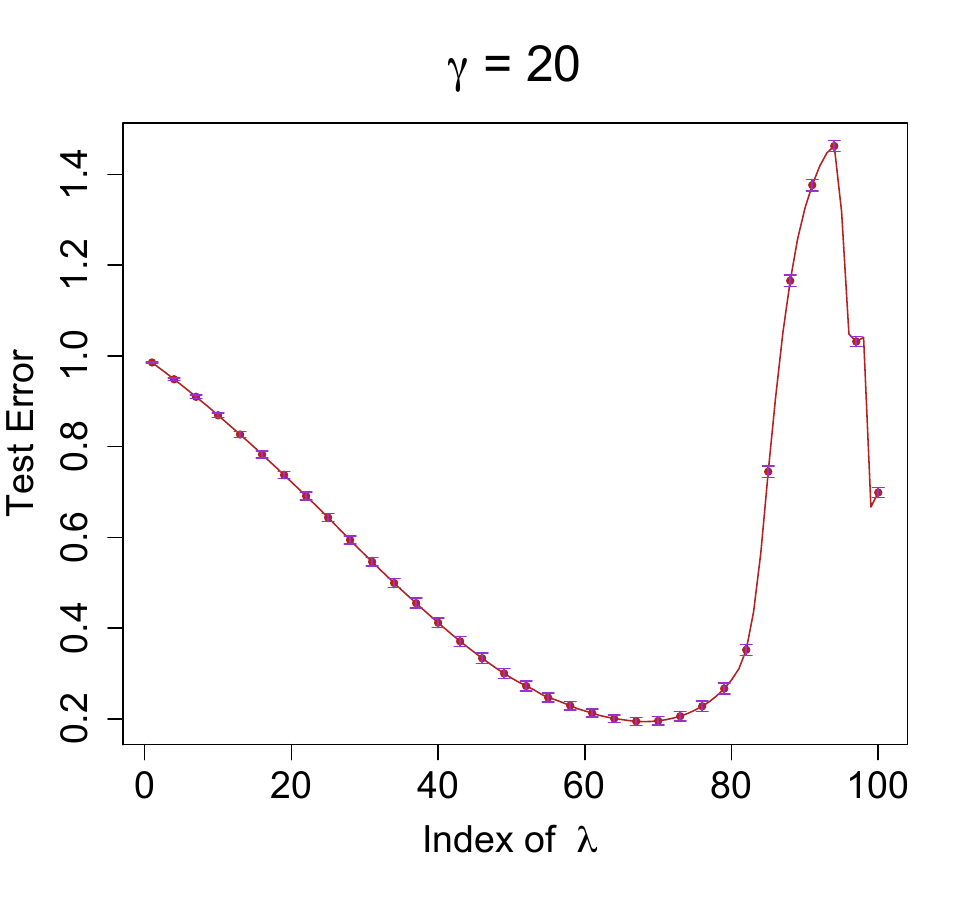}}}
%%----start of third subfigure----
\subfigure[ROM, $90$\% missing, $\text{SNR}=1$, $\text{true rank}=10$]{
%\label{fig:distplots:b} %% label for third subfigure
\scalebox{.9}{\includegraphics[width=2.8in, height=2.3in]{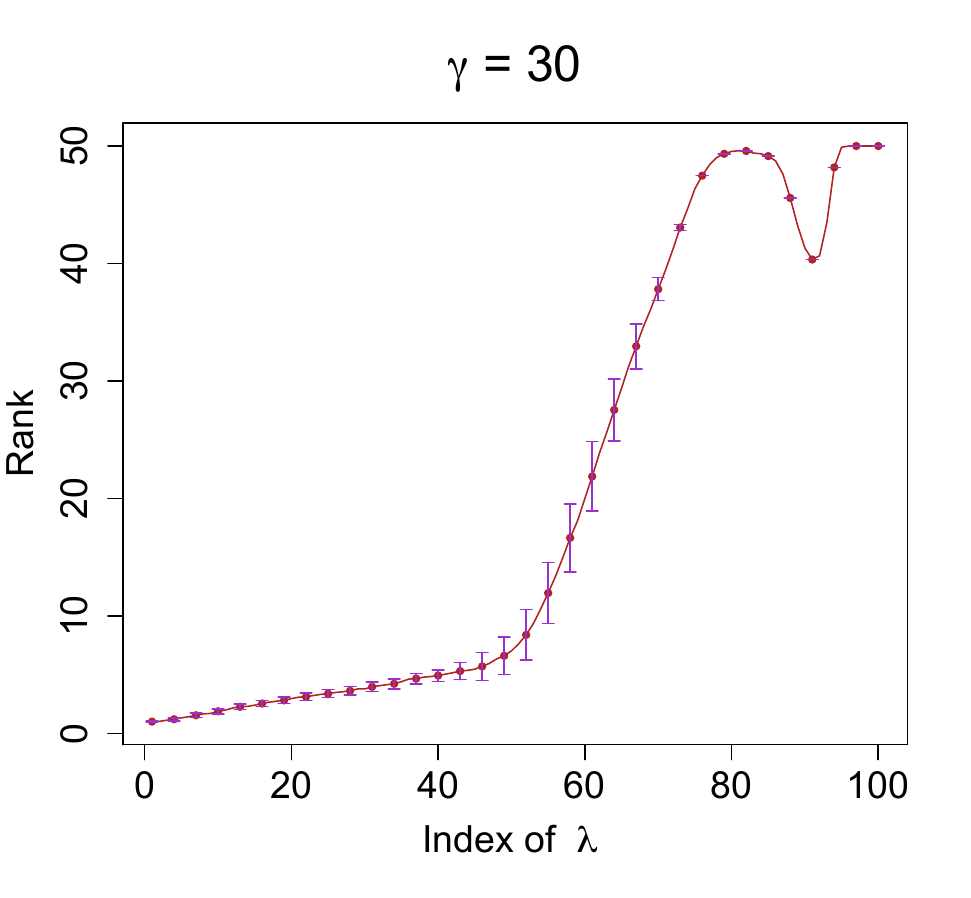}}}
%%----start of fourth subfigure----
\subfigure[ROM, $90$\% missing, $\text{SNR}=1$, $\text{true rank}=5$]{
%\label{fig:distplots:a} %% label for fourth subfigure
\scalebox{.9}{\includegraphics[width=2.8in, height=2.3in]{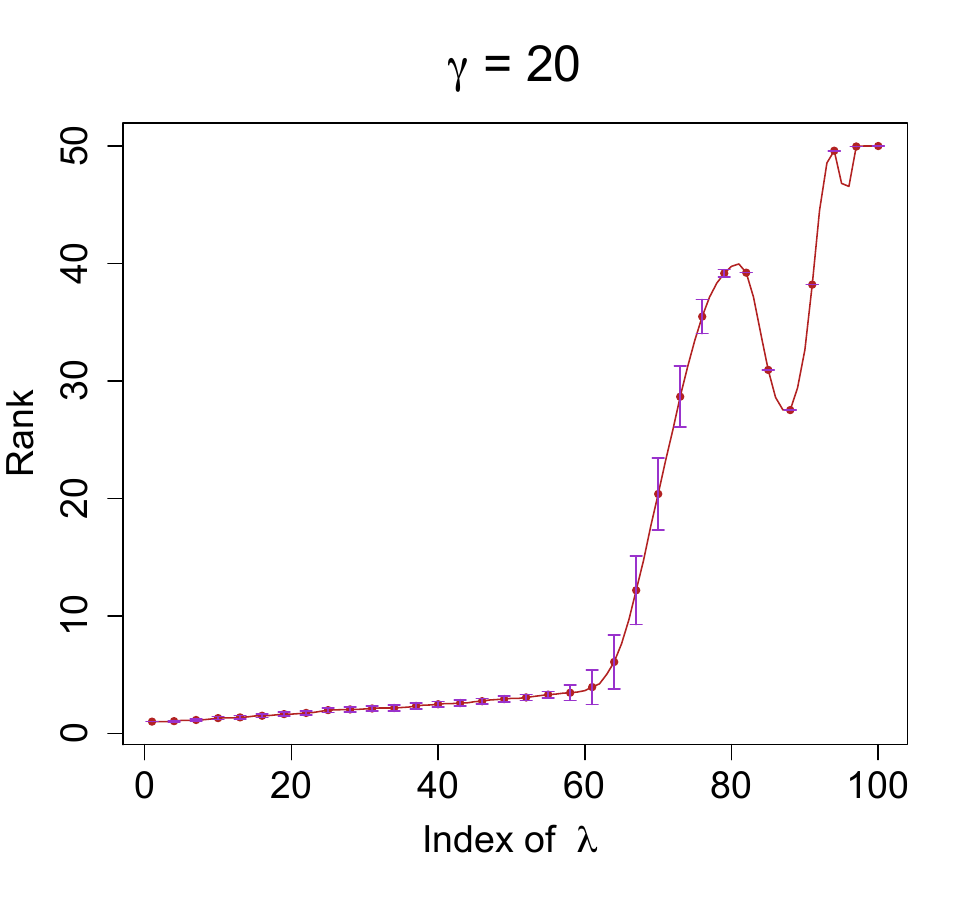}}}
\caption{\small Random Orthogonal Model (ROM) simulations with $\text{SNR}=1$. The optimal nonconvex penalties are obtained at $\gamma=30$ and $\gamma=20$ under the two scenarios respectively. The integers from 1 to 100 on the x-axis index the grid of 100 values of $\lambda$ (from largest to smallest) as described in Section \ref{simulation:study}. }\label{figadd1}
%\label{fig:PlotCritFunc} %% label for entire figure
\end{center}
\end{figure*}

\begin{figure*}[tb!]
\begin{center}
{\bf {Example-A}} (High SNR, more missing entries)\\
%%----start of first subfigure----
\subfigure[ROM, $95$\% missing, $\text{SNR}=5$, $\text{true rank}=10$]{
%\label{fig:distplots:a} %% label for first subfigure
\scalebox{.9}{\includegraphics[width=2.8in, height=2.3in]{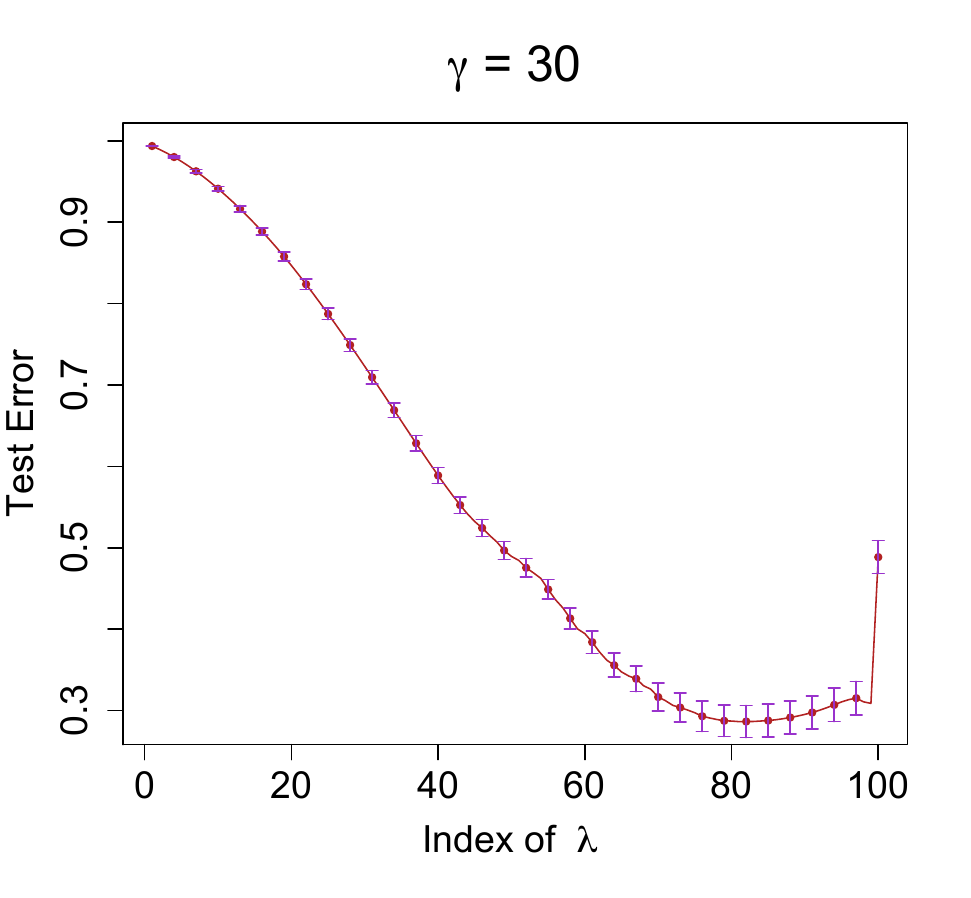}}}
%%----start of second subfigure----
\subfigure[ROM, $95$\% missing, $\text{SNR}=5$, $\text{true rank}=5$]{
%\label{fig:distplots:b} %% label for second subfigure
\scalebox{.9}{\includegraphics[width=2.8in, height=2.3in]{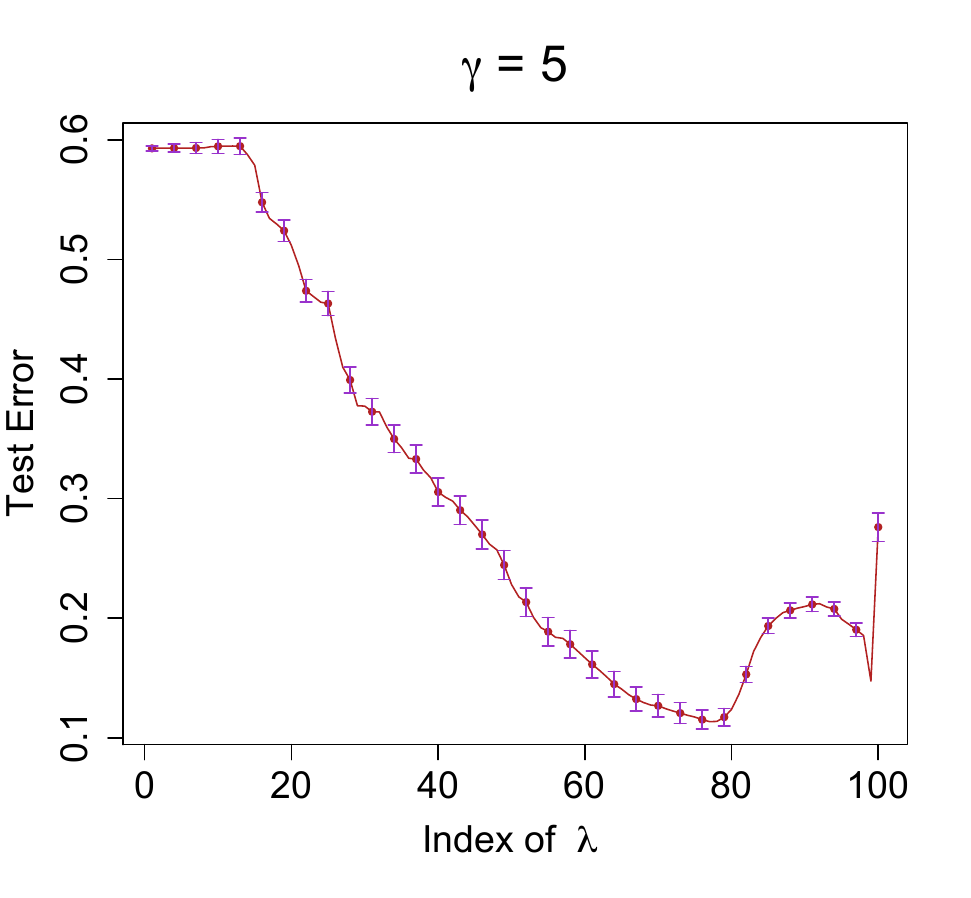}}}
%%----start of third subfigure----
\subfigure[ROM, $95$\% missing, $\text{SNR}=5$, $\text{true rank}=10$]{
%\label{fig:distplots:b} %% label for third subfigure
\scalebox{.9}{\includegraphics[width=2.8in, height=2.3in]{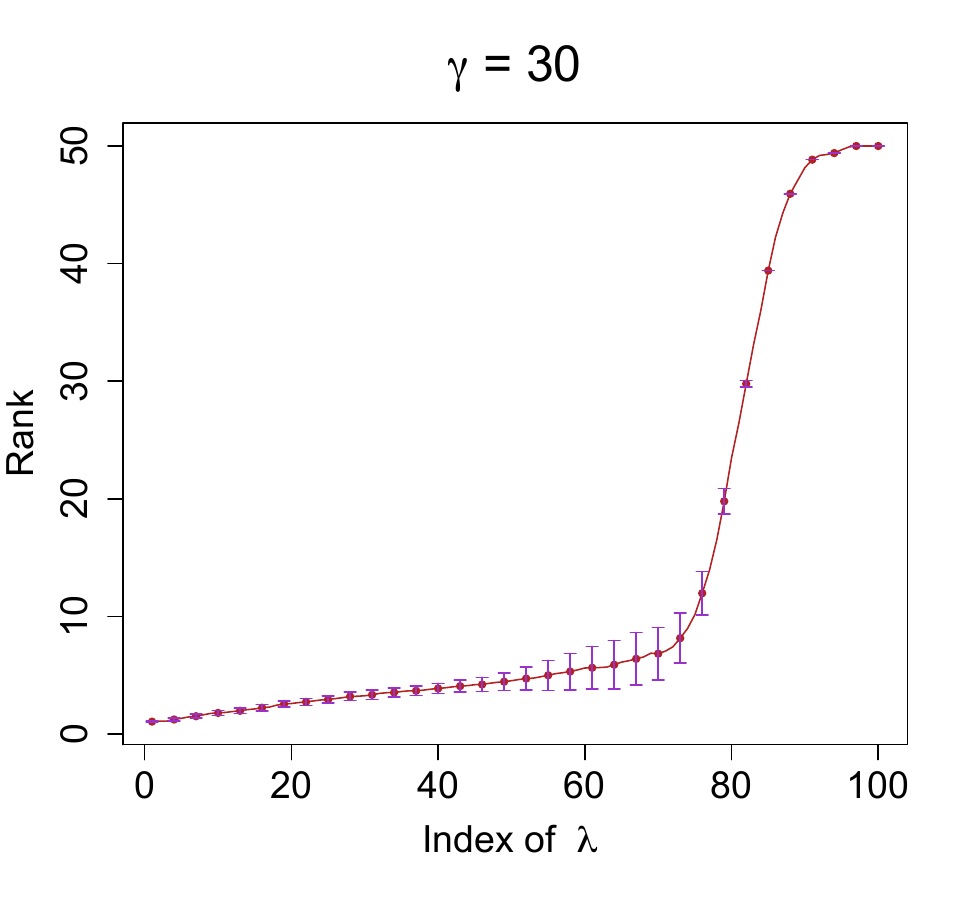}}}
%%----start of fourth subfigure----
\subfigure[ROM, $95$\% missing, $\text{SNR}=5$, $\text{true rank}=5$]{
%\label{fig:distplots:a} %% label for fourth subfigure
\scalebox{.9}{\includegraphics[width=2.8in, height=2.3in]{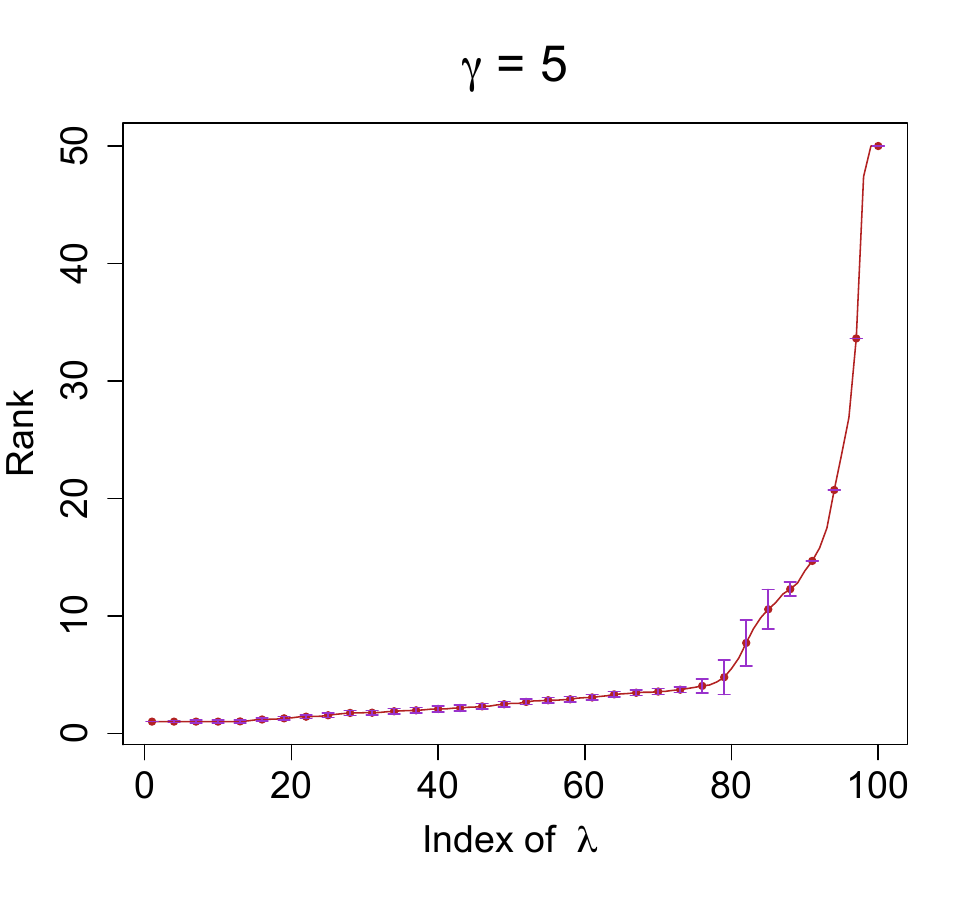}}}
\caption{\small Random Orthogonal Model (ROM) simulations with $\text{SNR}=5$. The optimal nonconvex penalties are obtained at $\gamma=30$ and $\gamma=5$ under the two scenarios respectively. The integers from 1 to 100 on the x-axis index the grid of 100 values of $\lambda$ (from largest to smallest) as described in Section \ref{simulation:study}.}\label{figadd2}
%\label{fig:PlotCritFunc} %% label for entire figure
\end{center}
\end{figure*}

\begin{figure*}[htb!]
\begin{center}
{\bf {Example-B}} \hspace{5cm} {\bf {Example-C}} \\
%%----start of first subfigure----
\subfigure[\scriptsize Coherent, $90$\% missing, $\text{SNR}=10$, $\text{true rank}=10$]{
%\label{fig:distplots:a} %% label for first subfigure
\scalebox{.9}{\includegraphics[width=2.8in, height=2.3in]{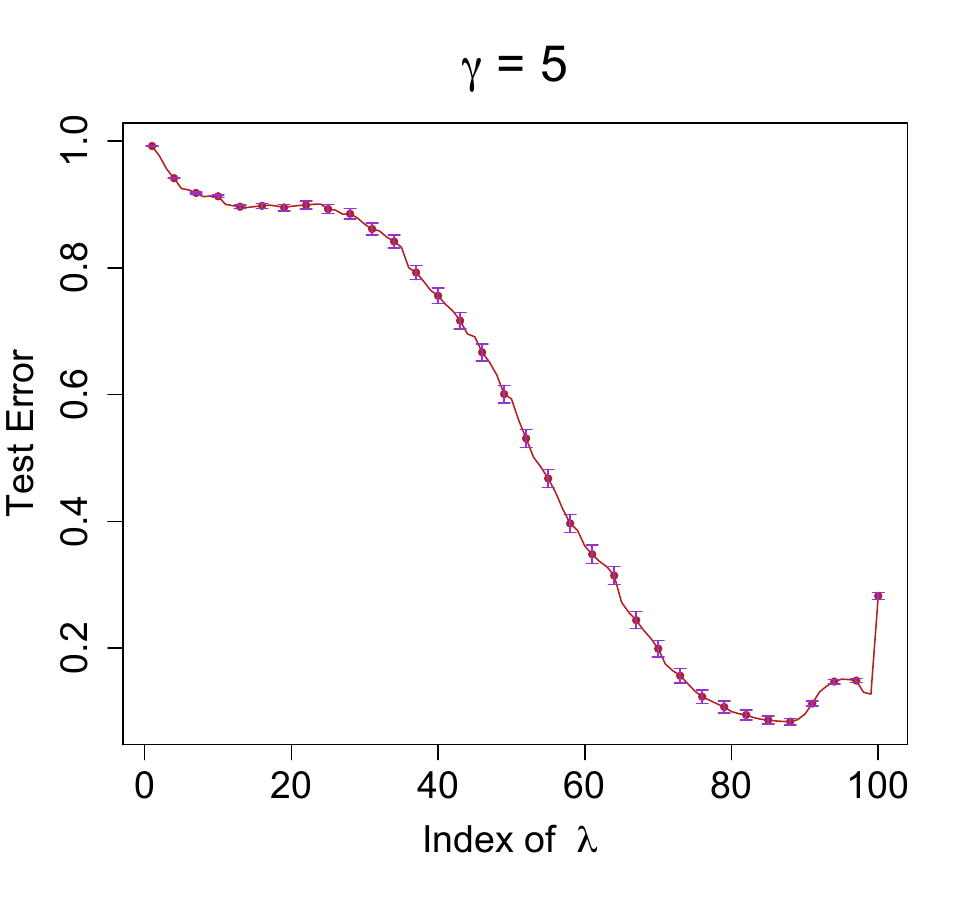}}}
%%----start of second subfigure----
%\hspace{0.cm}
\subfigure[\scriptsize NUS, $25$\% missing, $\text{SNR}=10$, $\text{true rank}=10$]{
%\label{fig:distplots:b} %% label for second subfigure
\scalebox{.9}{\includegraphics[width=2.8in, height=2.3in]{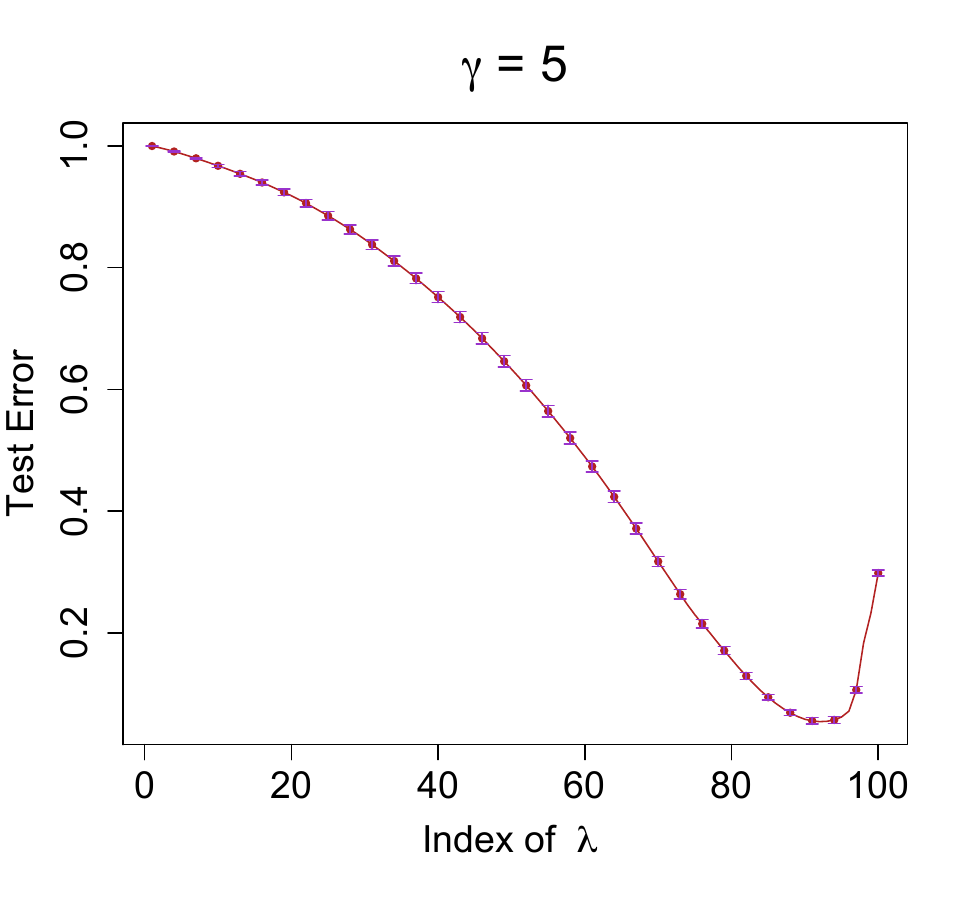}}}
%%----start of third subfigure----
\subfigure[\scriptsize Coherent, $90$\% missing, $\text{SNR}=10$, $\text{true rank}=10$]{
%\label{fig:distplots:b} %% label for third subfigure
\scalebox{.9}{\includegraphics[width=2.8in, height=2.3in]{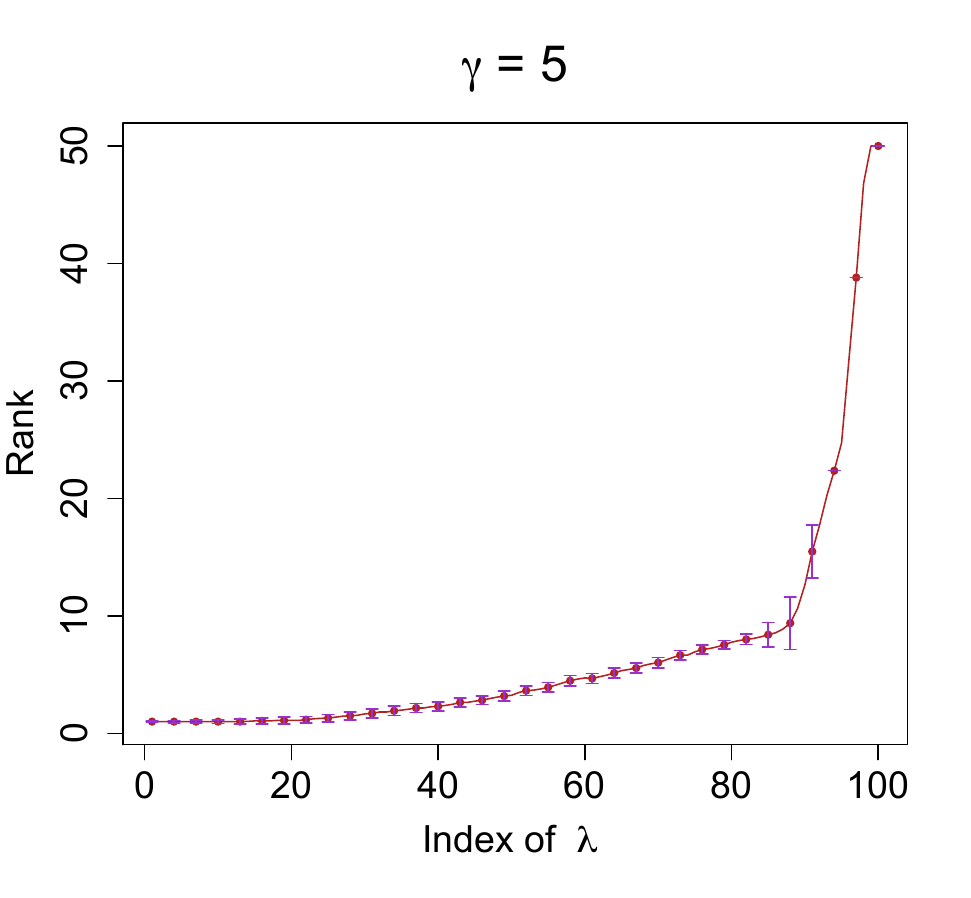}}}
%%----start of fourth subfigure----
\subfigure[\scriptsize NUS, $25$\% missing, $\text{SNR}=10$, $\text{true rank}=10$]{
%\label{fig:distplots:a} %% label for fourth subfigure
\scalebox{.9}{\includegraphics[width=2.8in, height=2.3in]{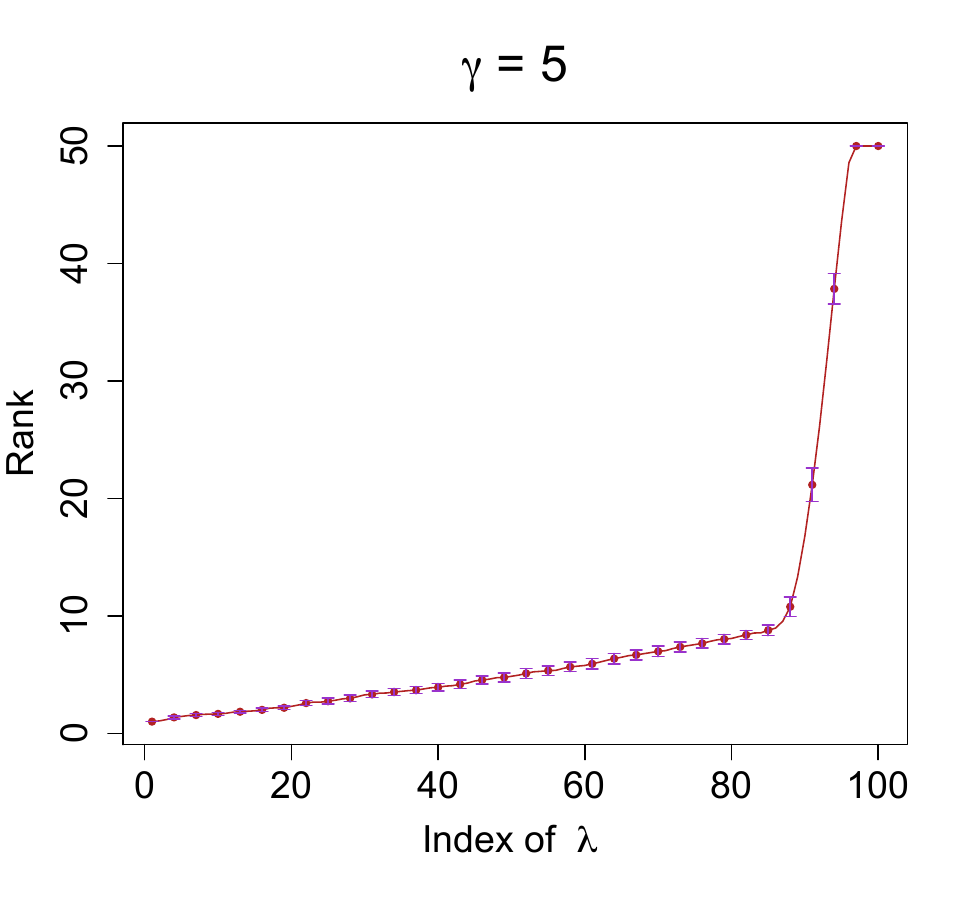}}}
\caption{\small Coherent and Nonuniform Sampling (NUS) simulations with $\text{SNR}=10$. The optimal nonconvex penalties are both obtained at $\gamma=5$ under the two scenarios respectively. The integers from 1 to 100 on the x-axis index the grid of 100 values of $\lambda$ (from largest to smallest) as described in Section \ref{simulation:study}.}\label{figadd3}
%\label{fig:PlotCritFunc} %% label for entire figure
\end{center}
\end{figure*}

\begin{figure*}[htb!]
\begin{center}
%{\bf {Example-B}} \hspace{5cm} {\bf {Example-C}} \\

\subfigure[ROM, $90$\% missing, $\text{SNR}=1$, $\text{true rank}=10$]{
\scalebox{.9}{\includegraphics[width=2.3in, height=2.2in]{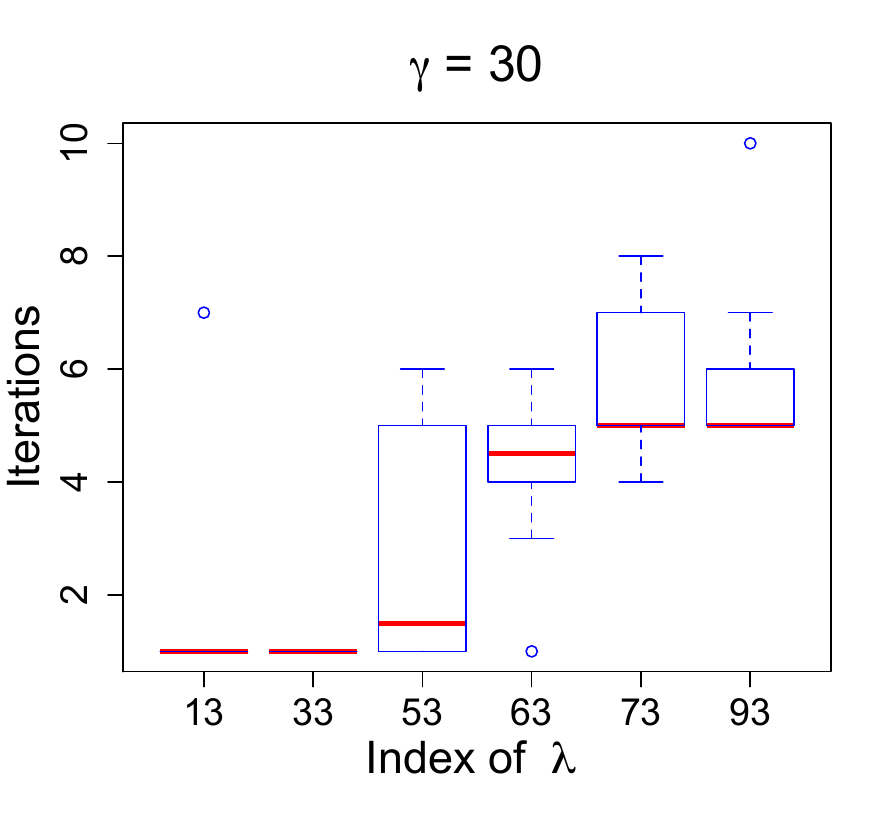}}}
\subfigure[ROM, $90$\% missing, $\text{SNR}=1$, $\text{true rank}=5$]{
\scalebox{.9}{\includegraphics[width=2.3in, height=2.2in]{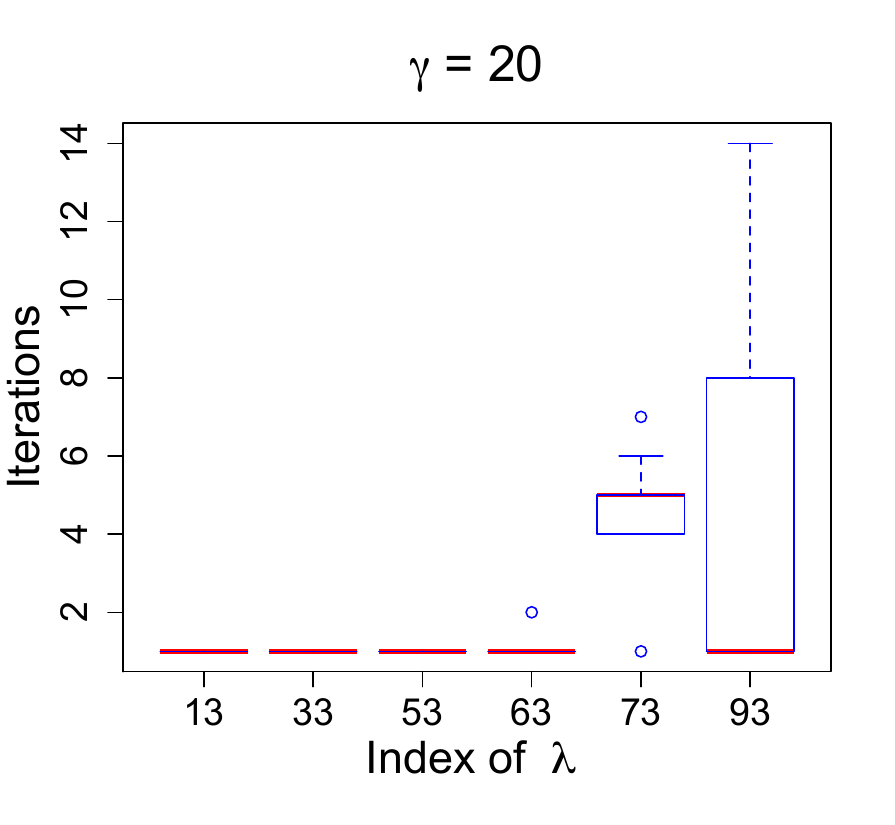}}}
\subfigure[ROM, $95$\% missing, $\text{SNR}=5$, $\text{true rank}=10$]{
\scalebox{.9}{\includegraphics[width=2.3in, height=2.2in]{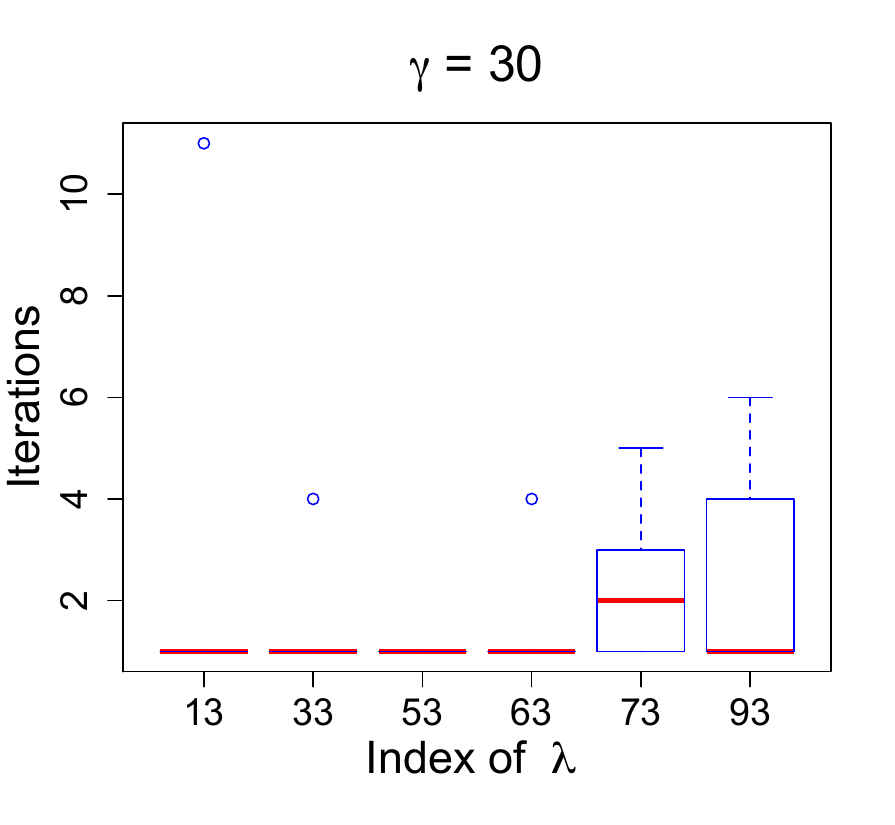}}}

\subfigure[ROM, $95$\% missing, $\text{SNR}=5$, $\text{true rank}=5$]{
\scalebox{.9}{\includegraphics[width=2.3in, height=2.2in]{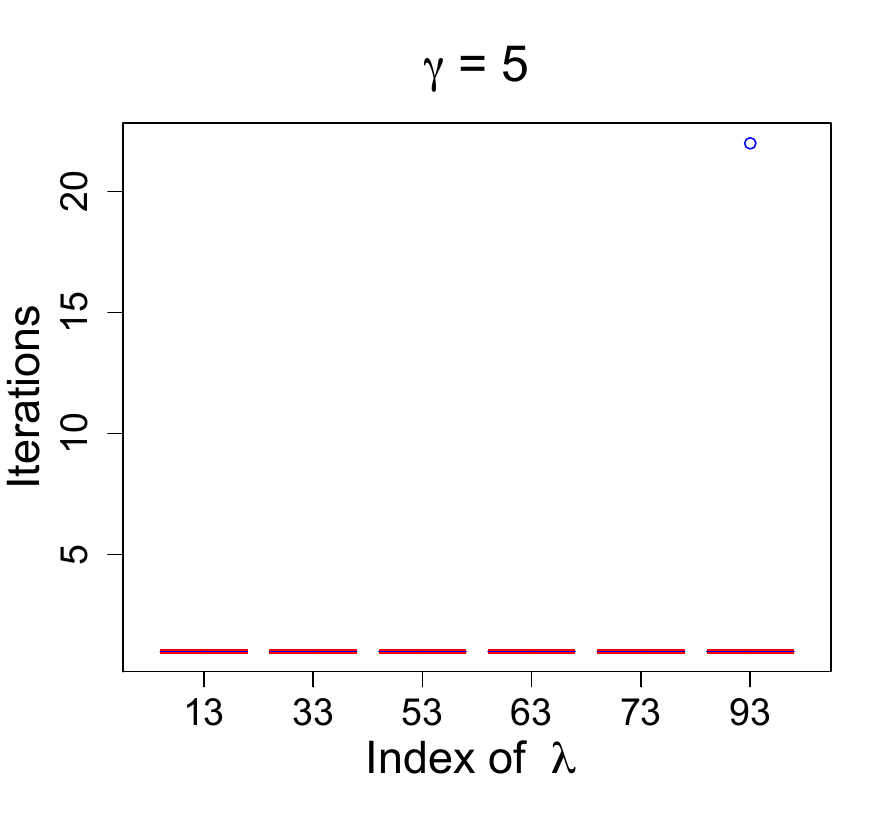}}}
\subfigure[Coherent, $90$\% missing, $\text{SNR}=10$, $\text{true rank}=10$]{
\scalebox{.9}{\includegraphics[width=2.3in, height=2.2in]{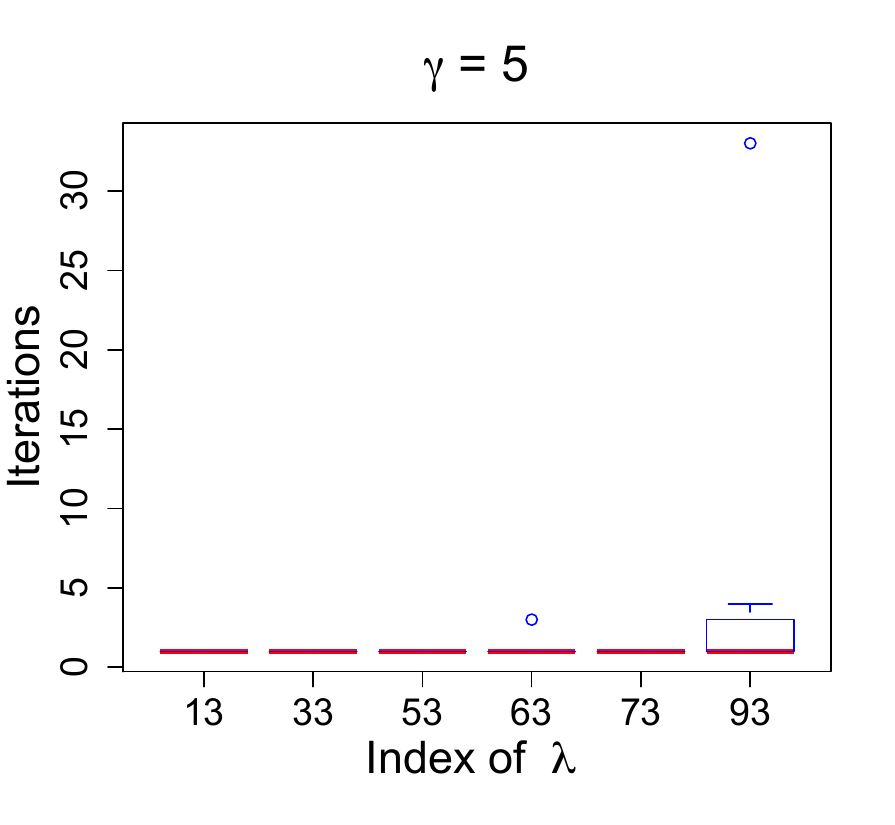}}}
\subfigure[NUS, $25$\% missing, $\text{SNR}=10$, $\text{true rank}=10$]{
\scalebox{.9}{\includegraphics[width=2.3in, height=2.2in]{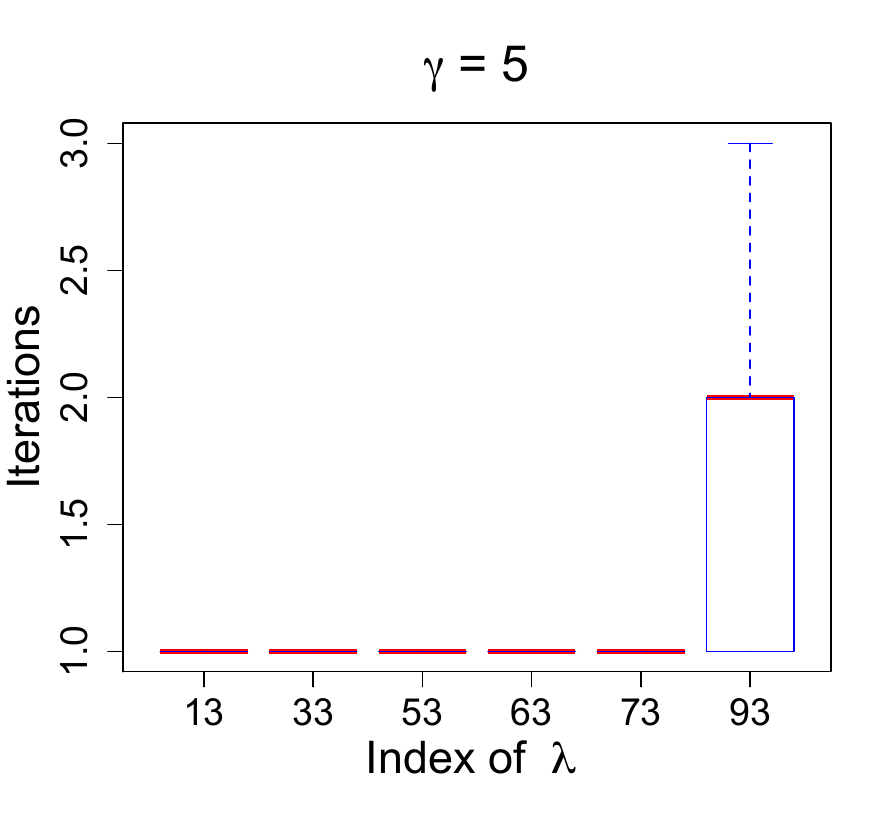}}}
\caption{\small The y-axis denotes the number of iterations \textsc{NC-Impute} takes to stabilize the rank. The integers on the x-axis index some values on a grid of $\lambda$ (from largest to smallest) as described in Section \ref{simulation:study}. The six plots represent the six scenarios considered in Section \ref{simulation:study}: (a)-(d) correspond to the four scenarios of Example-A; (e) covers Example-B; (f) is for Example-C. Each procedure is repeated 10 times.}\label{figadd4}
%\label{fig:PlotCritFunc} %% label for entire figure
\end{center}
\end{figure*}

%\begin{acknowledgements}
%If you'd like to thank anyone, place your comments here
%and remove the percent signs.
%\end{acknowledgements}

% Authors must disclose all relationships or interests that 
% could have direct or potential influence or impart bias on 
% the work: 
%
% \section*{Conflict of interest}
%
% The authors declare that they have no conflict of interest.

% BibTeX users please use one of
%\bibliographystyle{spbasic}      % basic style, author-year citations
%\bibliographystyle{spmpsci}      % mathematics and physical sciences
%\bibliographystyle{spphys}       % APS-like style for physics
\bibliographystyle{plainnat_my}
\bibliography{mc}

\end{document}